\documentclass[11pt]{article}
\usepackage{amsmath,amssymb,amsthm}
\usepackage{fullpage}
\usepackage[colorlinks]{hyperref}
\hypersetup{linkcolor=cyan,filecolor=cyan,citecolor=cyan,urlcolor=cyan}
\usepackage{xspace}
\usepackage{thm-restate,color,xcolor}
\usepackage{boxedminipage}
\usepackage[boxed]{algorithm}
\usepackage{epigraph}
\usepackage{amsmath}
\usepackage{mathtools}
\usepackage{framed}
\usepackage[framemethod=tikz]{mdframed}
\usepackage{titlesec}
\usepackage{lipsum}%
\usepackage{cleveref,aliascnt}
\usepackage{tikz}
\usepackage{wrapfig}
\usepackage{framed}
\usepackage[framemethod=tikz]{mdframed} 

\usetikzlibrary{shapes.geometric}
\usetikzlibrary{arrows}
\usetikzlibrary{arrows.meta}
\usetikzlibrary{patterns}
\usetikzlibrary{shapes.misc}

\newcommand{\dist}{\mbox{\rm dist}}

\newtheorem{theorem}{Theorem}[section]
\newtheorem{lemma}[theorem]{Lemma}

\newtheorem{corollary}[theorem]{Corollary}
\newtheorem{claim}[theorem]{Claim}
\newtheorem{observation}[theorem]{Observation}
\newtheorem{question}{Question}

\theoremstyle{definition}
\newtheorem{definition}[theorem]{Definition}

\usepackage{wrapfig}

\usepackage{changepage}

\def\blackslug
     {\hbox{\hskip 1pt\vrule width 8pt height 8pt depth 1.5pt\hskip 1pt}}
\def\qed{\quad\blackslug\lower 8.5pt\null\par}

\newcommand{\poly}{{\rm poly}}

\newcommand{\TCLargeDeg}{\mathsf{TCLargeDeg}}
\newcommand{\ShortcutLargeDeg}{\mathsf{ShortcutLargeDeg}}

\newcommand{\DeltaIn}{\mathsf{\Delta}_{in}}
\newcommand{\DeltaOut}{\mathsf{\Delta}_{out}}
\newcommand{\DeltaInInd}{\mathsf{\Delta}^{in}}
\newcommand{\DeltaOutInd}{\mathsf{\Delta}^{out}}

\newcommand{\NaiveShortcut}{\mathsf{NaiveShortcut}}

\newcommand{\ShortcutTwoPaths}{\mathsf{Shortcut2LenPaths}}

\newcommand{\dShortcut}{\mathsf{dShortcut}}
\newcommand{\ThreeShortcut}{\mathsf{3Shortcut}}

\newcommand{\LargedShortcut}{\mathsf{LargedShortcut}}

\newcommand{\TCNaiveComb}{\mathsf{BoundedDegTC}}
\newcommand{\TCInDeg}{\mathsf{Tdeg}_{in}}
\newcommand{\TCOutDeg}{\mathsf{Tdeg}_{out}}
\newcommand{\TCDeg}{\mathsf{Tdeg}}
\newcommand{\ShortcutInBalls}{\mathsf{ShortcutInBalls}}

\newcommand{\TCHittingSet}{\mathsf{TCHittingSet}}
\newcommand{\TCHittingSetd}{\mathsf{TCHittingSetBalls}}

\newcommand{\InPred}[2]      {\mathsf{In}\ifblank{#1}{(#2)}{(#2,#1)}}
\newcommand{\OutSucc}[2]      {\mathsf{Out}\ifblank{#1}{(#2)}{(#2,#1)}}
\newcommand{\InPredBall}[3]      {\mathsf{In}\ifblank{#1}{(#2,#3)}{(#2,#1,#3)}}
\newcommand{\OutSuccBall}[3]      {\mathsf{Out}\ifblank{#1}{(#2,#3)}{(#2,#1,#3)}}

\usepackage{enumitem}

\usepackage{subcaption} 
\usepackage{booktabs}

\newcommand{\BDS}{\begin{description}}
\newcommand{\EDS}{\end{description}}
\newcommand{\BE}{\begin{enumerate}}
\newcommand{\EE}{\end{enumerate}}
\newcommand{\BI}{\begin{itemize}}
\newcommand{\EI}{\end{itemize}}
\newcommand{\BPF}{\begin{proof}}
\newcommand{\EPF}{\end{proof}}

\newcommand{\BB}{\begin{enumerate}}
\newcommand{\EB}{\end{enumerate}}

\usepackage{booktabs}
\usepackage{tabularx}

\newenvironment{wrapper}[1]
{
	\begin{center}
		\begin{minipage}{\linewidth}
			\begin{mdframed}[hidealllines=true, backgroundcolor=gray!20, leftmargin=0cm,innerleftmargin=0.5cm,innerrightmargin=0.5cm,innertopmargin=0.5cm,innerbottommargin=0.5cm,roundcorner=10pt]
				#1}
			{\end{mdframed}
		\end{minipage}
	\end{center}
} 

\usepackage{geometry}
\title{$\tilde{O}(1)$-Depth Parallel Reachability Faster than Transitive Closure}
\author{
Shimon Kogan\\
        \small Weizmann Institute\\
        \small shimon.kogan@weizmann.ac.il
\and
Merav Parter \thanks{This project is funded by the European Research Council (ERC) under the European Union Horizon 2020 research and innovation programme (grant agreement No. 949083).}\\
        \small Weizmann Institute\\
        \small merav.parter@weizmann.ac.il
}
\date{}

\begin{document}
\maketitle

\begin{abstract}
A $d$-shortcut of a directed graph $G=(V,E)$ is a subset of edges drawn from the transitive closure $TC(G)$ whose addition reduces the graph diameter to at most $d$. In the special case $d=1$, computing a $1$-shortcut is \emph{equivalent} to computing the transitive closure. For larger values of $d$, a lower bound of Hesse~\cite{Hesse03} shows that $n^{\delta}$-shortcuts, for small constants $\delta>0$, may still contain a large fraction of the edges of $TC(G)$, suggesting that shortcut construction may remain as hard as transitive closure even in this regime. Consequently, since $\widetilde{O}(d)$-depth parallel reachability algorithms rely on computing $d$-shortcuts, achieving $\widetilde{O}(1)$ depth by this approach has so far required computing the full transitive closure. Assuming $\omega=2$, the PS-AE-Triangle hypothesis of Abboud, Bringmann, Fischer, and K\"unnemann~\cite{AbboudBFK24}, together with the standard three-layer reduction from sparse Boolean matrix multiplication, yields a conditional $T^{4/3-o(1)}$ time barrier for computing transitive closure when $T\leq n^{3/2}$, where $T=|TC(G)|$.

In this work, we bypass the transitive-closure barrier for $\widetilde{O}(1)$-depth parallel reachability. We introduce randomized $d$-shortcut constructions that already circumvent this barrier for $d=3$ and, more generally, for every even $d\geq4$ up to $O(\log n)$. Our approach yields a randomized $\widetilde{O}(1)$-depth parallel reachability algorithm with total work $\widetilde{O}(T^{\omega/2})$, which becomes $\widetilde{O}(T)$ when $\omega=2$, falling below this conditional $T^{4/3-o(1)}$ barrier throughout that regime. Under the current bound $\omega<2.371339$~\cite{AlmanDVXXXZ25}, this corresponds to $\widetilde{O}(T^{1.186})$ work, improving on the current $T^{1.3459+o(1)}$ sequential-time bound for transitive closure due to Abboud et al.~\cite{AbboudBFK24}. Thus, although $\widetilde{O}(1)$-shortcuts might be almost as dense as the full transitive closure, they can nevertheless be computed substantially faster.

%
%
%
%
\end{abstract}

\newpage

\pagenumbering{gobble}

{\small\tableofcontents}
\newpage

\pagenumbering{arabic}

\section{Introduction}

The well-known \emph{transitive closure bottleneck}~\cite{KaoK90,BernReviving26} in parallel computing refers to the necessity of computing the full transitive closure, typically via iterated matrix squaring, to solve single-source reachability in polylogarithmic depth. In an effort to bypass this barrier, Thorup~\cite{Thorup92} introduced the graph-theoretic notion of \emph{$d$-shortcuts}: a subset of edges drawn from the transitive closure of a graph $G$ whose addition reduces its diameter to at most $d$. His well-known conjecture, stating that every graph admits a $\poly\log n$-shortcut of near-linear size, was later refuted by Hesse~\cite{Hesse03}; nevertheless, it has driven much of the subsequent work in the area and remains a central focus of recent research~\cite{BernReviving26}.

Since their introduction, shortcuts have played a central role in parallel reachability algorithms. Given a $d$-shortcut $H$ of $G$, single-source reachability (SSR) can be solved in parallel with $\widetilde O(d)$ depth and $\widetilde O(m+|H|)$ work. Prior work has primarily focused on minimizing the depth achievable with near-linear work. A breakthrough result of Fineman~\cite{Fineman18} achieved near-linear work with sublinear depth $\widetilde O(n^{2/3})$, later improved to $n^{1/2+o(1)}$ by Liu, Jambulapati, and Sidford~\cite{LiuJS19}. Very recently, Ashvinkumar, Bernstein, Probst Gutenberg, and Saranurak~\cite{AshvinkumarBGS26} gave an SSR algorithm for dense digraphs with $\widetilde O(n^2)$ work and $n^{0.136}$ depth.

A complementary line of research studies the combinatorial properties of shortcuts, focusing mainly on pinpointing the smallest $d$ for which linear-size shortcuts exist. In a sense, this parameter sets a limit on the best possible depth one can hope for in the parallel setting. The current upper and lower bounds are $d=\widetilde O(n^{1/3})$ by Kogan and Parter~\cite{KoganPSODA22} and $d=\widetilde\Omega(n^{1/4})$ by Bodwin and Hoppenworth~\cite{bodwin2023folklore}. A number of other recent works study shortcuts from algorithmic and constructive~\cite{HaeuplerRZ26}, approximation~\cite{ChalermsookJMN26}, and lower-bound~\cite{VassilevskaWXX24} perspectives. Very recently, Bernstein et al.~\cite{BernReviving26} revisited Thorup's conjecture under a variant allowing shortcuts to include Steiner vertices. Despite all efforts, to date it remains unclear how to solve single-source reachability in polylogarithmic depth without computing the entire transitive closure.

In this work, we take a further step toward bypassing this long-standing transitive closure bottleneck for polylogarithmic-depth parallel reachability. In contrast to prior work that insists on near-linear work, we focus on the following fundamental and largely unexplored question:

\vspace{-2pt}
\begin{question}\label{q:low-depth-faster-TC}
Can single-source reachability be solved in polylogarithmic depth with \textbf{asymptotically less work} than is required to compute the entire transitive closure?
\end{question}
\vspace{-2pt}

It is straightforward to see that $1$-shortcuts are computationally equivalent to the transitive closure. However, this equivalence becomes unclear already for $d\geq2$. The seminal \emph{size} lower bound of Hesse~\cite{Hesse03} showed that there are sparse $n$-vertex digraphs for which every $n^\delta$-shortcut must be almost quadratic in size, for sufficiently small constants $\delta>0$. As Hesse remarks:

\begin{quote}
\emph{These lower bounds do imply that for an algorithm in this model to achieve polylogarithmic running time, it must precompute essentially the entire transitive closure.} \hfill---Hesse~\cite{Hesse03}
\end{quote}

This seems to provide evidence against \Cref{q:low-depth-faster-TC}, suggesting that $d$-shortcuts may be computationally as hard as the transitive closure even for $d=n^\delta$, where $\delta>0$ is a sufficiently small constant. Indeed, since such shortcuts offer no asymptotic size advantage over the trivial $1$-shortcut $TC(G)$, it is natural to suspect that they cannot be constructed faster either. Our results show, however, that this size-based intuition is incomplete: an $O(\log n)$-shortcut can be constructed directly, without first computing the entire transitive closure, even on graphs where every such shortcut is nearly as large as the full transitive closure.

Our main algorithmic result is the following. Here $T=|TC(G)|$, and $\omega$ denotes the matrix-multiplication exponent, currently known to satisfy $\omega<2.371339$~\cite{AlmanDVXXXZ25}. All our algorithms are randomized, and their correctness guarantees hold with high probability.

\begin{theorem}[Parallel reachability faster than TC]\label{thm:parallel-ssr}
There is a randomized parallel algorithm that solves SSR using $\widetilde O(T^{\omega/2})$ work and polylogarithmic depth.
\end{theorem}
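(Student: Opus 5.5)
We will construct an $O(\log n)$-shortcut (indeed a $d$-shortcut for a suitable even $d$, or $d=3$) $H$ with $|H|\le \widetilde O(T)$, using $\widetilde O(T^{\omega/2})$ work and polylogarithmic depth. We then run the standard parallel BFS on $G\cup H$ from the source. This takes $\widetilde O(d)=\widetilde O(1)$ depth and $\widetilde O(m+|H|)$ work, so the cost is dominated by building $H$. We may also assume $T$ is known up to a factor $2$: we guess $T$ by doubling, and detect when a guess is too small by aborting once some count (e.g.\ the size of an out-set we are computing) exceeds it.

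\textbf{Construction via a degree threshold.} Fix $\Delta=\sqrt T$. A vertex is \emph{heavy} if $|\InPred{}{v}|\ge\Delta$ or $|\OutSucc{}{v}|\ge\Delta$, and \emph{light} otherwise. We treat the two kinds differently.

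\emph{Heavy part.} Sample a set $S$ of $\widetilde O(n/\Delta)$ vertices. With high probability, every vertex with a large in-set (resp.\ out-set) has some sampled vertex among its ancestors (resp.\ descendants). For each $s\in S$, compute $\InPred{}{s}$ and $\OutSucc{}{s}$ and add the edges $u\to s$ and $s\to w$; this is $\TCHittingSet$. The total size is $\sum_s(|\InPred{}{s}|+|\OutSucc{}{s}|)$. We need to bound this by $\widetilde O(T)$ in expectation. One way is to sample each vertex with probability proportional to $1/\Delta$, so the expected size is $T/\Delta\cdot$(something). Alternatively, sample only hubs, restricted to depth-limited balls ($\ShortcutInBalls$). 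These sets must themselves be computed in low depth, and we do this by repeated Boolean matrix products restricted to $S$: the rectangular product of an $|S|\times n$ matrix with $n\times n$ matrices. The product is sparse, with output $\le T$. Using sparse fast matrix multiplication with output size $T$ and dimension parameter $\Delta$ gives cost $\widetilde O(T^{\omega/2})$ when $\Delta=\sqrt T$: the product splits into blocks of dimension $\sqrt T$, roughly $T/\Delta^2\cdot\Delta^\omega$ per layer.

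\emph{Light part.} Restricted to light vertices, every in- and out-set has size $<\Delta$. So reachable pairs through light vertices can be found by bounded-degree squaring ($\TCNaiveComb$). Each vertex's ball is at most $\Delta$, so composing balls costs about $\sum_v\Delta\cdot\Delta$. That bound is too large, so instead we organize these compositions as dense products over $\Delta\times\Delta$ blocks. This gives $(T/\Delta^2)\cdot\Delta^\omega=T^{\omega/2}$. Repeated squaring over $O(\log n)$ rounds keeps the depth polylogarithmic.

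\textbf{Correctness.} Take any reachable pair $(u,w)$ and a path between them. Either the path passes through a heavy vertex, or it stays among light vertices. If it hits a heavy vertex, hitting-set sampling gives a hub $s$, and by a two- or three-hop argument $u\to s\to w$ lies within $O(1)$ hops. If the path stays light, the light-part closure covers it. Combining the two parts across alternating segments, via recursion on path length with doubling, yields the $O(\log n)$ hop bound. This is where the even-$d$ constructions come from.

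\textbf{Main obstacle.} The hardest step is making the work bound honest. Computing in- and out-sets of hubs exactly amounts to computing the columns of the transitive closure at those hubs, and sampled hubs can have huge sets. The plan is to bound the expected total $\sum_{s\in S}|\OutSucc{}{s}|$ by $\widetilde O(T\cdot|S|/n)$. We then arrange the computation as a sparse-output rectangular Boolean product, where known output-sensitive bounds give $\widetilde O(T^{\omega/2})$ when the balancing parameter is $\sqrt T$. I expect the delicate part to be ensuring that intermediate products never exceed output size $\widetilde O(T)$. I would handle this by pruning with the threshold $\Delta$ at each squaring round.
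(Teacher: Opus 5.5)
Your top-level reduction is the same as the paper's: build a polylogarithmic-diameter shortcut $H$, then run level-synchronous BFS in $G\cup H$. The shortcut construction itself, however, has two genuine gaps.

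\textbf{Correctness and work of the construction.} A hitting set for large TC-degrees only guarantees that an out-heavy vertex $x$ has \emph{some} hub among its descendants. That hub need not lie on a path through $x$, or lead back to it. Take a chain $x_1\to\cdots\to x_k$ where $x_k$ also points to $\Delta$ sinks. Every $x_i$ is out-heavy. If $\mathrm{polylog}(n)\ll k\ll\Delta/\log n$, then with good probability all hubs are sinks. Your star edges $x_i\to s$, $s\to\cdot$ then leave $x_1$ and $x_k$ at distance $k-1$, and no light vertex lies on this path. So the step ``$u\to s\to w$ within $O(1)$ hops'' is false.

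The paper instead adds the whole induced closure $TC(G[\mathsf{In}(s)])$ for each hub $s$. Out-heaviness is predecessor-closed, so the covered vertices of any path form a prefix inside a single $\mathsf{In}(s)$ and collapse to one edge (\Cref{lem:close-P}). This costs $\sum_s|\mathsf{In}(s)|^{\omega}$, which is affordable only if all TC-indegrees are already bounded. That is why $\dShortcut$ peels $d/2$ times, with alternating reversals and thresholds $T^{i/(d+1)}$. Each step's inherited indegree bound then exceeds its outdegree threshold by only $T^{1/(d+1)}=O(1)$ when $d=\Theta(\log n)$ (\Cref{thm:TC-large-deg}).

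Your light part has the matching problem. With only the bound $\Delta=\sqrt T$, the honest cost is $\sum_v|\mathsf{In}(v)|^2\le\Delta T=T^{3/2}$, and nothing forces the light closure into $T/\Delta^2$ dense $\Delta\times\Delta$ blocks. The paper runs the naive closure $\TCNaiveComb$ only after peeling down to degree $T^{1/(d+1)}$, where it costs $\widetilde O(T^{1+1/(d+1)})$. Even then it needs the stitching edges $E_{1\to0}$ at the last block boundary. ``Recursion on path length with doubling'' does not replace this diameter argument.

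\textbf{Depth.} Computing $\mathsf{In}(s)$ and $\mathsf{Out}(s)$ for the hubs, validating a guessed $T$ by counting, and choosing a hitting set of small total TC-degree are all reachability computations. Their depth is proportional to the diameter, which can be $\Theta(n)$. Repeated squaring does not avoid this: it needs powers of the full adjacency matrix, i.e., the closure you are trying to bypass. Repeatedly multiplying an $|S|\times n$ matrix by the adjacency matrix is just BFS. The appeal to an output-sensitive sparse product running in $\widetilde O(T^{\omega/2})$ time is also unsupported. For $\omega=2$ such a general bound would be near-linear, contradicting the conditional $T^{4/3-o(1)}$ barrier for sparse Boolean products cited in the introduction.

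The missing idea is the paper's layering, which works as follows.
\begin{enumerate}
\item Run the construction on the $(d+2)$-layered copy $G^*$ of the current graph, which has diameter $d+1$. All BFS and Cohen-estimation steps then take $\widetilde O(d)$ depth. This persists through the peeling, because deleting a predecessor-closed set preserves surviving distances (\Cref{lem:diameter-monotonicity}).
\item Project the resulting $d$-shortcut back; this shortens every length-$(d+1)$ path (\Cref{lem:layered-shortcut-map}).
\item Repeat $O(d\log n)$ times. Each round shrinks every distance above $d$ by a factor of about $1-1/(d+1)$.
\end{enumerate}
Since $|TC(G^*)|=O(d^2T)$, the total work stays $\widetilde O(T^{\omega/2})$.
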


To place this result in context, consider the sequential complexity of computing the transitive closure. Abboud, Bringmann, Fischer, and K\"unnemann formulated the PS-AE-Triangle hypothesis and used it to obtain conditional lower bounds for sparse Boolean matrix multiplication~\cite[Theorem~1.10]{AbboudBFK24}. Through the standard three-layer reduction from Boolean matrix multiplication to transitive closure, their hypothesis yields the following output-sensitive conditional time barrier when $\omega=2$:
\[
 \mathsf{B}(n,T):=\min\{T^{4/3},n^2\}.
\]
For every constant $\varepsilon>0$ and every $n\leq T\leq n^2$, there is a family of $n$-vertex digraphs with $\Theta(T)$ closure edges on which an $O(\mathsf{B}(n,T)^{1-\varepsilon})$-time transitive-closure algorithm would refute the hypothesis. For $T\leq n^{3/2}$, this gives a $T^{4/3}$ time barrier; for $T\geq n^{3/2}$, a parameterized version of the same three-layer construction gives the $n^2$ barrier. Abboud et al.~\cite{AbboudBFK24} also give the current $T^{1.3459+o(1)}$-time transitive-closure algorithm.

Under the current bound on $\omega$, the work in \Cref{thm:parallel-ssr} is $\widetilde O(T^{1.186})$, below the current $T^{1.3459+o(1)}$ transitive-closure time. More conceptually, when $\omega=2$, \Cref{thm:parallel-ssr} uses $\widetilde O(T)$ work, whereas the ABFK conditional time barrier is superlinear in $T$ whenever $T=o(n^2)$. Thus, relative to this barrier, \Cref{thm:parallel-ssr} answers \Cref{q:low-depth-faster-TC} in the affirmative throughout the subquadratic-output regime. More precisely, for every fixed $\delta\in(0,1)$ and $T\leq n^{2-\delta}$, the conditional lower bound is at least $T^{1+\Omega_\delta(1)}$, giving a polynomial separation.

We also exhibit a combined hard family on which shortcut construction and transitive-closure computation separate conditionally. The family combines a Hesse shortcut-hard component with an ABFK conditional TC-hard component; its role is to show that near-maximal shortcut size does not preclude faster shortcut construction. Assuming $\omega=2$ and the PS-AE-Triangle hypothesis of~\cite{AbboudBFK24}, computing the entire transitive closure of a graph in this family requires $\mathsf{B}(n,T)^{1-o(1)}$ time. In contrast, by \Cref{thm:dshortcut}, an $O(\log n)$-shortcut can be computed in randomized $\widetilde O(T)$ time when $\omega=2$, even though every such shortcut contains $T^{1-o(1)}$ edges.

\begin{theorem}\label{thm:joint-hardness}
For every asymptotic choice $T=T(n)$ with $n\leq T\leq n^2$, there is a family $\mathcal{G}_{n,T}$ of $n$-vertex digraphs such that $|TC(G^*)|=\Theta(T)$ for every $G^*\in\mathcal{G}_{n,T}$. Every $O(\log n)$-shortcut for $G^*$ has $T^{1-o(1)}$ edges. Moreover, assuming the PS-AE-Triangle hypothesis of~\cite{AbboudBFK24}, for every constant $\varepsilon>0$, no algorithm computes the transitive closure of every graph in $\mathcal{G}_{n,T}$ in time $O(\mathsf{B}(n,T)^{1-\varepsilon})$.
\end{theorem}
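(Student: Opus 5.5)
The family will be a disjoint union $G^*=G_H\,\dot\cup\, G_A$. The component $G_H$ is a Hesse-type shortcut-hard graph whose transitive closure has $\Theta(T)$ edges. The component $G_A$ is the three-layer TC-hard instance coming from the ABFK sparse Boolean matrix multiplication family, again with $\Theta(T)$ closure edges. Since the components share no vertices, we get $|TC(G^*)|=|TC(G_H)|+|TC(G_A)|$. The two required properties can then be verified one component at a time. Any shortcut of $G^*$ restricted to $G_H$ is a shortcut of $G_H$, because closure edges never cross between components. Likewise, any TC algorithm for $G^*$ yields a TC algorithm for $G_A$ once we pad with a fixed $G_H$, and the overhead of writing down $G_H$'s input is only $O(n)$.

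\textbf{Step 1 (Hesse component).} I would invoke Hesse's construction~\cite{Hesse03} in a parameterized form. For a target closure size $T$, take a Hesse graph on $n'\le n$ vertices with $|TC|=\Theta(T)$, in which every $n'^{\delta}$-shortcut has $n'^{2-o(1)}$ edges. If needed, I would blow up or layer it so that its closure size is $\Theta(T)$ and every $O(\log n)$-shortcut has $T^{1-o(1)}$ edges. Two facts make this work:
\begin{itemize}
\item Hesse's bound holds for depth $n^{\delta}\ge O(\log n)$, so it certainly covers $O(\log n)$-shortcuts.
\item When $T<n^2$, we can pick $n'\approx\sqrt T$ and fill the remaining vertices with isolated vertices, which contribute nothing to the closure.
\end{itemize}
This is the step I expect to be the main obstacle. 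We need to check that Hesse's density loss is only $T^{o(1)}$ when $n'=\Theta(\sqrt T)$, and that his $n^{2-o(1)}$ size bound is really relative to $|TC|$, so that it reads as $T^{1-o(1)}$. If Hesse's closure is not near-quadratic in $n'$, I would instead take a disjoint union of copies sized so that both the closure and the shortcut lower bound scale linearly.

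\textbf{Step 2 (ABFK component).} From~\cite[Theorem~1.10]{AbboudBFK24}, for sparse Boolean matrix multiplication $A\cdot B$ with output size $\Theta(T)$, time $O(\mathsf{B}(n,T)^{1-\varepsilon})$ refutes PS-AE-Triangle. The three-layer graph $X\to Y\to Z$, with edges given by $A$ and $B$, has closure equal to $E(A)\cup E(B)\cup \mathrm{supp}(AB)$. Its size is $\Theta(T)$ when the inputs have size $O(T)$, which holds for these instances; otherwise we pad with isolated or extra edges. For $T\ge n^{3/2}$ we use the parameterized version mentioned in the introduction to get the $n^2$ barrier. Reading $AB$ off $TC(G^*)\cap (X\times Z)$ takes linear time, so a fast TC algorithm on $\mathcal G_{n,T}$ yields fast sparse BMM.

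\textbf{Step 3 (combine).} Scale both components to use at most $n/2$ vertices each. Constant factors in $n$ do not affect $\mathsf{B}(n,T)$ up to constants, nor $T^{1-o(1)}$. We then conclude:
\begin{itemize}
\item $|TC(G^*)|=\Theta(T)$.
\item Every $O(\log n)$-shortcut of $G^*$ contains one for $G_H$, hence has $T^{1-o(1)}$ edges.
\item An $O(\mathsf{B}(n,T)^{1-\varepsilon})$-time TC algorithm on the family solves the ABFK instances within the same time, contradicting the hypothesis.
\end{itemize}
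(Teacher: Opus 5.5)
Your architecture matches the paper's: a Hesse graph on $\Theta(\sqrt T)$ vertices, disjoint from a three-layer ABFK instance, padded with isolated vertices, with the restriction argument for shortcuts and the filtering argument for TC. However, the two steps that carry the real content are left open.

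\textbf{First gap: the $T^{1-o(1)}$ bound.} Hesse's theorem does not give a graph in which $O(\log n)$-shortcuts need $N^{2-o(1)}$ edges. For each \emph{fixed} $\epsilon$ it gives $\Omega(N^{2-\epsilon})$ for shortcuts to diameter $o(N^{\delta})$ with $\delta=\Theta(\epsilon^3)$, so a single application yields only $T^{1-\epsilon/2}$. To reach $T^{1-o(1)}$ you must let $\epsilon$ shrink with $N$ by a diagonal choice, as the paper does:
\begin{enumerate}
\item fix $\epsilon_j\downarrow0$;
\item pick increasing thresholds $N_j$ beyond which the $j$-th Hesse graph exists and $c_jN^{\epsilon_j}\geq1$;
\item use that $j\log N=o(N^{\delta_j})$, so shortcuts to diameter $j\log N$ are covered, and this handles every constant hidden in $O(\log n)$;
\item set $j(N)=\max\{j:N\geq N_j\}$.
\end{enumerate}
The fallbacks you propose do not address this. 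Blowing up or taking disjoint copies scales the closure size and the shortcut lower bound by the same factor, so each copy's relative loss $N^{-\epsilon}$ persists. Conversely, your worry that Hesse's closure might not be $\Theta(T)$ is moot: $|TC(G_H)|\leq N^2=O(T)$ suffices once another component supplies $\Omega(T)$ closure pairs. You also need $\log N=\Theta(\log n)$, which holds because $T\geq n$. This ensures that an $O(\log n)$-shortcut of $G^*$ restricts to an $O(\log N)$-shortcut of $G_H$.

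\textbf{Second gap: the regime $T>n^{3/2}$.} Here your plan needs an ABFK component that simultaneously has $\Theta(T)$ closure pairs, at most $n/2$ vertices, and an $n^2$ barrier. You defer this to ``the parameterized version mentioned in the introduction,'' but the paper justifies that remark only through the proof of this very theorem. The standard $T^{4/3}$-hard instance with output $\Theta(T)$ has layers of sizes $T^{1/3},T^{2/3},T^{2/3}$, hence more than $n$ vertices in this regime.

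The paper decouples hardness from closure size. It takes the ABFK instance at size $\tau=c_B\min\{T,n^{3/2}\}$, so that $|V(B)|=O(\tau^{2/3})$ fits in the vertex budget and $\tau^{4/3}=\Theta(\mathsf{B}(n,T))$. It then adds a separate directed path on $\Theta(\sqrt T)\leq n$ vertices, which contributes $\Theta(T)$ closure pairs. This one gadget gives $|TC(G^*)|=\Theta(T)$ regardless of the exact closure sizes of the other two components. Your remark about padding with extra edges points in this direction, but never specifies a gadget that adds $\Theta(T)$ closure pairs using only $O(\sqrt T)$ vertices.

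\textbf{A smaller point.} The cost of writing down $G_H$ is $|E(G_H)|$, not $O(n)$. It is still $O(T)$ and is absorbed by the $\Omega(T)$ cost of writing the output. This is how the paper handles running times below $T$.
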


Thus, although polylogarithmic-diameter shortcuts might be almost as dense as the full transitive closure, they can nevertheless be computed substantially faster. A proof overview appears in \Cref{sec:dheirarchy}, and the complete proof is given in \Cref{app:joint-hardness}.

\smallskip
\noindent\textbf{A Hierarchy of Small-$d$ Shortcut Algorithms.}
Beyond its application to parallel reachability, we seek to understand how the complexity of constructing a $d$-shortcut decreases as the target diameter $d$ grows. While computing a $1$-shortcut is equivalent to computing the transitive closure, the complexity of computing a $2$-shortcut remains open. Our results establish a computational hierarchy beginning at $d=3$ and continuing through every even $d$ up to $O(\log n)$. Formally:

\begin{theorem}\label{thm:dshortcut}
For every $d$ such that either $d=3$ or $d\in[4,O(\log n)]$ is even, a randomized algorithm computes a $d$-shortcut in
\[
 \widetilde O\!\left(T^{\left(1+\frac{1}{d+1}\right)\frac{\omega}{2}}\right)
\]
time. For even $d\geq4$, the algorithm has a parallel implementation with the same work bound and polylogarithmic depth. In particular, when $\omega=2$, the running time is $\widetilde O\!\left(T^{1+\frac{1}{d+1}}\right)$.
\end{theorem}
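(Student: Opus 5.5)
The plan is to build the $d$-shortcut from two ingredients: a degree threshold that splits vertices into ``heavy'' and ``light'' according to their number of ancestors and descendants in $TC(G)$, and sparse Boolean matrix multiplication applied to carefully sized rectangular products. Fix a threshold $\Delta=T^{1/(d+1)}$ (up to polylog factors). First, estimate for every vertex $v$ the quantities $|\InPred{}{v}|$ and $|\OutSucc{}{v}|$ up to constant factors in $\widetilde O(m)$ time, using the standard random-rank (Cohen-style) reachability-count sketches. These sketches parallelize via the usual reachability primitive on the shortcutted graph, or via repeated squaring restricted to the sampled sources. Since $\sum_v |\OutSucc{}{v}| = T$, at most $T/\Delta$ vertices have $\Delta$ or more descendants, and likewise for ancestors.

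Next, sample a hitting set $S$ of size $\widetilde O(T/\Delta^{\,\cdot})$. Concretely, let $S$ contain each vertex independently with probability $\widetilde\Theta(1/\Delta)$, so that with high probability every vertex with at least $\Delta$ reachable (or reaching) vertices within the relevant ``ball'' sees a vertex of $S$. For every $s\in S$ add the edges $(u,s)$ for all $u\in\InPred{}{s}$ and $(s,w)$ for all $w\in\OutSucc{}{s}$. Computing these edges takes total output $\sum_{s\in S}(|\InPred{}{s}|+|\OutSucc{}{s}|)$, and in expectation this is $\widetilde O(T/\Delta)\cdot$ (per-vertex cost). The cost is controlled because random $s$ inherits the average closure-degree $T/n$. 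Any pair $u\to w$ whose path meets $S$ is then shortcut within $2$ hops.

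The remaining pairs are those whose entire connecting structure avoids $S$, so they live in ``light'' balls. I would handle them with a recursive hierarchy. For $d=3$, take a light vertex $u$, meaning one whose out-ball avoiding $S$ has size below $\Delta$. Compute these small balls explicitly and add the pairs via one sparse matrix product of the ``ball'' matrices. This is a product $A\cdot B$ in which $A$ has $\widetilde O(n\Delta)$ nonzeros and the output is bounded by $T$. For even $d=2k$, I would iterate: at level $i$ use threshold $\Delta_i=T^{i/(d+1)}$ and a sampled set $S_i$. Connect each vertex to the $S_i$-vertices in its $\Delta_i$-ball, and connect consecutive levels $S_i\to S_{i+1}$ fully. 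A path then goes $u\to s_1\to\dots\to s_k\to\dots\to w$ symmetrically, which uses $2k$ hops.

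The cost at each level is a rectangular Boolean product whose dimensions are governed by $|S_i|\approx T/\Delta_i$ and by the ball sizes $\Delta_i$. Using the output-sensitive fast matrix multiplication bound (work $\widetilde O(\sqrt{\text{in}\cdot\text{out}}^{\,\omega-1}\cdots)$, roughly $(\text{in}+\text{out})^{\omega/2}$ for balanced instances), each level then costs $\widetilde O(T^{(1+1/(d+1))\omega/2})$. The thresholds are chosen so that the input density $T\cdot\Delta_1=T^{1+1/(d+1)}$ is the bottleneck. Parallel depth is polylogarithmic because each level is a constant number of matrix products plus bounded-radius BFS restricted to ball sizes $\Delta_i$, and these can be done by $O(\log n)$ squarings. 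I expect the main obstacle to be bounding the size of the truncated balls. One must argue that restricting to vertices not yet ``hit'' keeps both the ball-computation work and the matrix-product inputs at $T^{1+1/(d+1)}$, while simultaneously guaranteeing correctness, i.e.\ that every reachable pair either is hit at some level or lies inside a light ball. My first attempt would be a charging argument that assigns each ball edge to a closure pair $(u,x)$ and uses the sampling probability to show that each closure pair is charged $\widetilde O(\Delta_1)$ times in expectation. The case $d=3$, where the two sides are asymmetric, would need a separate one-sided version of this argument.
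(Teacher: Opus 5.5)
There is a genuine gap in the correctness step. Sampling $S$ at rate $\widetilde\Theta(1/\Delta)$ hits every large set $\mathsf{Out}(u)$. But routing a pair $(u,w)$ through a hub needs a vertex of $S$ in the interval $\mathsf{Out}(u)\cap\mathsf{In}(w)$, and that interval can be tiny while both endpoint sets are huge.

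Here is a concrete failure. Take a path $x_0\to x_1\to\cdots\to x_L$ with $d<L\le\Delta/\log n$. Attach to every $x_j$ private in- and out-stars with $K\geq\Delta$ leaves each.
\begin{itemize}
\item The interval of $(x_0,x_L)$ is the path itself. With constant probability no level's sample touches it, and the higher levels sample even more sparsely.
\item Neither endpoint is ``light'': the $S$-avoiding out-ball of $x_0$ and the $S$-avoiding in-ball of $x_L$ each contain almost all of roughly $(L+1)K$ leaves.
\item Hub edges at sampled leaves only connect to sinks or sources, so they create no detour.
\end{itemize}
Hence $(x_0,x_L)$ stays at distance $L>d$. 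This already happens for $d=4$, where $\Delta=T^{1/5}$ is polynomial.

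The idea you are missing is to use $S$ for \emph{peeling} rather than as hubs. The set $U=\bigcup_{s\in S}\mathsf{In}(s)$ is predecessor-closed and contains every vertex of TC-outdegree at least $\Delta$, so on every path it forms a prefix. The paper adds the entire induced closure $TC(G)[U]=\bigcup_{s\in S}TC(G[\mathsf{In}(s)])$ (\Cref{thm:TC-large-deg}). This collapses that prefix to one edge whether or not the path meets $S$. The algorithm then:
\begin{itemize}
\item recurses on $(G\setminus U)^{\mathsf R}$, whose TC-indegree is now below the threshold, using thresholds $T^{i/(d+1)}$ for $i=d/2,\ldots,1$;
\item computes the full closure of the final low-degree graph by BFS (\Cref{lem:naive-comb-small-deg});
\item adds the stitching set $E_{1\to0}$ to save the one surplus hop between the last two blocks.
\end{itemize}
The case $d=3$ uses a separate variant with a low-degree phase and the rectangular product $E_{2\to0}$.

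The cost argument also rests on a tool that is not available. An output-sensitive sparse Boolean product costing roughly $(\mathrm{in}+\mathrm{out})^{\omega/2}$ on balanced instances is not known. For $\omega=2$ it would refute the PS-AE-Triangle hypothesis through the $\tau^{4/3-o(1)}$ sparse-BMM barrier the paper cites, and beating that barrier without such a primitive is the whole point of the theorem. The paper multiplies only dense matrices on vertex sets of controlled size:
\[
\sum_{s\in S}|\mathsf{In}(s)|^{\omega}\le\min\{T/\Delta_{\mathrm{out}},\Delta_{\mathrm{in}}\}^{\omega-1}\cdot T/\Delta_{\mathrm{out}}.
\]
The cap $\Delta_{\mathrm{in}}$ is exactly the threshold of the previous reversed peeling round. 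This alternation between rounds is what produces the exponent $(1+\frac{1}{d+1})\frac{\omega}{2}$.

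Polylogarithmic depth also does not follow from your outline. Computing reachability balls and running Cohen's estimator both need depth proportional to the diameter, and repeated squaring would cost as much as the closure itself. The paper instead proceeds as follows:
\begin{itemize}
\item It shows that peeling a predecessor-closed set preserves all distances among surviving vertices (\Cref{lem:diameter-monotonicity}), so the sequential algorithm runs in $\widetilde O(D(G))$ depth.
\item It applies this to the $(d+2)$-layered copy of the current graph, which has diameter $d+1$ and closure size $O(d^2T)$.
\item It projects the shortcut back, which shrinks every distance by a factor of about $1-\frac{1}{d+1}$, and repeats this $O(d\log n)$ times.
\end{itemize}
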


The $d=3$ guarantee and the even-$d$ hierarchy are obtained by distinct algorithms: \Cref{sec:3-shortcut} gives the specialized construction for $d=3$, whereas \Cref{sec:dheirarchy} gives the hierarchy for even $d\geq4$. Currently, a matching bound is not obtained for odd $d\geq5$.

Assuming $\omega=2$, the hierarchy computes $d$-shortcuts faster than the current transitive-closure algorithm already for $d=3$; under the current bound on $\omega$, this holds for every even $d\geq8$.

\noindent\textbf{Large-Diameter Shortcuts.}
Finally, for superlogarithmic values of $d$, we obtain the following parameterized work--depth tradeoff for constructing $d$-shortcuts. This tradeoff improves the previous square-root-depth bound and gives an output-sensitive work--depth tradeoff for parallel reachability:

\begin{theorem}\label{thm:d-largeshortcutparallel}
For every $d=\Omega(\log^2 n)$, there is a randomized parallel algorithm that computes a $d$-shortcut using
\[
 \widetilde O\!\left(\frac{T^{\omega/2}}{d^{\omega-1}}+m\right)
\]
 work and $\widetilde O(d)$ depth.
\end{theorem}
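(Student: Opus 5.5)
We build the $d$-shortcut from a random hitting set of ``hubs'' and connect the hubs with fast matrix multiplication. The reduced diameter comes from the standard sampling argument. The work bound comes from two facts: we only ever expand local $d$-hop neighborhoods, and the only dense computation is a transitive closure on a contracted hub graph of size about $T/d^2$ in closure edges.

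\textbf{Step 1 (local balls).} Sample $S\subseteq V$ by including each vertex independently with probability $p=\Theta(\log n/d)$. With high probability, every path in $G$ with at least $d/4$ vertices contains a vertex of $S$. For each $v$ we run a depth-$O(d)$ parallel BFS to explore the forward and backward $d/4$-hop balls of $v$. A vertex $u$ lies in the forward ball of $v$ only if $(v,u)\in TC(G)$. The total ball size is therefore at most $2T$, and the BFS costs $\widetilde O(T+m)$ work. From the balls we add the edges $(v,s)$ for every $s\in S$ in $v$'s forward ball and $(s,v)$ for every $s\in S$ in $v$'s backward ball. Summed over $v$, these contribute $\widetilde O(T/d)$ edges, because each closure pair gives rise to only a $p$-fraction of hubs in expectation. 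For pairs $(u,w)$ joined by a path of fewer than $d/4$ hops, the original path already has length at most $d$, so nothing needs to be added for them.

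\textbf{Step 2 (hub closure).} Every long path from $u$ to $w$ contains hubs $s_1$, near $u$, and $s_2$, near $w$, with $(s_1,s_2)\in TC(G)$. It therefore suffices to add $TC(G)\cap (S\times S)$: together with the edges from Step 1 this gives paths $u\to s_1\to s_2\to w$ of length at most $3\le d$. We compute this hub closure by repeated squaring of the hub reachability relation. Define the hub graph $G_S$ on $S$ with an edge $(s,s')$ whenever $s'$ is reachable from $s$ within $d/2$ hops; these edges are read off the balls of Step 1. Every path between hubs in $G$ has consecutive hubs within $d/4$ hops with high probability, so $TC(G)\cap(S\times S)=TC(G_S)$. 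Each pair in $TC(G_S)$ is a closure pair whose endpoints are both sampled, so $|TC(G_S)|\le \widetilde O(T/d^2)$ in expectation, and also $|S|=\widetilde O(n/d)$. We then run $O(\log n)$ rounds of Boolean squaring on $G_S$. In every round the product has size at most $|TC(G_S)|$, and we use an output-sensitive sparse matrix multiplication: split by degree, multiply the heavy part by fast MM, and handle the light part combinatorially. With $N'=|TC(G_S)|$, one round costs $\widetilde O(N'^{\omega/2})$. This is the standard bound, since a product whose input and output both have size $\le N'$ can be computed in $\widetilde O(N'^{\omega/2})$ by splitting on rows and columns with degree above $\sqrt{N'}$, which leaves a dense $\sqrt{N'}\times\sqrt{N'}$ core. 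Substituting $N'=\widetilde O(T/d^2)$ gives $\widetilde O(T^{\omega/2}/d^{\omega})$ work, which is already at most the claimed $\widetilde O(T^{\omega/2}/d^{\omega-1})$. Each product has polylogarithmic depth, so Step 2 has polylogarithmic depth overall. The total work is $\widetilde O(T^{\omega/2}/d^{\omega-1}+m)$, plus the $\widetilde O(T)$ from Step 1.

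\textbf{Main obstacle.} The $\widetilde O(T)$ cost of Step 1 is not dominated by $T^{\omega/2}/d^{\omega-1}$ once $d$ is large. This is where the $d^{\omega-1}$ in the bound, rather than the $d^{\omega}$ from Step 2, has to be spent. My first attempt would replace full ball exploration with hub-limited BFS: from each vertex, explore only until the first hub is reached on each branch, using $\widetilde O(d)$ rounds. The work then charges to edges of $G$ and to short hub-free segments, which by the sampling argument have expected size $\widetilde O(m\cdot d)$ in the worst case. I would then try to bound the number of pairs that actually get explored by $\widetilde O(m+T/d)$, which is dominated by the target work. To get this, each non-hub vertex would record only its nearest hubs, and hub-to-hub edges would be discovered by one hop-limited search from each hub. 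I would also use the condition $d=\Omega(\log^2 n)$ to make sure that the sampling rate still yields hitting with high probability. If the cost cannot be brought down to $\widetilde O(m+T/d)$, the fallback is a two-level sampling scheme in which Step 1 itself runs at hop radius $\sqrt d$; this would trade the $\widetilde O(T)$ term for one of the form $T^{\omega/2}/d^{\omega-1}$.
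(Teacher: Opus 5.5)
\textbf{There is a genuine gap: Step 1 is over budget, and your proposed fixes do not repair it.} The step you flag as the main obstacle is worse than you estimate.

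\textbf{Step 1 does not cost $\widetilde O(T+m)$.} A depth-$d/4$ BFS from every vertex costs $\sum_v |E(G[B_v])|$, where $B_v$ is the ball of $v$, because every BFS rescans every edge inside its ball.
\begin{itemize}
\item Take $L=2d$ layers of width $w$, with complete bipartite edges between consecutive layers. Each ball has $\Theta(dw^2)$ edges, so Step 1 costs $\Theta(d^2w^3)$. The target (for $\omega=2$) is $\Theta(T/d+m)=\Theta(dw^2)$.
\item Hub-limited BFS does not help on this graph. For large $w$, every layer contains non-hubs, so hub-free paths still sweep out essentially the whole ball.
\item Building $G_S$ by searching from the $\widetilde\Theta(n/d)$ hubs instead still costs $\widetilde\Theta(dw^3)$.
\item The two-level fallback is not worked out.
\end{itemize}

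\textbf{The missing ingredient is a way to bound how many times each edge is explored.} The paper gets this as follows.
\begin{itemize}
\item It never grows balls around all vertices or all hubs. It grows radius-$\widetilde O(d)$ balls only around a balanced hitting set, whose total bounded-radius degree is $\widetilde O(T/\Delta)$.
\item In the first round it attaches the whole set to a single dummy sink, so each edge is scanned once.
\item In later rounds, the degree cap $\Delta_{\max}$ ensures each edge lies in only $\widetilde O(\Delta_{\max}/\Delta_{\mathrm{out}})$ balls.
\item Inside a ball of size $n_v$ it computes only a $d/\log^2 n$-shortcut, not a closure. This is done by an LDD into $\widetilde O(1)$ topologically ordered subgraphs, followed by closures of $O(d)$ consecutive blocks, at cost $n_v^{\omega}/d^{\omega-1}$. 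This is where the $d^{\omega-1}$ comes from.
\item Thresholds are halved and the radius shrinks by $d$ each round. The resulting diameter-halving phase is repeated $O(\log n)$ times.
\end{itemize}

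\textbf{Step 2 rests on an incorrect claim.} Computing a Boolean product whose input and output both have size $N'$ in $\widetilde O(N'^{\omega/2})$ time is not a standard bound.
\begin{itemize}
\item Removing rows and columns of degree above $\sqrt{N'}$ does not leave a $\sqrt{N'}\times\sqrt{N'}$ dense core. The inner dimension is unbounded, and the light part's combinatorial cost depends on the degrees of the inner indices.
\item No such algorithm is known; the best fully sparse algorithm, due to ABFK, runs in $N'^{1.3459+o(1)}$ time.
\item If $\omega=2$, your bound plus $O(\log n)$ squarings of the adjacency matrix plus identity would compute $TC(G)$ of every graph in $\widetilde O(T)$ time. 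That contradicts the conditional $T^{4/3-o(1)}$ barrier recalled in the introduction.
\end{itemize}
This part can be repaired. Since $d=\Omega(\log^2 n)$, you do not need all of $TC(G)\cap(S\times S)$; an $O(\log n)$-shortcut of $G_S$ suffices. \Cref{thm:dshortcutparallel} provides one with $\widetilde O(|TC(G_S)|^{\omega/2})$ work and polylogarithmic depth. The exploration cost of Step 1, and of constructing $G_S$, remains, so as written the argument does not reach $\widetilde O(T^{\omega/2}/d^{\omega-1}+m)$.
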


When $\omega=2$, the shortcut construction uses $\widetilde O(T/d+m)$ work and $\widetilde O(d)$ depth. Under the current bound on $\omega$, setting $d=n^{\frac{\omega-2}{\omega-1}}$ for dense graphs yields $\widetilde O(n^2)$ work and $\widetilde O(n^{0.271})$ depth, improving on the previous $n^{1/2+o(1)}$ depth bound~\cite{LiuJS19,BrandGJV25}. For dense graphs, however, the independently obtained algorithm of~\cite{AshvinkumarBGS26} achieves the stronger $n^{0.136}$-depth bound with $\widetilde O(n^2)$ work. The proof of \Cref{thm:d-largeshortcutparallel} appears in \Cref{sec:large-d-shortcut}.

\subsection{Preliminaries}\label{sec:prelim} \vspace{-5pt}

Our parallel algorithms follow the standard work-depth terminology \cite{blelloch1996programming}: an algorithm's work $W$ is its total number of operations, and its depth $d$ is the length of the longest chain of operations with sequential dependencies. A parallel algorithm is work-efficient if its work bound matches the sequential complexity of the problem.  

\noindent\textbf{Graph Notation.} Unless stated otherwise, all digraphs are unweighted and simple; parallel edges can be discarded. We absorb an additive $n$ into the input-size parameter $m$, equivalently by counting one implicit self-loop at every vertex. Thus $n\leq m\leq |TC(G)|$. 
For an $a$-$b$ path $P_1$ and a $c$-$d$ path $P_2$, let $P_1 \circ (b,c) \circ P_2$ be the $a$-$d$ path obtained by concatenating $P_1$, the connecting edge $(b,c)$, and $P_2$.  
Let $G^{\mathsf{R}}$ denote the graph obtained from $G$ by reversing all edges.
For a given digraph $G=(V,E)$, let $TC(G)$ denote the transitive closure of $G$. We regard every vertex as reachable from itself by the empty path; thus $(v,v)\in TC(G)$ for every $v\in V$.
The set of \emph{successors} of $v$ is
$\OutSucc{G}{v} = \{x \in V(G) ~\mid~ (v,x)\in TC(G)\}$.  We also refer to $\OutSucc{G}{v}$ as the set of vertices that are \emph{reachable} from $v$. The set of \emph{predecessors} of $v$ is
$\InPred{G}{v} = \{x \in V(G) ~\mid~ (x,v)\in TC(G)\}$.  Let $N_{out}(v,G)=\{u ~\mid~ (v,u)\in G\}$ be the outgoing neighbors of $v$ in $G$, and similarly define $N_{in}(v,G)=\{u ~\mid~ (u,v)\in G\}$. Then, $\OutSucc{G}{v}=N_{out}(v,TC(G))$ and $\InPred{G}{v}=N_{in}(v,TC(G))$. The TC-outdegree (resp., indegree) of a vertex $v \in V(G)$ is $\TCOutDeg(v,G)=|\OutSucc{G}{v}|$ (resp., $\TCInDeg(v,G)=|\InPred{G}{v}|$). Define the total TC-degree of $v$ by $\TCDeg(v,G)=\TCInDeg(v,G)+\TCOutDeg(v,G)$, and define the maximum TC-degree of $G$ by $\TCDeg(G)=\max_{v\in V(G)}\TCDeg(v,G)$. The diameter of a digraph $G$, denoted by $D(G)$, is the maximum $u$-to-$v$ distance, if such a pair exists; that is, $D(G)=\max_{(u,v)\in TC(G)}\dist_{G}(u,v)$. A subset $H \subseteq TC(G)$ is a $d$-shortcut for $G$ if the diameter of $G \cup H$ is at most $d$.
\noindent We make use of the following parallel algorithms: 
\begin{theorem}[\cite{Cohen97}]\label{thm:size-estimation}
Given an $n$-vertex digraph $G=(V,E)$, there is a randomized parallel algorithm that, with high probability, computes a $(1+\epsilon)$ estimate of the TC out-degrees $\{\widetilde{d}^{out}(v)\}_{v \in V}$ using $\tilde{O}_{\epsilon}(m)$ work and $\tilde{O}(D(G))$ depth via $O(\log n)$ single-source reachability computations. In particular, $\widetilde{d}^{out}(v) \in (1\pm \epsilon)|\OutSucc{G}{v}|$.
\end{theorem}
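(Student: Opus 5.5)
Although this result is cited from Cohen~\cite{Cohen97}, here is how I would reprove it, via \emph{min-rank propagation with exponential ranks}.

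\textbf{Reduction to minima.} Assign every vertex $x$ an independent rank $r(x)\sim\mathrm{Exp}(1)$, and for every $v$ define $\mu(v)=\min_{x\in\OutSucc{G}{v}} r(x)$. Since a minimum of $s$ independent $\mathrm{Exp}(1)$ variables is distributed as $\mathrm{Exp}(s)$, we have $\mu(v)\sim \mathrm{Exp}(|\OutSucc{G}{v}|)$. Repeat this with $k=\Theta(\epsilon^{-2}\log n)$ independent rank assignments, obtaining $\mu_1(v),\dots,\mu_k(v)$, and output $\widetilde d^{out}(v)=(k-1)/\sum_j \mu_j(v)$. The sum is $\mathrm{Gamma}(k,|\OutSucc{G}{v}|)$, so standard Gamma (Chernoff-type) tail bounds give $\widetilde d^{out}(v)\in(1\pm\epsilon)|\OutSucc{G}{v}|$ with probability $1-n^{-c}$. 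A union bound over all $v$ finishes correctness. Ranks only need $O(\log n)$ bits of precision, since ties and rounding change the counts negligibly.

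\textbf{Computing all $\mu(v)$ in parallel.} For one rank assignment, run a synchronous propagation on $G$ for $D(G)$ rounds. Initialize $\mu^{(0)}(v)=r(v)$. In round $i$, every vertex whose value \emph{changed} in round $i-1$ sends it to its in-neighbors, and each vertex sets
\[
\mu^{(i)}(v)=\min\Bigl(\mu^{(i-1)}(v),\ \min_{u\in N_{out}(v,G)}\mu^{(i-1)}(u)\Bigr),
\]
using only the received messages. By induction, $\mu^{(i)}(v)$ is the minimum rank over the vertices within $i$ hops from $v$. Hence after $D(G)$ rounds we have $\mu(v)$ for every $v$. Each round is a parallel min-reduction over active edges, taking $O(\log n)$ depth, so the total depth is $\widetilde O(D(G))$. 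This is structurally a single-source-reachability style frontier computation, which matches the ``$O(\log n)$ SSR computations'' phrasing. Repeating it for $k$ independent rank assignments, run in parallel, yields the theorem once the work is bounded.

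\textbf{Work bound (main obstacle).} Naively there are $D(G)$ rounds over all edges, giving $O(mD(G))$ work. The key step is to show that each vertex $v$ changes its value only $O(\log n)$ times in expectation. Each change of $v$ costs $\deg_{in}(v)$ messages, so this gives $\widetilde O(m)$ expected work per rank assignment, and $\widetilde O_\epsilon(m)$ in total. To prove it, consider the nested hop-balls $B_0(v)\subseteq B_1(v)\subseteq\cdots$ of out-reachability from $v$. These sets are determined by the graph alone and hence are independent of the ranks. The value of $v$ changes at round $i$ exactly when the minimum rank over $B_i(v)$ lies in $B_i(v)\setminus B_{i-1}(v)$. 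For i.i.d.\ continuous ranks this happens with probability $|B_i\setminus B_{i-1}|/|B_i|\le \ln(|B_i|/|B_{i-1}|)$. Summing over $i$ telescopes to at most $1+\ln n$. A Markov/Chernoff bound over the independent repetitions then converts this to a high-probability $\widetilde O(m)$ total work bound. The only delicate point is exactly this independence between the ball structure and the ranks; once it is established, the remaining steps are routine.
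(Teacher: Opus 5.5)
The paper gives no proof of this statement; it imports it from \cite{Cohen97}. Your argument is therefore an independent reproof. The estimator part (exponential ranks, minimum labels, $(k-1)/\sum_j\mu_j(v)$, Gamma tails) is Cohen's. Where you differ is in how the labels $\mu(v)$ are computed.

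Cohen's parallel computation, which is what the clause ``via $O(\log n)$ single-source reachability computations'' refers to, uses reachability as a black box. So the work is deterministic once the ranks are fixed, and any parallel reachability routine can be plugged in. You instead run a frontier-based Bellman--Ford-style min-propagation and pay for re-activations, which you bound with the record argument.

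Your route is self-contained, and it handles radius-truncated balls for free: stop after $r$ rounds, and the telescoping bound is unchanged. That is exactly what \Cref{lem:estimate-up-to-d} needs, and there the paper only asserts that truncation works. The price is that your depth is intrinsically the number of propagation rounds, so you cannot swap in a faster reachability algorithm. Your procedure is also not literally a sequence of black-box SSR calls. This is harmless for every use in this paper, since reachability there is always level-synchronous BFS.

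You should also state that the frontier is kept as a compacted list, so that a round costs work proportional to the active vertices and their in-edges. Scanning all vertices every round would add $O(n\,D(G))$ work.

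The one step that does not work as written is the upgrade from expected to high-probability work. Markov's inequality over the $k$ repetitions bounds how many runs are expensive, not how expensive they are. A Chernoff bound over runs needs bounded range, but the only deterministic bound on a single run is $O(m\cdot\min\{n,D(G)+1\})$. Aborting expensive runs and resampling conditions the ranks, so that is not an obvious fix either.

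The missing ingredient is that, for fixed $v$, the events $E_i=\{\text{the minimum rank in } B_i(v) \text{ lies in } B_i(v)\setminus B_{i-1}(v)\}$ are mutually independent. The reason is as follows:
\begin{enumerate}
\item $E_i$ depends only on which element of $B_i(v)$ has the smallest rank.
\item Permuting the elements of $B_{i-1}(v)$ among themselves preserves $E_i$, so $E_i$ is independent of the relative order of ranks inside $B_{i-1}(v)$.
\item That relative order determines $E_1,\dots,E_{i-1}$.
\end{enumerate}
This is R\'enyi's record argument. Hence the number of changes of $v$ is $1$ plus a sum of independent Bernoulli variables with total mean at most $\ln n$. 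A Chernoff bound makes it $O(\log n)$ with probability $1-n^{-c}$, and a union bound over all $v$ and all $k$ runs gives $O(m\log n)$ work per run with high probability.

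The point you flag as delicate, independence of the balls from the ranks, is automatic because the balls are determined by $G$ alone. The real subtlety is the mutual independence across layers. With this fix, or if only expected work is claimed, your proof is complete.
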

We apply this theorem with a fixed constant $\epsilon<1$ to both $G$ and $G^{\mathsf R}$. By scaling the estimates by a constant, we obtain, with high probability, simultaneous upper estimates of every TC-indegree and TC-outdegree, each within a constant factor of its true value. Summing the outdegree estimates similarly gives an upper estimate $T$ satisfying $|TC(G)|\leq T\leq O(|TC(G)|)$. Whenever an algorithm below uses estimated degrees or an estimate $T$, we use these scaled upper estimates.


\begin{theorem}[Parallel Matrix Multiplication, Thm. 9.2 \cite{HuangP98}]\label{thm:parrecmat}
Given an $n \times n$ matrix $A$ and an $n \times n$ matrix $B$, there is a deterministic parallel algorithm for computing the matrix product $A \times B$ using $n^{\omega}$ work and $O(\log n)$ depth.
\end{theorem}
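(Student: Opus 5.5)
The plan is to parallelize the standard recursive bilinear algorithm that defines $\omega$. First fix the convention: as usual, $n^{\omega}$ stands for $n^{\omega+\epsilon}$ for an arbitrarily small constant $\epsilon>0$. So it suffices to show the following. For every bilinear algorithm that multiplies $k\times k$ matrices, for a constant $k$, using $r$ scalar multiplications, there is a parallel algorithm for $n\times n$ matrices with $O(n^{\log_k r})$ work and $O(\log n)$ depth. By the definition of $\omega$, for every $\epsilon>0$ we can choose $k,r$ with $\log_k r\le \omega+\epsilon$.

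Pad $A$ and $B$ with zeros up to dimension $N=k^{\lceil \log_k n\rceil}\le kn$. This changes the work by only a constant factor $k^{\omega}$ and the depth by an additive constant. Then unroll the recursion. Split each matrix into a $k\times k$ grid of $(N/k)\times(N/k)$ blocks. The bilinear algorithm specifies $r$ pairs of linear forms $(L_i(A), R_i(B))$ in these blocks. Form all $2r$ block-matrices $L_i(A)$ and $R_i(B)$ simultaneously. Each entry is a fixed linear combination of at most $k^2=O(1)$ entries, so this takes $O(1)$ depth and $O(rN^2/k^2)$ work. Next, recursively compute the $r$ products $M_i=L_i(A)\,R_i(B)$ in parallel. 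Finally, assemble each output block as a fixed linear combination of the $M_i$, again in $O(1)$ depth and $O(r N^2/k^2)$ work.

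For the depth, let $D(N)$ denote the depth. The $r$ subcalls run in parallel, so $D(N)=D(N/k)+O(1)$. This gives $D(N)=O(\log_k N)=O(\log n)$ since $k$ is constant; the base case is a $1\times1$ product. For the work, the recurrence is $W(N)=r\,W(N/k)+O(rN^2/k^2)$. Since $r>k^2$ whenever $\log_k r>2$, the leaves dominate, so $W(N)=O(N^{\log_k r})=O(n^{\omega+\epsilon})$. The case $\log_k r=2$ contributes only a logarithmic factor.

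The one step needing care is making the combination phases genuinely $O(1)$-depth and processor-allocatable. Each output entry at each recursion level depends on a constant number of inputs, so no prefix sums or long additions are needed; the only global bookkeeping is indexing the $r^{\ell}$ subproblems at level $\ell$, which can be done by direct arithmetic on the index. I expect the main obstacle to be purely conventional: interpreting ``$n^{\omega}$ work'' as achieved up to the $n^{\epsilon}$ slack inherent in the definition of $\omega$ (equivalently, fixing one concrete bilinear algorithm). This is how the bound is used throughout the paper.
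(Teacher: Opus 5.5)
The paper gives no proof of this statement; it imports it as a black box from Huang and Pan~\cite{HuangP98}. Your proposal is the standard self-contained derivation, and it is correct. Fix one $k\times k$ bilinear algorithm of rank $r$, pad to $N=k^{\lceil\log_k n\rceil}$, and unroll the recursion. This gives $\log_k N$ levels, each of $O(1)$ depth, because every entry of every $L_i(A)$, $R_i(B)$ and output block is a fixed combination of $O(1)$ values. The work recurrence $W(N)=rW(N/k)+O(N^2)$ is leaf-dominated because $r>k^2$. Compared with the citation, your route makes explicit what the paper leaves implicit: ``$n^{\omega}$ work'' means $O(n^{\omega+\epsilon})$ for each fixed $\epsilon>0$, and the constant in the $O(\log n)$ depth depends on $\epsilon$ through $k$ and $r$.

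A few minor points, none of which affects correctness:
\begin{itemize}
\item The padding overhead is $(N/n)^{\log_k r}\le r$ rather than $k^{\omega}$.
\item Forming the $L_i(A)$ costs $O(rk^2(N/k)^2)=O(rN^2)$ operations rather than $O(rN^2/k^2)$. These agree up to constants since $k$ is fixed.
\item The step ``by the definition of $\omega$'' is really Strassen's theorem: the exponent defined by total arithmetic cost equals the bilinear-rank exponent $\inf_k\log_k R(\langle k,k,k\rangle)$, because a general algorithm converts to a bilinear one with at most twice as many nonscalar multiplications. This is standard, but you should name it.
\item The paper applies this to Boolean products. For that, run the algorithm over the integers, clearing the denominators of the rational coefficients, and then threshold. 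All intermediate values have $O(\log n)$ bits, since each of the $O(\log n)$ levels enlarges magnitudes by only a constant factor.
\end{itemize}
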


\subsection{Our Approach}

Our approach is based on iteratively shortcutting paths involving high TC-degree vertices. Towards the goal, we introduce the notion of \emph{balanced TC-hitting sets}. Informally, such a set $S$ has bounded total TC-degree while intersecting every sufficiently large predecessor and successor set in $G$.

\begin{definition}[Balanced TC-Hitting Sets]\label{def:TChitset}
For a given $n$-vertex graph $G=(V,E)$ with $T=|TC(G)|$ and an integer $\Delta$, a subset $S \subseteq V$ is a $\Delta$-\emph{balanced TC-hitting set} ($\Delta$-hitting set, in short) if:
\begin{enumerate}
\item $S$ is a hitting set of all successor and predecessor lists of size at least $\Delta$; that is, for every $v$ with $\TCOutDeg(v)\geq \Delta$, $S \cap \OutSucc{G}{v}\neq \emptyset$ and for every $v$ with $\TCInDeg(v)\geq \Delta$, $S \cap \InPred{G}{v}\neq \emptyset$.

\item $\sum_{s \in S} \left(\TCInDeg(s)+\TCOutDeg(s)\right)=\tilde{O}(T /\Delta)$.
\end{enumerate}
\end{definition}

By a standard probabilistic argument, $\Delta$-balanced TC-hitting sets exist and can be computed in randomized near-linear time using the size-estimation technique of \Cref{thm:size-estimation}.
\vspace{-10pt}

\subsubsection{$O(\log n)$-Shortcuts} \label{sec:small-d-overview}

The $O(\log n)$-shortcut algorithm is based on $\ell=4\log n$ iterations, each computing two $1$-shortcut sets, $H_i^{in}$ and $H_i^{out}$, over a subset of vertices $V_i=V(H_i^{in}\cup H_i^{out})$. Let $\Delta_0=n$ and for every $i \in \{1,\ldots, \ell\}$, set $\Delta_i=\sqrt{T}/2^i$; hence $\Delta_{\ell}<1$. In each iteration $i \geq 1$, we are given a graph $G_i \subseteq G$ with maximum TC-degree at most $2\Delta_{i-1}$, where initially $G_1=G$. The goal of iteration $i$ is to shortcut every path having an endpoint $v$ with $\TCDeg(v,G_i)\geq 2\Delta_i$. Such a vertex necessarily satisfies either $\TCOutDeg(v,G_i)\geq \Delta_i$ or $\TCInDeg(v,G_i)\geq \Delta_i$. This is done in two major steps.
First, a $\Delta_i$-hitting set $S_i$ is computed for the graph $G_i$. The second step iterates over $s \in S_i$ and does the following: compute the predecessor and successor lists $\OutSucc{G_i}{s}$ and $\InPred{G_i}{s}$ by computing the incoming and outgoing BFS trees rooted at $s$ in $G_i$. Then, compute the transitive closure of the graphs $G_{out}(s)=G_i[\OutSucc{G_i}{s}]$ and $G_{in}(s)=G_i[\InPred{G_i}{s}]$, by using the naive iterative matrix multiplication algorithm. The output of the iteration is given by:
$$H^{out}_i=\bigcup_{s \in S_i}TC(G_{out}(s)), ~H^{in}_i=\bigcup_{s \in S_i}TC(G_{in}(s)) \mbox{~and~} G_{i+1}=G_i \setminus V(H^{out}_i \cup H^{in}_i)~.$$
The final output shortcut is $H=\bigcup_{i=1}^{\ell}  H^{out}_i \cup H^{in}_i$. This completes the description of the algorithm. 

\smallskip
\noindent \textbf{The Time Argument.} Consider the first iteration. Since $\Delta_1=\Theta(\sqrt{T})$, it holds by \Cref{def:TChitset} that $\sum_{s \in S_1} \left(\TCInDeg(s)+\TCOutDeg(s)\right)=\widetilde O(\sqrt{T})$.
We focus on the time to compute $H^{in}_i$; the argument for $H^{out}_i$ is symmetric. Computing $\InPred{G_i}{s}$ takes $O((\TCInDeg(s))^2)$ time and computing the transitive closure of $G_{in}(s)$ takes $O((\TCInDeg(s))^\omega)$ time, as $|V(G_{in}(s))|=\TCInDeg(s)$. Summing over all $s \in S_1$, the runtime is bounded by:
$$\sum_{s \in S_1}(\TCInDeg(s))^\omega\leq \widetilde O\left((\sqrt{T})^{\omega-1}\cdot \sum_{s \in S_1}\TCInDeg(s)\right)=\widetilde O(T^{\omega/2}).$$

Next consider iteration $i \geq 2$. A critical observation is that the maximum TC-degree of $G_i$ is at most $2\Delta_{i-1}$. For $i=1$ this is immediate. Consider iteration $i$ and observe that for every vertex $v$ with $\TCOutDeg(v,G_i)\geq \Delta_i$, there must be $s \in S_i$ such that $v \in \InPred{G_i}{s}$. Similarly, for every vertex $v$ with $\TCInDeg(v,G_i)\geq \Delta_i$, there must be $s \in S_i$ such that $v \in \OutSucc{G_i}{s}$. Hence, every vertex $v$ with TC-degree at least $2\Delta_{i}$ must appear in $V(H^{in}_i) \cup V(H^{out}_i)$. Consequently, the maximum TC-degree of $G_{i+1}$ is at most $2\Delta_i$. Using also the fact that $\sum_{s \in S_i} \TCInDeg(s)=\widetilde O(T/\Delta_i)$ by \Cref{def:TChitset}, the runtime for computing $H^{in}_i$ is bounded by
$$\sum_{s \in S_i}(\TCInDeg(s))^\omega\leq (2\Delta_{i-1})^{\omega-1}\cdot \sum_{s \in S_i}\TCInDeg(s)=\widetilde O\!\left(\Delta_{i-1}^{\omega-1}\cdot \frac{T}{\Delta_i}\right)=\widetilde O(T^{\omega/2}).$$ This completes the time-bound argument.

\smallskip
\noindent \textbf{The Diameter Argument.} Consider some $u$-$v$ path $P$ in $G$. Let $u'_1$ be the last vertex on $P$ (closest to $v$) that belongs to $V(H^{in}_1)$; if no such vertex exists, the corresponding prefix is empty and is simply omitted. We claim that $\dist_{H^{in}_1}(u,u'_1)=1$. To see this, let $s \in S_1$ be such that $u'_1 \in \InPred{G}{s}$. Since $u'_1\in V(H^{in}_1)$, such $s$ must exist. It then also holds that every $z \in P[u,u'_1]$ belongs to $\InPred{G}{s}$, and consequently 
$P[u,u'_1]\subseteq G_{in}(s)$. Since $H^{in}_1$ contains the transitive closure of $G_{in}(s)$, we have that $\dist_{H^{in}_1}(u,u'_1)=1$, as desired. In a similar manner, let $v'_1$ be the first vertex on $P$ (closest to $u$) that belongs to $V(H^{out}_1)$, omitting the corresponding suffix if no such vertex exists. By a symmetric argument, there exists $s' \in S_1$ such that $P[v'_1,v]\subseteq G_{out}(s')$ and $\dist_{H^{out}_1}(v'_1,v)=1$. Let $P_1=P \setminus V(H^{in}_1 \cup H^{out}_1)$. Observe that $P_1$ is a \emph{connected} subpath of $P$ and is contained in $G_2$. Let $u_1,v_1$ be the endpoints of $P_1$. By repeating the same argument, we can identify a prefix $P_1[u_1,u'_2]$ and a suffix $P_1[v'_2,v_1]$ that are shortcut to length $1$ in $H^{in}_2 \cup H^{out}_2$. Overall, the path $P$ can be partitioned into $k\leq 2\ell=O(\log n)$ segments $P_i$, where each $P_i$ has endpoints $u_i,v_i$ such that $P=P_1 \circ (v_1,u_2) \circ \ldots \circ (v_{k-1},u_{k}) \circ P_k$ and $\dist_{H}(u_i,v_i)=1$. This completes the sketch of the diameter argument; see Fig.~\ref{fig:diam-intro}.

\begin{figure}[h!]
\begin{center}
\includegraphics[width=\linewidth]{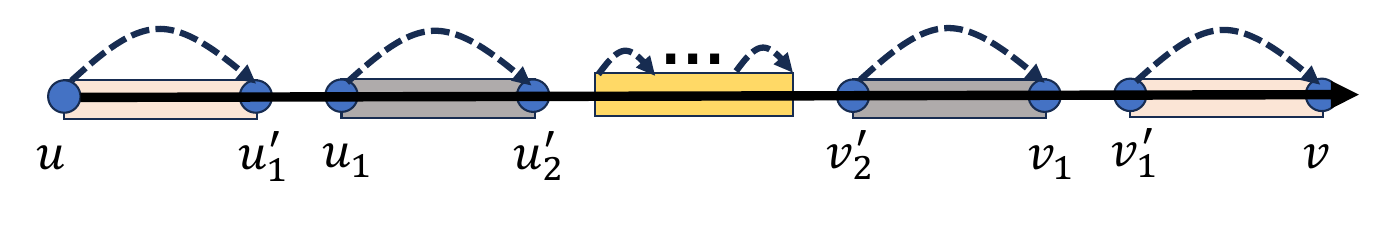}
\caption{Illustration of the $O(\log n)$-diameter argument. Solid black edges belong to the original path, and dashed blue edges are shortcuts.
\label{fig:diam-intro} 
}
\end{center}
\end{figure}

\smallskip
\noindent \textbf{Parallel Implementation in  $\tilde{O}(1)$ Depth.} For the parallel implementation, our starting observation is that the above-mentioned sequential algorithm can be implemented in the parallel setting using $\tilde{O}(D(G))$ depth and a work bound that nearly matches the sequential time. To obtain the desired $\tilde{O}(1)$ bound, for $d=O(\log n)$ we transform the graph $G$ into a $(d+2)$-layered graph $G'$, hence with diameter $d+1$. Each layer of $G'$ consists of copies of the vertices of $G$, and the edges from layer $i$ to $i+1$ correspond to the edges of $G$. Observe that every length-$(d+1)$ path in $G$ is captured by $G'$. We then apply the parallel $d$-shortcut algorithm to $G'$, which can be implemented in $\tilde{O}(D(G'))=\tilde{O}(1)$ depth. The resulting $d$-shortcut $H'$ has the following property: there is a simple mapping from the edges of $H'$ to a set $H\subseteq TC(G)$ such that every $u$-to-$v$ path of length $d+1$ in $G$ is shortened to length at most $d$ in $G \cup H$. In other words, adding $H$ to $G$ reduces the diameter of $G$ by a factor of roughly $(1-1/d)$. Repeating this procedure for $O(d\log n)=\tilde{O}(1)$ iterations on the current augmented graph obtains the desired $d$-shortcut.

We note that \Cref{thm:dshortcut} calls for a more delicate argument because we insist on the exact diameter bound $d$; its specialized $d=3$ construction requires additional care.

\vspace{-10pt}
\subsubsection{$d$-Shortcuts for $d=\Omega(\log^2 n)$}\label{sec:large-d-overview}
The goal is to compute $d$-shortcuts using $\tilde{O}\!\left(\frac{T^{\omega/2}}{d^{\omega-1}}+m\right)$ work and $\tilde{O}(d)$ depth. Notably, using the current bound on $\omega$, this already breaks the $\sqrt{n}$-depth barrier for SSR computation in dense graphs with $\Theta(n^2)$ edges, yielding an improved depth of $\tilde{O}(n^{0.27})$ and near-linear work.

The construction is more intricate than that of Section~\ref{sec:small-d-overview}, with the main challenge lying in the parallel implementation rather than the sequential design. We therefore focus on highlighting the key obstacles and the ideas used to overcome them. Our central task is a diameter-halving phase: we compute a shortcut set $H$ such that every pair at distance exactly $d$ becomes at most $d/2$ apart in $G\cup H$. Repeating this phase $O(\log n)$ times on the current augmented graph yields the desired diameter bound. For ease of exposition, we assume $\omega = 2$ and that $G$ is a DAG; however, our full construction does not rely on either assumption.

Our starting point is the observation that, to improve the running time from $T$ to $O(T/d)$, one must avoid computing the full transitive closure over the incoming neighborhoods of a sampled set of sources $S$ as we did in Sec. \ref{sec:small-d-overview}. Instead, we compute a weaker $d' = d/\log^2 n$ shortcut, which suffices to reduce the diameter by a constant factor. Concretely, we show that for an $n$-vertex graph, one can compute a $d'$-shortcut in time $O(n^2/d')$, which yields the desired $\tilde{\Omega}(d)$ improvement over the construction in Section~\ref{sec:small-d-overview}.

The sequential algorithm is particularly simple. We compute a topological ordering of $G$ and partition it into $d'/4$ consecutive blocks. For each block, we compute its full transitive closure, and return the union of these closures as the shortcut set. The diameter bound follows immediately. For the running time, note that each block has size $O(n/d')$, so computing its transitive closure takes $O((n/d')^{\omega}) = O((n/d')^2)$ time. Summing over all $d'$ blocks yields a total running time of $O(n^2/d')$. This modification captures the essential idea behind the improved sequential construction. In the remainder, we focus on its parallel implementation, where the main technical challenges arise.

Implementing the algorithm in $\tilde{O}(d)$ depth introduces several intertwined challenges. First, the layering approach used in our $\tilde{O}(1)$-shortcut construction is no longer applicable, as it incurs a quadratic dependence on $d$ in the depth.\footnote{For a layered graph of depth $O(d)$, the algorithm requires $d$ iterations, along with $O(d\log n)$ external repetitions.} Instead, we operate directly on the graph $G$, which raises the challenge of estimating $T$, a fundamental quantity in our approach. Moreover, the $d'$-shortcut construction relies on computing a topological ordering of the DAG. Both estimating $T$ and computing such an ordering can be reduced to solving $\tilde{O}(1)$ instances of SSR, which currently requires $\tilde{O}(\sqrt{n})$ depth by~\cite{LiuJS19}.

To facilitate our discussion, we introduce the following terminology. For an integer $k$, the \emph{$k$-incoming ball} of a vertex $u$ consists of all predecessors of $u$ within distance at most $k$ in $G$. This set can be computed with near-linear work and $\tilde{O}(k)$ depth by constructing an incoming BFS tree of depth $k$ rooted at $u$. The \emph{$k$-incoming TC-degree} of $u$ is defined as the size of its $k$-incoming ball. The notions of $k$-outgoing balls and TC-degrees are defined symmetrically.

\textbf{Idea 1: Restricting to $\tilde{O}(d)$-Balls.} Our framework for estimating TC-incoming and outgoing degrees is parameterized by $k = \tilde{O}(d)$. Let $T_k(G)$ denote the sum, over all vertices, of their $k$-incoming and $k$-outgoing TC-degrees. Using the size-estimation technique of \Cref{thm:size-estimation} from~\cite{Cohen97}, this quantity can be approximated with near-linear work and $\tilde{O}(k)$ depth. Since our goal is to shortcut every path of length $d$ to one of length $d/2$, it suffices to consider balls of radius $\tilde{O}(d)$.

However, this restriction introduces a subtle difficulty. As in our $\tilde{O}(1)$-shortcut algorithm, we iteratively handle vertices with sufficiently large $k$-incoming and $k$-outgoing TC-degrees. In the setting of Section~\ref{sec:small-d-overview}, we relied on the property that if a vertex $z$ on a path $P$ has large TC-outdegree, then all its predecessors on $P$ also have large TC-outdegree. In contrast, in the $k$-restricted setting, where we consider vertices with large $k$-outgoing balls, this monotonicity property no longer holds. We address this issue by dynamically adjusting (and in particular, decreasing) the parameter $k$ throughout the iterations of the algorithm.

\textbf{Idea 2: Directed Low Diameter Decomposition (LDD) \cite{BringmannFHL25,HaeuplerJS26}.} To overcome the challenge of computing a topological ordering in $\tilde{O}(d)$ depth, we employ the powerful low-diameter decomposition (LDD) framework of~\cite{BringmannFHL25,HaeuplerJS26}. In particular, the recent work of Haeupler, Jiang, and Saranurak~\cite{HaeuplerJS26} demonstrates how LDD can be leveraged to simplify parallel shortcut constructions. In our setting, LDD plays a crucial role in achieving the desired depth bound. Concretely, it yields a collection of subgraphs $G_1,\ldots,G_\ell$ with $\ell = O(\log n)$, where each $G_i$ is equipped with a topological ordering, and satisfies the following property: for every pair of vertices $u,v$ at distance $d$ in $G$, there exists an index $i$ such that $\dist_{G_i}(u,v) = \dist_G(u,v)$. This collection can be constructed using near-linear work and $\tilde{O}(d)$ depth. We then apply our shortcut procedure independently to each of these $O(\log n)$ subgraphs, thereby capturing all relevant length-$d$ paths.

The final algorithm combines the above ideas in $O(\log n)$ degree-threshold iterations. In every iteration, we apply the bounded-neighborhood shortcut procedure in both the incoming and outgoing directions, halve the degree threshold, and decrease the neighborhood radius by $d$. Along any shortest path of length $d$, the procedure shortcuts a prefix and a suffix and leaves a residual path in a lower-degree induced subgraph. After $O(\log n)$ iterations, the residual path is short enough that the entire path has length at most $d/2$. Repeating this phase on the current augmented graph produces the desired $d$-shortcut. The complete algorithm and its analysis appear in \Cref{sec:large-d-shortcut}.
%

\section{$d$-Shortcuts for $d=O(\log n)$}\label{sec:dheirarchy}

We start with the following lower bound of Hesse~\cite{Hesse03}, which supplies the shortcut-hard component in the proof of \Cref{thm:joint-hardness}.

\begin{theorem}[\cite{Hesse03,AbboudBP18}]\label{thm:hesse}
For every sufficiently small constant $\epsilon>0$ and every sufficiently large $n$, there exists a directed $n$-vertex graph with $n^{1+\epsilon}$ edges and diameter $n^{\delta}$, for $\delta=\Theta(\epsilon^3)$, that requires a shortcut set of $\Omega(n^{2-\epsilon})$ edges to reduce the diameter to $o(n^{\delta})$.
\end{theorem}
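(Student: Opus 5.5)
The plan is to reproduce Hesse's layered geometric construction, in the streamlined form of Abboud, Bodwin and Pettie. The argument has three ingredients: a carefully chosen set of "direction vectors", a family of \emph{critical paths} that are unique shortest paths between their endpoints, and a charging argument showing that each shortcut edge can help at most one critical path.

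\textbf{Construction.} Fix a dimension $k$ and a radius $r$, to be tuned as functions of $\epsilon$. Let $W\subseteq\mathbb{Z}^k$ be the set of integer points lying on a single sphere of radius about $r$ (alternatively, the extreme points of the convex hull of the integer points of the ball). Then $W$ is a strictly convex set of size roughly $r^{k-2}$. Build a layered graph with $L+1$ layers. Each layer is a copy of a grid $[N]^k$. For every $w\in W$, connect $x$ in layer $i$ to $x+w$ in layer $i+1$ whenever $x+w$ stays in the grid. The \emph{critical paths} are the sequences $(x, x+w, \ldots, x+Lw)$ that start in layer $0$ at a point $x$ in a central sub-box, so that the whole path stays in the grid.

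\textbf{Key structural lemma (uniqueness).} I will prove that if $x$ in layer $i$ and $y$ in layer $j$ both lie on a critical path with direction $w$, then $y-x=(j-i)w$, and every $x$-$y$ path in the graph is exactly that critical subpath. The reason is that any such path has length exactly $j-i$, since the graph is layered. Its increments $w_1,\dots,w_{j-i}\in W$ sum to $(j-i)w$. Strict convexity of the sphere then forces all $w_t=w$, because the average of the $w_t$ equals $w$, which is an extreme point.

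A second consequence follows: the pair $(x,y)$ determines $w$ and the path uniquely. Indeed, $(j-i)w=(j'-i')w'$ together with $|w|=|w'|$ forces the two directions to coincide.

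\textbf{Charging.} Let $H$ be a shortcut set that brings every critical path's endpoints to distance $o(L)$. In $G\cup H$, the new short path for a critical path $P$ must use at least one edge $(a,b)\in H$ that jumps over at least two layers. By the uniqueness lemma, $a$ and $b$ must both lie on $P$: any $x$-$y$ walk in $G\cup H$ projects to a sequence of $G$-paths, and the set of vertices reachable from $x$ that can still reach $y$ is exactly $V(P)$. Also by uniqueness, the pair $(a,b)$ determines $P$. Hence $|H|$ is at least the number of critical paths, which is roughly $N^k\cdot|W|$.

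\textbf{Parameters.} Set $n\approx L N^k$, $m\approx n|W|$ and diameter $L$. Choosing $|W|=n^{\epsilon}$ and $L=n^{\delta}$, with $N$ large relative to $Lr$ so that a constant fraction of start points admit full paths, the number of critical paths is $\Omega(n^{1+\epsilon}/L)=\Omega(n^{2-\epsilon})$ once $L\le n^{2\epsilon-1}$. This clearly fails as stated, so the balancing must instead use $k$ large. I expect this to be the main obstacle: making $|W|$ large relative to $N$ (we need $|W|\approx N^{k}/L\cdot n^{-\epsilon}$ paths per layer) while keeping $Lr\ll N$. Hesse resolves this by taking $k=\Theta(1/\epsilon)$ and $r$ polynomially related to $N$. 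The tension between $|W|\approx r^{k-2}$ and the requirement $Lr\le N$ is exactly what yields $\delta=\Theta(\epsilon^3)$. I would carry out this optimization last, first verifying the uniqueness lemma, which is routine, and the charging bound.
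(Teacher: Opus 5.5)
The paper does not prove this statement; it imports it from Hesse and Abboud--Bodwin--Pettie. Your proposal therefore has to stand on its own, and it does not. Your uniqueness lemma and the charging step are correct. In your construction, however, they can never certify more than about $m$ shortcut edges, and this is a structural problem, not a matter of tuning $k$ and $r$.

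Every critical path is determined by its first edge, i.e., its start $x$ in layer $0$ and its direction $w$. So the critical paths are pairwise edge-disjoint, and there are at most $N^k|W|\approx m/L=n^{1+\epsilon}/L$ of them, for every choice of $k,r,N$. The edge budget fixes $|W|\approx m/n=n^{\epsilon}$, so the count your ``main obstacle'' paragraph tries to balance does not depend on $k$ at all.

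Your charging gives $|H|\geq N^k|W|$. Even the stronger per-path argument gives only $\Omega(m/d)$: every sub-pair of a critical path must be shortened, so each path needs $\Omega(L/d)$ private shortcut edges when the target diameter is $d$. More generally, any argument that charges private shortcut edges to edge-disjoint critical paths of total length at most $m$ is capped at $O(m\log n)$. This is because a single path of length $L$ can be shortcut to diameter $2$ with $O(L\log L)$ edges. Since $n^{2-\epsilon}\gg n^{1+\epsilon}$ for small $\epsilon$, this route cannot prove the theorem.

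The missing idea is a family of critical paths that overlap heavily, with total length exceeding $m$ by a polynomial factor. At the same time, any two vertices that are far apart in layer index must still lie on at most one critical path. One concrete way to get this, which also explains the $\epsilon^3$, works as follows.

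\begin{itemize}
\item Use $t$ coordinate blocks, each a copy of $\mathbb{Z}^k$ with its own strictly convex direction set $W$. Let the transition from layer $i$ to layer $i+1$ change only block $i \bmod t$.
\item A critical path is then indexed by a start point and a tuple $(w_1,\ldots,w_t)\in W^t$. There are about $(n/L)|W|^t$ critical paths, while $m\approx n|W|$.
\item Your convexity argument, applied blockwise, shows that two vertices of a critical path at least $t$ layers apart determine the whole tuple, and hence the path. So shortcut edges spanning at least $t$ layers are private.
\item Shortcut edges spanning fewer than $t$ layers may be shared, but they shorten a path by at most a factor of $t$.
\end{itemize}

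Taking $|W|\approx n^{\epsilon}$ and $t,k=\Theta(1/\epsilon)$ makes the number of critical paths about $n^{2-\epsilon}$. The vertex count is $n\approx L\,N^{kt}$ with $N\gtrsim Lr/t$, so the construction loses a factor $L^{\Theta(kt)}=L^{\Theta(1/\epsilon^2)}$. This factor must fit inside an $n^{O(\epsilon)}$ slack, which is exactly where $\delta=\Theta(\epsilon^3)$ comes from. Without some such overlap device, your argument does not reach $\Omega(n^{2-\epsilon})$.
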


\paragraph{Proof overview of \Cref{thm:joint-hardness}.}
The construction combines three components. A Hesse graph from \Cref{thm:hesse} forces every $O(\log n)$-shortcut to contain $T^{1-o(1)}$ edges, an ABFK hard instance gives the conditional $\mathsf{B}(n,T)^{1-o(1)}$ transitive-closure lower bound, and a directed path adjusts the closure size to $\Theta(T)$. The components are joined only through outgoing edges from a new source, preserving both hardness properties without creating many additional closure edges. See \Cref{app:joint-hardness} for the complete proof, including the choice of $\epsilon$.

In the following, we first present the sequential algorithms underlying this hierarchy and then give the parallel implementation that yields our key result, \Cref{thm:dshortcut}.

\vspace{-10pt}
\subsection{Sequential Construction}

At a high level, the algorithm performs $O(d)$ \emph{peeling steps}, one for each degree threshold. In step $i$, it finds a predecessor-closed set $V_i$ containing every vertex whose TC-outdegree in the current graph $G_i$ meets the threshold, adds the induced closure $TC(G_i)[V_i]$ to the shortcut, and continues on the reversed residual graph $(G_i\setminus V_i)^{\mathsf R}$. Along any directed path in $G_i$, the vertices of $V_i$ form a prefix, which the induced closure compresses to one edge. Reversing the residual graph converts its TC-outdegree bound into the TC-indegree bound needed at the next step. The alternating reversals leave one final boundary between path blocks, which is handled by a separate set of stitching edges.

The sets $V_i$ are found using \emph{balanced TC-hitting sets}; see \Cref{def:TChitset}. Informally, such a set $S$ intersects every sufficiently large predecessor and successor set while having bounded total transitive-closure degree.

\begin{claim}\label{cl:computeTC-hit-set}
There is a randomized near-linear algorithm $\TCHittingSet$ that given an $n$-vertex graph $G$ and parameter $\Delta$, w.h.p. outputs a $\Delta$-balanced TC-hitting set $S \subseteq V(G)$ in near-linear time. The algorithm can be implemented in parallel with the same time bound and $\tilde{O}(D(G))$ depth.
\end{claim}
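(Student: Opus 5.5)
We will construct $S$ by independent random sampling, where each vertex's sampling probability depends on its (estimated) TC-degree. First we apply \Cref{thm:size-estimation} to both $G$ and $G^{\mathsf R}$ with a fixed constant $\epsilon$. This gives, w.h.p., upper estimates $\widetilde{d}^{out}(v)$ and $\widetilde{d}^{in}(v)$ that are within constant factors of $\TCOutDeg(v)$ and $\TCInDeg(v)$. It also gives the estimate $T$ with $|TC(G)|\le T\le O(|TC(G)|)$. This step uses $\tilde O(m)$ work and $\tilde O(D(G))$ depth, which is where the depth bound in the claim comes from. Write $\widetilde{\mathsf{deg}}(v)=\widetilde{d}^{in}(v)+\widetilde{d}^{out}(v)$.

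Next, each vertex $v$ is included in $S$ independently with probability
\[
p_v=\min\Big\{1,\; \frac{c\log n}{\Delta}\cdot\frac{T}{n\,\widetilde{\mathsf{deg}}(v)}\Big\}.
\]
This is an inverse-degree sampling rule: vertices of large TC-degree are sampled less often. The sampling step is trivially parallel with $O(1)$ depth. Its expected cost is
\[
\mathbb E\Big[\sum_{s\in S}\TCDeg(s)\Big]\le\sum_v p_v\cdot O(\widetilde{\mathsf{deg}}(v))\le O\Big(\frac{T\log n}{\Delta}\Big),
\]
since each of the $n$ terms contributes at most $O(T\log n/(n\Delta))$. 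This is item~2 of \Cref{def:TChitset} in expectation. Because this sum is not concentrated in general, we instead repeat the sampling $O(\log n)$ times in parallel. By Markov's inequality, each trial exceeds twice the expectation with probability at most $1/2$. Among the trials that pass the hitting test below, we keep one whose cost (computed from the estimates) is small.

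The main obstacle is the hitting property, item~1. Take a vertex $v$ with $\TCOutDeg(v)\ge\Delta$. We need $\sum_{x\in\OutSucc{}{v}}p_x\ge c'\log n$. With inverse-degree weights this is not automatic, because the successors of $v$ could all have huge TC-degree. We would therefore use a mixed rule: sample with probability $p_v'=\max\{p_v,\, q_v\}$, where $q_v$ comes from uniform sampling at rate $\Theta(\log n/\Delta)$ restricted to vertices with $\widetilde{\mathsf{deg}}(v)\le T/\Delta$. Uniform sampling at rate $\Theta(\log n/\Delta)$ over all vertices would give the standard hitting argument: each set of size $\ge\Delta$ is hit with probability $1-n^{-c}$, and a union bound over the $2n$ sets finishes. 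So the crux is to charge the cost of vertices of degree above $T/\Delta$ that are needed for hitting.

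Here the structure of TC helps. Suppose $x\in\OutSucc{}{v}$ has $\TCOutDeg(x)$ large. Then $\OutSucc{}{x}\subseteq\OutSucc{}{v}$, so we can instead hit $v$ through $x$'s successors. We would try to pick, for each large set, a vertex of nearly minimum TC-degree inside it. Alternatively, we could argue that the number of vertices with $\TCDeg\ge T/\Delta$ is at most $2\Delta$, because degrees sum to $2T$. If needed, such vertices are cheap to hit collectively only in special cases, so we fall back to the bound $\sum_{s}\TCDeg(s)\le |S|\cdot\max$. Resolving this tension is the step I expect to require the most care; concretely, I would first try showing that uniform sampling at rate $\log n/\Delta$ already gives expected cost $\sum_v \TCDeg(v)\log n/\Delta=O(T\log n/\Delta)$. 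This in fact holds directly by linearity, so uniform sampling plus the Markov-and-repeat step should suffice, with Cohen's estimates used only to verify the cost bound and to obtain $T$.

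Finally, the hitting condition is checked in parallel: a multi-source reachability from $S$ in $G$ and in $G^{\mathsf R}$ takes $\tilde O(D(G))$ depth and near-linear work. Each vertex whose estimated degree is at least $\Delta$ then confirms that it is hit, and we output a trial satisfying both properties. W.h.p.\ such a trial exists.
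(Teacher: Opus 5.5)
Your final route is correct and is exactly the paper's proof: uniform sampling at rate $\min\{1,\Theta(\log n/\Delta)\}$, hitting via a union bound over the $2n$ successor and predecessor lists, expected cost $O(T\log n/\Delta)$ by linearity, Markov's inequality over $\Theta(\log n)$ independent candidates, and selection of the cheapest candidate using Cohen's constant-factor estimates. You should drop the inverse-degree and mixed sampling detour, since inverse-degree weights alone do not guarantee hitting, and the reachability-based hitting test is also unnecessary, because the union bound already makes every candidate a valid hitting set with high probability.
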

\begin{proof}
Compute $k=\Theta(\log n)$ vertex subsets $S_1,\ldots,S_k$, where each $S_j$ is obtained by sampling every vertex in $V(G)$ independently with probability $p=\min\{1,c\log n/\Delta\}$, for a sufficiently large constant $c$. For a fixed candidate and a fixed successor or predecessor list of size at least $\Delta$, the probability of missing the list is at most $(1-p)^\Delta\leq e^{-p\Delta}\leq n^{-c}$ when $p<1$, and is zero when $p=1$. A union bound over the $k$ candidates and the at most $2n$ lists shows that, with high probability, every candidate simultaneously satisfies property~(1) of \Cref{def:TChitset}. We next show that property~(2) holds with constant probability for each $S_j$. Consequently, with high probability at least one of the $k$ sets satisfies both properties.
In expectation over the sampling of $S_j$, we have $\sum_{s \in S_j} \left(\TCInDeg(s)+\TCOutDeg(s)\right)=p \cdot 2T=2p\cdot T$. 
By Markov's inequality, the probability that $\sum_{s \in S_j} \left(\TCInDeg(s)+\TCOutDeg(s)\right)\geq 4p\cdot T$ is at most $1/2$. Therefore, with high probability there exists a set $S_j$ satisfying property~(2). Apply \Cref{thm:size-estimation} to both $G$ and $G^{\mathsf R}$ to obtain constant-factor estimates of the TC-outdegrees and TC-indegrees, respectively. These estimates allow us to select a set $S_j$ whose total estimated TC-degree is minimum; with high probability, its actual total TC-degree is $O(pT)=\widetilde O(T/\Delta)$. The running time is near-linear in $|E(G)|$.

For the parallel implementation, by \Cref{thm:size-estimation}, it holds that the TC-indegrees and TC-outdegrees estimation can be done in near-linear work and $\tilde{O}(D(G))$ depth, which dominates the time and depth bounds of the algorithm. 
\end{proof}

Our key tool is an algorithm that, given a graph with bounded TC-indegree, computes the transitive-closure edges between all vertices in $V'$ having sufficiently large TC-outdegree. The algorithm does not explicitly compute $V'$; instead, it computes a subgraph of the transitive closure that contains the desired $V' \times V'$ edges.

\begin{theorem}\label{thm:TC-large-deg}
There is an algorithm $\TCLargeDeg$ that, given an $n$-vertex $m$-edge digraph $G$ with $T=|TC(G)|$, maximum TC-indegree at most $\DeltaIn$, and an integer $1\leq\DeltaOut\leq T$, computes a vertex set $U$ and the induced closure $TC'=TC(G)[U]$. The set $U$ contains every vertex whose TC-outdegree is at least $\DeltaOut$; in particular, $TC'$ has diameter at most $1$. The runtime is
$$\tilde{O}\left(m+\min\left\{\frac{T}{\DeltaOut}, \DeltaIn\right\}^{\omega-1} \cdot \frac{T}{\DeltaOut}\right).$$ 
\end{theorem}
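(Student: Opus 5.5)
The plan is to follow the overview in \Cref{sec:small-d-overview}, restricted to one direction. Set $\Delta=\DeltaOut$ and compute a $\Delta$-balanced TC-hitting set $S$ of $G$ using $\TCHittingSet$ (\Cref{cl:computeTC-hit-set}); this costs $\tilde O(m)$. Property~(2) of \Cref{def:TChitset} gives $\sum_{s\in S}\TCInDeg(s)=\tilde O(T/\DeltaOut)$. Let $U=\bigcup_{s\in S}\InPred{G}{s}$. By property~(1), every vertex $v$ with $\TCOutDeg(v)\ge\DeltaOut$ has some $s\in S\cap\OutSucc{G}{v}$, so $v\in U$. The set $U$ is predecessor-closed, so $TC(G)[U]=\bigcup_{s\in S}TC(G[\InPred{G}{s}])\cup\{\text{pairs across different }s\}$. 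Here some care is needed. A pair $(x,y)$ with $x\in\InPred{G}{s}$, $y\in\InPred{G}{s'}$ and $x\leadsto y$ has $x\in\InPred{G}{s'}$ as well, since predecessors are transitive. Its witnessing path lies entirely inside $\InPred{G}{s'}$. Hence
\[
TC(G)[U]=\bigcup_{s\in S}TC\bigl(G[\InPred{G}{s}]\bigr),
\]
and the union is exact. The diameter-$1$ claim then follows immediately.

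\textbf{Computing the pieces.} For each $s\in S$, run a backward BFS from $s$ to obtain $I_s=\InPred{G}{s}$ and the induced subgraph $G[I_s]$. Then compute $TC(G[I_s])$ by repeated squaring of the $|I_s|\times|I_s|$ reachability matrix, which takes $\tilde O(|I_s|^\omega)$ time. Finally, union the results, removing duplicates by hashing or sorting; the output size is at most $\sum_s|I_s|^2$, which the main term dominates.

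\textbf{Bounding the work.} Each $|I_s|=\TCInDeg(s)$ is at most $\DeltaIn$, and also at most $\sum_{s}|I_s|=\tilde O(T/\DeltaOut)$. So each $|I_s|\le M:=\tilde O(\min\{T/\DeltaOut,\DeltaIn\})$, and
\[
\sum_s|I_s|^\omega\le M^{\omega-1}\sum_s|I_s|=\tilde O\!\left(\min\{T/\DeltaOut,\DeltaIn\}^{\omega-1}\cdot T/\DeltaOut\right).
\]
The BFS costs are the main obstacle. A BFS from $s$ explores every edge inside $G[I_s]$, and that count is at most $|I_s|^2$, so the BFS cost is also dominated by $\sum|I_s|^\omega$ since $\omega\ge2$. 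An edge $(x,y)$ with $y\in I_s$ forces $x\in I_s$, so a backward search touches only edges with both endpoints in $I_s$. The remaining cost is adjacency access, which needs in-adjacency lists; building them costs $O(m)$ once.

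One more point needs checking. The sampled candidate sets are chosen using only degree estimates, so property~(2) holds only up to constants, which is fine. If $\DeltaOut$ is so small that $p=1$, then $S=V$ and the bound reads $\min\{T,\DeltaIn\}^{\omega-1}T$, still consistent with the claim.
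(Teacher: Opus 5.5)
Your proposal is correct and follows essentially the same proof as the paper. Both compute a $\DeltaOut$-balanced hitting set, take $U=\bigcup_{s\in S}\InPred{G}{s}$, prove exactness of $\bigcup_s TC(G[\InPred{G}{s}])$ by placing any witnessing path inside the predecessor set of a hitting vertex reached by its endpoint, and bound the cost by $|I_s|^2$ per backward BFS plus $\sum_s|I_s|^\omega\le\min\{T/\DeltaOut,\DeltaIn\}^{\omega-1}\sum_s|I_s|$.
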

The algorithm is used only in this parameter range. If $\DeltaOut>T$, no vertex qualifies, and the empty output handles the degenerate case.

\paragraph{Algorithm $\TCLargeDeg$.} The algorithm has two steps. The first step applies Alg. $\TCHittingSet$ of \Cref{cl:computeTC-hit-set} to compute a $\DeltaOut$-TC hitting set $S$. The second step computes the transitive closure edges between each pair of vertices in $\InPred{G}{s}$ for each $s \in S$, namely, it computes the transitive closure of the graphs $G[\InPred{}{s}]$ for every $s \in S$. 
This is done as follows. For every $s\in S$, compute a directed incoming BFS tree rooted at $s$ in $G$, thereby obtaining $\InPred{}{s}$. Apply the naive matrix-multiplication-based TC computation to $G[\InPred{}{s}]$, and return $TC'=\bigcup_{s \in S} TC(G[\InPred{}{s}])$ and $U=V(TC')$.
This completes the description of the algorithm. 

\begin{proof}[Proof of \Cref{thm:TC-large-deg}]
\noindent \textbf{Correctness.} By reflexivity, $U=\bigcup_{s\in S}\InPred{G}{s}$. We first prove that $TC'=TC(G)[U]$. The containment $TC'\subseteq TC(G)[U]$ is immediate. Conversely, consider $u,v\in U$ with $(u,v)\in TC(G)$. Choose $s\in S$ such that $v\in\InPred{G}{s}$. Every vertex on a $u$-to-$v$ path reaches $v$ and then $s$, so the entire path lies in $G[\InPred{G}{s}]$. Hence $(u,v)\in TC(G[\InPred{G}{s}])\subseteq TC'$. This proves the equality and, in particular, $D(TC')\leq1$.

It remains to show that $U$ contains every vertex $v$ with $\TCOutDeg(v,G)\geq\DeltaOut$. By \Cref{cl:computeTC-hit-set}, the set $S$ hits $\OutSucc{G}{v}$. Thus some $s^*\in S\cap\OutSucc{G}{v}$ is reachable from $v$, which means $v\in\InPred{G}{s^*}\subseteq U$.

\noindent \textbf{Runtime.} By \Cref{cl:computeTC-hit-set}, $S$ is a $\DeltaOut$-TC hitting set and therefore, $\sum_{s \in S}\TCInDeg(s)=\tilde{O}(T /\DeltaOut)$. 
The set $\InPred{G}{s}$ can be computed in $O(|\InPred{G}{s}|^2)$ time: a reverse BFS scans an edge $(x,y)$ only after reaching $y$, and then both $x$ and $y$ belong to $\InPred{G}{s}$. Since the graph is simple, there are at most $|\InPred{G}{s}|^2$ such edges. Computing $TC(G[\InPred{}{s}])$ takes $(\TCInDeg(s))^\omega$ time. Hence, the total time for these $|S|$ transitive-closure computations is

\begin{eqnarray*}
\sum_s (\TCInDeg(s))^{\omega} &\leq& \min\{\sum_{s\in S} \TCInDeg(s), \DeltaIn\}^{\omega-1} \cdot \sum_s \TCInDeg(s)
\\&\leq& \tilde{O}\left(\min\left\{\frac{T}{\DeltaOut}, \DeltaIn\right\}^{\omega-1}\cdot \frac{T}{\DeltaOut}\right)~.
\end{eqnarray*}

\end{proof}

The following lemma will be useful in our arguments.

\begin{lemma}\label{lem:close-P}
Let $TC'$ be the output of algorithm $\TCLargeDeg$ for some input values of $\DeltaIn$ and $\DeltaOut$. Then, for any path $P$, the graph $P \setminus V(TC')$ is a possibly empty contiguous subpath of $P$. 
\end{lemma}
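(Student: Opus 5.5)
The plan is to show that $U=V(TC')$ is \emph{predecessor-closed} in $G$, i.e.\ whenever $v\in U$ and $(u,v)\in TC(G)$, also $u\in U$. Together with the standing assumption that $P$ is a path in $G$, this gives the claim at once.

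First, I would recall from the proof of \Cref{thm:TC-large-deg} that, by reflexivity, $U=\bigcup_{s\in S}\InPred{G}{s}$. Each set $\InPred{G}{s}$ is predecessor-closed by transitivity: if $u$ reaches $v$ and $v$ reaches $s$, then $u$ reaches $s$. A union of predecessor-closed sets is again predecessor-closed, so $U$ is predecessor-closed.

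Next, let $P=(x_1,\ldots,x_k)$ be a directed path in $G$, so $(x_i,x_j)\in TC(G)$ for all $i\le j$. Suppose $x_j\in U$. Then every earlier vertex $x_i$ with $i\le j$ reaches $x_j$, and hence lies in $U$. Therefore $P\cap U$ is a prefix $\{x_1,\ldots,x_t\}$ for some $t\in\{0,\ldots,k\}$, where $t$ is the largest index with $x_t\in U$. Consequently $P\setminus V(TC')=(x_{t+1},\ldots,x_k)$ is a suffix of $P$. This suffix is contiguous, and it is empty when $t=k$.

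I expect no real obstacle. The only point needing care is the interpretation of ``path $P$'': it must be a path in $G$, or at least in a graph whose reachability is contained in $TC(G)$, for the argument to apply. The reachability relation of $G\cup H$ coincides with $TC(G)$ for any $H\subseteq TC(G)$, so such augmented paths are covered as well. I would state this explicitly, and note that by symmetry, applying the algorithm to $G^{\mathsf R}$ yields a contiguous prefix instead of a suffix.
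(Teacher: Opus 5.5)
Your proposal is correct and follows essentially the paper's argument. The paper likewise takes the last vertex $u'$ of $P$ lying in $V(TC')$ and picks $s\in S$ with $u'\in\InPred{G}{s}$. It then concludes that every earlier vertex of $P$ lies in $\InPred{G}{s}\subseteq V(TC')$, so only a suffix survives. Your phrasing via predecessor-closedness of $U=\bigcup_{s\in S}\InPred{G}{s}$ is the same idea, and the paper states it in exactly that form later, in the proof of \Cref{lem:diameter-monotonicity}.
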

\begin{proof}
Let $S$ be the $\DeltaOut$-TC hitting set for $G$ and recall that $TC'=\bigcup_{s \in S} TC(G[\InPred{G}{s}])$. Fix a $u$-$v$ path $P$ in $G$. If $P\cap V(TC')=\emptyset$, the claim is immediate. Otherwise, let $u'$ be the last vertex on $P$ (closest to $v$) that belongs to $V(TC')$. We show that every vertex $w$ on the segment $P[u,u']$ belongs to $V(TC')$. Choose $s\in S$ such that $u'\in\InPred{G}{s}$. Since $w$ reaches $u'$ and $u'$ reaches $s$, we have $w\in\InPred{G}{s}\subseteq V(TC')$. Thus deleting $V(TC')$ leaves precisely a suffix of $P$, possibly empty.
\end{proof}

We also need the following primitive: 
\begin{lemma}\label{lem:naive-comb-small-deg}
There is an algorithm $\TCNaiveComb$ that, given a graph $G$ with $T=|TC(G)|$ and maximum TC-indegree at most $\Delta$, computes $TC(G)$ in time $\tilde{O}(T \cdot \Delta)$. 
\end{lemma}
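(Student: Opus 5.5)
We compute $TC(G)$ by a combinatorial dynamic-programming pass over the strongly connected component structure, charging each operation to a closure pair $(x,v)$ together with one in-neighbor of $v$.

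\textbf{Step 1: reduce to a DAG.} Compute the strongly connected components of $G$ in linear time and form the condensation $\hat G$. For each component $C$, all vertices of $C$ share the same predecessor set. We therefore compute $\InPred{G}{v}$ once per component and copy it to every vertex of that component. Copying costs $O(|C|\cdot\TCInDeg(v))$ per component, which sums to $O(T)$ overall. Within $\hat G$, the predecessor set of a component node, measured as a set of original vertices, has size at most $\Delta$.

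\textbf{Step 2: dynamic programming in topological order.} Process the component nodes of $\hat G$ in topological order. For a component $C$, set
\[
\InPred{}{C}=C\cup\bigcup_{(x,y)\in E,\ y\in C,\ x\notin C}\InPred{}{\mathrm{comp}(x)}.
\]
This is correct because any predecessor of $C$ lying outside $C$ reaches $C$ through an entering edge $(x,y)$, and the predecessor set of $\mathrm{comp}(x)$ has already been computed. We form the union with a global marker array indexed by vertex, which is reset after each component. The cost for $C$ is therefore $O(|C|+\sum_{(x,y)\text{ entering }C}|\InPred{}{\mathrm{comp}(x)}|)$.

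\textbf{Step 3: the charging argument (main step).} Each entering edge $(x,y)$ contributes at most $\Delta$ work, since $|\InPred{}{\mathrm{comp}(x)}|=\TCInDeg(x)\le\Delta$. The total work is thus $O(n+m\Delta)$. Because $m\le T$ under the paper's convention, this gives $O(T\Delta)$.

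A finer accounting can charge each of the $\TCInDeg(x)$ units to the pair $(x,y)$. This is still only bounded by $m\Delta$ and not by something smaller, so the bound $\tilde O(m\Delta)\le\tilde O(T\Delta)$ is what we use. The one point needing care is that parallel edges have been discarded and that $m$ counts the implicit self-loops, which the preliminaries guarantee.

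Finally, the output $TC(G)=\{(u,v):u\in\InPred{}{v}\}$ is written out explicitly in $O(T)$ time, so the total running time is $\tilde O(T\Delta)$. The recurrence on the condensation is the only place that needs checking, and it follows directly from path decomposition at the last edge entering the target's component.
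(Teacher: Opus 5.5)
Your proof is correct, but it takes a different route from the paper. The paper runs an independent incoming BFS from every vertex $v$. That search only scans edges with both endpoints among the predecessors of $v$, and the graph is simple, so it costs $O(\TCInDeg(v)^2)$. Summing gives $\sum_v \TCInDeg(v)^2\leq \Delta\sum_v\TCInDeg(v)=\Delta T$; the paper phrases this step through dyadic degree classes. You instead condense SCCs and run the standard predecessor-set DP in topological order, charging each entering edge $(x,y)$ at most $\TCInDeg(x)\leq\Delta$. This gives a deterministic and slightly sharper bound $O(m\Delta+T)$. Your recurrence is valid because a path cannot leave an SCC and later re-enter it.

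What the paper's route buys is independence across target vertices. All the BFS computations can run simultaneously with depth $\widetilde O(D(G))$, which is exactly how $TC_0$ is computed in the proof of \Cref{lem:dShortcut-parallel}. The per-vertex $O(\TCInDeg(v)^2)$ cost is also reused in Step~1 of $\ThreeShortcut(G)$, where predecessor and successor sets are computed only for the low-degree vertices. Your DP has sequential dependencies along the topological order, so its depth is the longest path in the condensation. That can be $n-1$ even when $D(G)=1$, for instance in a transitive tournament. It also first needs an SCC decomposition, which in parallel is at least as hard as single-source reachability. So your argument fully proves the lemma as stated, but it would not support the paper's later parallel use of $\TCNaiveComb$ without modification.
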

\begin{proof}
The algorithm simply computes an incoming BFS tree $B_v$ from each vertex $v \in V(G)$ in $G$. The output is given by $TC(G)=\{(u,v) ~\mid~ u \in B_v, v\in V\}$. The correctness is immediate and we bound the runtime. For every vertex $v$ with TC-indegree $\ell$, the computation takes $O(\ell^2)$ time. Partition the vertices into $O(1+\log \Delta)$ classes, where $V_j$ contains the vertices whose TC-indegree lies in $[2^j,2^{j+1})$, for $j=0,\ldots,\lfloor\log_2\Delta\rfloor$. We then have that $|V_j|\leq T/2^j$ and hence computing the incoming BFS trees from each vertex in $V_j$ can be done in time $O(|V_j|\cdot 2^{2j+2})=O(2^j \cdot T)$. Summing over all classes, the total computation time is $\tilde{O}(T \cdot \Delta)$, as required.
\end{proof}

\paragraph{Algorithm $\dShortcut$ (Even $d$).} Throughout, let $T$ be the constant-factor upper estimate of $|TC(G)|$ described after \Cref{thm:size-estimation}.
The algorithm has two phases. The first phase works in $d/2$ iterations, for $i \in \{d/2,\ldots, 1\}$. In every iteration $i$, the input is a graph $G_i$ with maximum TC-indegree at most $\DeltaInInd_i$ and TC-outdegree threshold $\DeltaOutInd_i$, and the output is a graph $G_{i-1}$ whose edge directions are reversed. Initially $G_{d/2}=G$ and $\DeltaInInd_{d/2}=T$, and for any $i \in \{d/2,\ldots, 1\}$: 

\begin{equation}\label{eq:Delta}
\DeltaOutInd_i=T^{i/(d+1)} \mbox{~and~} \DeltaInInd_{i-1} =\DeltaOutInd_{i}.
\end{equation}
Rounding these thresholds to integers affects the bounds only by constant factors.

The algorithm then applies $\TCLargeDeg$ with inputs $G_i$, $\DeltaInInd_i$, and $\DeltaOutInd_i=T^{i/(d+1)}$. Letting $TC_i$ be the output, define $G_{i-1}=(G_i \setminus V(TC_i))^{\mathsf R}$. 

\begin{wrapper}\vspace{-2pt}
\begin{center}\textbf{Algorithm $\dShortcut(G,d)$}
\end{center}
\begin{enumerate}
\item $G_{d/2}=G$.
\item Use \Cref{thm:size-estimation} to compute an estimate $T$ to $|TC(G)|$.
\item For $i \in \{d/2,\ldots, 1\}$ do:
\begin{enumerate}
\item Set $\DeltaInInd_{i}$ and $\DeltaOutInd_i$ as in Eq. (\ref{eq:Delta}).
\item $TC_i=\TCLargeDeg(G_i, \DeltaInInd_{i}, \DeltaOutInd_{i})$. 
\item $G_{i-1}=(G_i\setminus V(TC_i))^{\mathsf{R}}$.
\end{enumerate}
\item $TC_0=\TCNaiveComb(G_{0},\DeltaInInd_{0})$.
\item For every $v \in V(G_1)$: $E_{0}(v)=\{(v,w) ~\mid~ \exists u \in N_{out}(v,G_1)\cap V(G_0) \mbox{~and~} (w,u) \in TC_0\}$. 
\item If $d/2$ is even, $E_{1 \to 0}=\bigcup_{v \in V(G_1)} E^{\mathsf{R}}_{0}(v)$ and otherwise, 
$E_{1 \to 0}=\bigcup_{v \in V(G_1)} E_{0}(v)$.
\item Set $I_{even}=\{i ~\mid~ (d/2-i) \mbox{~is even}\}$ and $I_{odd}=\{0,\ldots, d/2\}-I_{even}$.
\item Return $\bigcup_{i \in I_{even}} TC_i \cup \bigcup_{i \in I_{odd}}(TC_i)^{\mathsf{R}} \cup E_{1 \to 0}$.
\end{enumerate}
\end{wrapper}

\noindent \textbf{Correctness.} We first show that $H$ is a subset of the transitive closure, and then give the diameter argument. Let $V_i=V(TC_i)$ for every $i \in \{0,\ldots, d/2\}$. 

\begin{lemma}\label{lem:containment}
$H \subseteq TC(G)$. 
\end{lemma}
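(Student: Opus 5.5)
The plan is to track, by induction on the iteration index, whether each intermediate graph $G_i$ is a subgraph of $G$ or of $G^{\mathsf R}$. Once that is known, each piece of the returned set $H$ can be checked separately. The only input needed from earlier results is that each piece produced by $\TCLargeDeg$ and $\TCNaiveComb$ is contained in the transitive closure of the graph it was computed on. Specifically, $TC_i=TC(G_i)[U]\subseteq TC(G_i)$ by \Cref{thm:TC-large-deg}, and $TC_0=TC(G_0)$ by \Cref{lem:naive-comb-small-deg}.

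\textbf{Step 1 (parity invariant).} I would prove by downward induction on $i$ that $G_i$ is a subgraph of $G$ when $d/2-i$ is even, and a subgraph of $G^{\mathsf R}$ when $d/2-i$ is odd. The base case is $G_{d/2}=G$. For the inductive step, $G_{i-1}=(G_i\setminus V(TC_i))^{\mathsf R}$. Deleting vertices keeps a subgraph of the same orientation, and reversal flips the orientation, which changes the parity of $d/2-i$ by one. Two facts then follow. A subgraph $G'\subseteq G$ satisfies $TC(G')\subseteq TC(G)$. A subgraph $G'\subseteq G^{\mathsf R}$ satisfies $TC(G')^{\mathsf R}\subseteq TC(G^{\mathsf R})^{\mathsf R}=TC(G)$. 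Hence $TC_i\subseteq TC(G)$ for $i\in I_{even}$ and $(TC_i)^{\mathsf R}\subseteq TC(G)$ for $i\in I_{odd}$, and this includes $i=0$. Reflexive pairs $(v,v)$ cause no trouble, since they lie in $TC(G)$ by convention.

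\textbf{Step 2 (stitching edges).} This is the only step needing a small path argument. Take $(v,w)\in E_0(v)$ with $v\in V(G_1)$. Then there is some $u\in N_{out}(v,G_1)\cap V(G_0)$ with $(w,u)\in TC_0=TC(G_0)$. Since $G_0=(G_1\setminus V_1)^{\mathsf R}$, a $w$-to-$u$ path in $G_0$ reverses to a $u$-to-$w$ path in $G_1\setminus V_1\subseteq G_1$. Prepending the edge $(v,u)\in G_1$ gives a $v$-to-$w$ path in $G_1$, so $E_0(v)\subseteq TC(G_1)$.

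\textbf{Step 3 (orientation check).} By Step 1, $G_1$ is a subgraph of $G$ exactly when $d/2-1$ is even, that is, when $d/2$ is odd. In that case the algorithm returns $E_{1\to0}=\bigcup_v E_0(v)\subseteq TC(G)$. When $d/2$ is even, $G_1\subseteq G^{\mathsf R}$, and the algorithm returns the reversed sets $E_0^{\mathsf R}(v)\subseteq TC(G)$. So the algorithm's choice matches the parity invariant in both cases. Taking the union over all pieces gives $H\subseteq TC(G)$.

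I expect no step to be a real obstacle. The main care needed is bookkeeping: keeping the reversal parities consistent between the definition of $I_{even}$ and $I_{odd}$ and the case split in step 6 of the algorithm.
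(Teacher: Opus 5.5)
Your proof is correct and takes essentially the same route as the paper. Both use a downward induction showing $G_i\subseteq G$ for $i\in I_{even}$ and $G_i^{\mathsf R}\subseteq G$ otherwise, together with the observation that each stitching edge $(v,w)\in E_0(v)$ lies in $TC(G_1)$ via the edge $(v,u)$ followed by a $u$-to-$w$ path in $G_1$. Your Step~3, which checks that the algorithm's parity-based choice between $E_0(v)$ and $E_0^{\mathsf R}(v)$ matches the orientation of $G_1$, makes explicit a bookkeeping step that the paper leaves implicit.
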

\begin{proof}
Observe that $TC_i \subseteq TC(G_i)$ for every $i \in \{d/2,\ldots,0\}$ and $TC^{\mathsf{R}}_0\subseteq TC(G_1)$. If $(v,w)\in E_0(v)$, then $(v,u)\in E(G_1)$ and $(w,u)\in TC_0$ for some $u\in V(G_0)$. Thus $(u,w)\in TC(G_1)$ and $(v,w)\in TC(G_1)$. It is therefore sufficient to prove that $G_i \subseteq G$ if $i \in I_{even}$ and $G_i^{\mathsf R} \subseteq G$ otherwise. We show this by induction from $i=d/2$ to $0$. For $i=d/2$, $G_i=G$. Assuming the claim for $i$, it also holds for $i-1$ because $G_{i-1}$ is the reversal of a vertex-induced subgraph of $G_i$.
\end{proof}

\begin{observation}\label{cl:max-out-induc}
For every $i \in \{0,\ldots,d/2\}$, the maximum TC-indegree of $G_i$ is at most $\DeltaInInd_i$.
\end{observation}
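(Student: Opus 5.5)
We argue by downward induction on $i$, from $i=d/2$ to $i=0$. The base case $i=d/2$ is immediate: $G_{d/2}=G$ and $\DeltaInInd_{d/2}=T\geq |TC(G)|$, since $T$ is an upper estimate of the closure size. No vertex can have more than $|TC(G)|$ predecessors.

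For the inductive step, fix $i\in\{d/2,\ldots,1\}$ and recall that $G_{i-1}=(G_i\setminus V(TC_i))^{\mathsf R}$ with $\DeltaInInd_{i-1}=\DeltaOutInd_i$. Reversal swaps the roles of predecessors and successors, so the TC-indegree of any vertex $v$ in $G_{i-1}$ equals its TC-outdegree in $G_i\setminus V(TC_i)$. It therefore suffices to show that every vertex $v$ of $G_i\setminus V(TC_i)$ satisfies $\TCOutDeg(v,G_i\setminus V(TC_i))<\DeltaOutInd_i$, or at most $\DeltaOutInd_i$, which is the bound we need.

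To prove this, take such a $v$. By \Cref{thm:TC-large-deg}, applied to $G_i$ with the parameters $\DeltaInInd_i$ and $\DeltaOutInd_i$ (whose indegree precondition holds by the inductive hypothesis), the set $U=V(TC_i)$ contains every vertex whose TC-outdegree in $G_i$ is at least $\DeltaOutInd_i$. Since $v\notin U$, we get $\TCOutDeg(v,G_i)<\DeltaOutInd_i$. Deleting vertices can only shrink successor sets: any path in $G_i\setminus V(TC_i)$ is also a path in $G_i$, so $\OutSucc{G_i\setminus V(TC_i)}{v}\subseteq\OutSucc{G_i}{v}$. This gives the bound, and the induction goes through.

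The only point needing care is that \Cref{thm:TC-large-deg} holds only with high probability, because the hitting set is randomized. We condition on the w.h.p. event that every invocation of $\TCHittingSet$ succeeds; there are $O(d)=O(\log n)$ invocations, so a union bound handles this. We also need to note that rounding the thresholds only changes constants and does not affect the argument. There is no real obstacle beyond correctly tracking how reversal swaps in- and out-degrees.
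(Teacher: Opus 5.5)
Your proof is correct and follows the paper's argument. The base case follows from $\DeltaInInd_{d/2}=T$; for $i\geq 1$, \Cref{thm:TC-large-deg} shows every vertex outside $V(TC_i)$ has TC-outdegree below $\DeltaOutInd_i$, and reversal turns this into the indegree bound for $G_{i-1}$. Your remark that deleting vertices only shrinks successor sets fills in a step the paper leaves implicit, and the inductive framing is harmless but not needed, since the containment guarantee of \Cref{thm:TC-large-deg} does not depend on the indegree precondition.
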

\begin{proof}
For $i=d/2$, $\DeltaInInd_{d/2}=T$ and the claim holds vacuously. For every $i\in\{1,\ldots,d/2\}$, let $V'_i$ be the set of all vertices with TC-outdegree at least $\DeltaOutInd_i$ in $G_i$. By \Cref{thm:TC-large-deg}, $V'_i\subseteq V(TC_i)$. We then have that the maximum TC-outdegree of the graph $G_i\setminus V(TC_i)$ is at most $\DeltaOutInd_i$. Since $G_{i-1}=(G_i\setminus V(TC_i))^{\mathsf{R}}$, we get that the maximum TC-indegree of $G_{i-1}$ is at most $\DeltaOutInd_i=\DeltaInInd_{i-1}$.
\end{proof}

The next lemma follows by repeated applications of \Cref{lem:close-P}.
\begin{lemma}\label{lem:connectivity}
Let $P$ be a $u$-$v$ path in $G$. Then, for every $i \in \{d/2,\ldots, 1\}$, the graph $Q_i=P \setminus \bigcup_{j=i}^{d/2}V_j$ is a possibly empty contiguous subpath of $P$. Moreover, for $i \in I_{even}$, $Q_i^{\mathsf R}\subseteq P^{\mathsf R}\cap G_{i-1}$, whereas for $i \in I_{odd}$, $Q_i\subseteq P\cap G_{i-1}$. 
\end{lemma}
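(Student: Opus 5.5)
We argue by downward induction on $i$, from $i=d/2$ to $i=1$, proving both claims together. We regard $P$ as a sequence of vertices and edges, and write $V_j=V(TC_j)$ where $TC_j$ is the output of $\TCLargeDeg$ on $G_j$. The parity bookkeeping rests on one fact from the proof of \Cref{lem:containment}: $G_j\subseteq G$ when $j\in I_{even}$, and $G_j^{\mathsf R}\subseteq G$ otherwise. Along the way we also need that $Q_i$ is a path in $G_i$ (in the right orientation) before we remove $V_i$. This is what lets us apply \Cref{lem:close-P}, which is stated for paths in the input graph of $\TCLargeDeg$.

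\textbf{Base case $i=d/2$.} Here $G_{d/2}=G$, so $P$ is a path in $G_{d/2}$. By \Cref{lem:close-P}, $Q_{d/2}=P\setminus V_{d/2}$ is a possibly empty contiguous subpath of $P$. Since $Q_{d/2}$ is contiguous and avoids $V_{d/2}$, all its vertices and edges lie in the induced subgraph $G_{d/2}\setminus V_{d/2}$. Reversing, $Q_{d/2}^{\mathsf R}\subseteq P^{\mathsf R}\cap G_{d/2-1}$. This matches the claim, because $d/2\in I_{even}$.

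\textbf{Inductive step.} Assume the claim holds for $i+1$, with $i\geq 1$. Then $Q_{i+1}$ is contiguous. Moreover, $Q_{i+1}$ is a path in $G_i$ if $i+1\in I_{odd}$, and $Q_{i+1}^{\mathsf R}$ is a path in $G_i$ if $i+1\in I_{even}$. Call this path $P'$; it is a path in $G_i$.

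1. Apply \Cref{lem:close-P} to $\TCLargeDeg$ on $G_i$ with the path $P'$. This gives that $P'\setminus V_i$ is contiguous in $P'$.
2. Reversal does not change contiguity, so $Q_i=Q_{i+1}\setminus V_i$ is contiguous in $Q_{i+1}$. Being a contiguous piece of a contiguous subpath, $Q_i$ is contiguous in $P$.
3. Since $G_i\setminus V_i$ is an induced subgraph, $P'\setminus V_i$ is a path in $G_i\setminus V_i$. Reversing it once more places it in $G_{i-1}$.
4. Track the orientation. If $i+1\in I_{even}$, then $i\in I_{odd}$. In that case $P'=Q_{i+1}^{\mathsf R}$, and after the extra reversal $Q_i\subseteq P\cap G_{i-1}$. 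If $i+1\in I_{odd}$, then $P'=Q_{i+1}$ and we obtain $Q_i^{\mathsf R}\subseteq P^{\mathsf R}\cap G_{i-1}$ with $i\in I_{even}$.

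These are exactly the two cases of the statement. Empty paths are carried through trivially.

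\textbf{Main obstacle.} The only delicate point is \Cref{lem:close-P}. It says the deleted part is a prefix (in the orientation of $G_i$), so the remaining part is a suffix and hence contiguous. We apply it to the correctly oriented path in $G_i$, and contiguity is symmetric under reversal. The inductive hypothesis must therefore carry the membership "$Q$, or its reverse, is a path in $G_i$", not just contiguity. The rest is parity bookkeeping.
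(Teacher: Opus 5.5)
Your proposal is correct and is essentially the paper's proof. You do a downward induction on $i$, apply \Cref{lem:close-P} to the appropriately oriented residual path ($Q_{i+1}$ or $Q_{i+1}^{\mathsf R}$) viewed as a path in $G_i$, use that $G_i\setminus V_i$ is vertex-induced, and reverse once more to land in $G_{i-1}$, with the same parity bookkeeping (the paper writes out one parity and calls the other symmetric). Your explicit remarks that contiguity survives reversal and nesting of contiguous subpaths fill in steps the paper leaves implicit; the fact you quote from the proof of \Cref{lem:containment} is harmless but unnecessary, since your induction hypothesis already tracks orientation.
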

\begin{proof}
For every $i$, let $Q_i=P \setminus \bigcup_{j=i}^{d/2}V_j$ and recall that $G_{i-1}=(G_i\setminus V_i)^{\mathsf R}$. We prove the claims by induction from $i=d/2$ to $i=1$.

For $i=d/2$, \Cref{lem:close-P} shows that $Q_{d/2}=P \setminus V_{d/2}$ is a possibly empty contiguous subpath of $P$. Reversing it gives $Q_{d/2}^{\mathsf R}\subseteq G_{d/2-1}=(G\setminus V_{d/2})^{\mathsf R}$.
Assume the claim holds for $i$ and consider $i-1$. Suppose first that $i \in I_{even}$; the other case is symmetric. Then $Q_i^{\mathsf R}$ is a path in $G_{i-1}$. Applying \Cref{lem:close-P} to this path shows that $Q_i^{\mathsf R}\setminus V_{i-1}=Q_{i-1}^{\mathsf R}$ is a possibly empty contiguous subpath. Reversing once more gives $Q_{i-1}\subseteq G_{i-2}$, as required.
\end{proof}

\begin{lemma}\label{lem:d-shortcut-diam}
$H$ is a $d$-shortcut.
\end{lemma}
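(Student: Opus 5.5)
Given any $u$-$v$ path $P$ in $G$, the plan is to peel it with \Cref{lem:connectivity}, which splits $P$ into at most $d/2+1$ blocks, and then show that $G\cup H$ spends at most $d$ edges traversing them.

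\textbf{Block structure.} Define $B_{d/2}=P\cap V_{d/2}$ and, for $1\le i<d/2$, $B_i=Q_{i+1}\cap V_i$. Let $B_0=Q_1$ be the residual part, which lies entirely in $G_0$ (up to orientation). By \Cref{lem:close-P} applied at each step, each $B_i$ is a contiguous subpath. The orientation of $G_i$ alternates, so the blocks alternate sides: $B_i$ for $i\in I_{even}$ is a prefix of the remaining path, and $B_i$ for $i\in I_{odd}$ is a suffix. In $P$'s order the blocks therefore read
\[
B_{d/2},\,B_{d/2-2},\,\ldots,\,B_0 \text{ or } B_1,\,\ldots,\,B_{d/2-3},\,B_{d/2-1}.
\]

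\textbf{Each block is one shortcut edge.} Take $B_i$ with $i\ge 1$. It is a path in $G_i$, possibly reversed relative to $P$, whose vertices all lie in $V_i$. By \Cref{thm:TC-large-deg}, $TC_i=TC(G_i)[V_i]$, so $TC_i$ contains the edge joining the endpoints of $B_i$. After applying the reversal convention of step 8, which is consistent with the parity claim in \Cref{lem:containment}, this gives the first-to-last edge of $B_i$ in $P$'s direction. For $B_0$, $TC_0=TC(G_0)$ via \Cref{lem:naive-comb-small-deg} and \Cref{cl:max-out-induc}, so $B_0$ is likewise a single edge.

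\textbf{Edge count.} Consecutive blocks are joined by one original edge of $P$. With $k\le d/2+1$ nonempty blocks, the count is $k$ shortcut edges plus $k-1$ connectors, which is at most $d+1$. The bound $d$ is reached by saving one edge at the junction between $B_0$ and $B_1$, using the stitching edges $E_{1\to 0}$.

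\textbf{Junction saving (main obstacle).} Consider the connector $(x,y)$ of $P$ between $B_1$ and $B_0$. In $G_1$'s orientation, the last vertex $v$ of $B_1$ has an out-neighbour $u\in V(G_0)$, namely the adjacent endpoint of $B_0$. The far endpoint $w$ of $B_0$ satisfies $(w,u)\in TC_0$ in $G_0$'s orientation. This is exactly the definition of $E_0(v)$, so $(v,w)\in E_0(v)$, and with the parity-dependent reversal it becomes an edge of $H$ in $P$'s direction. The junction-plus-$B_0$ stretch, two edges, is thus replaced by one edge, giving a total of at most $d$.

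The delicate points are the orientation bookkeeping, which must confirm that $B_1$ and $B_0$ are adjacent in $P$ with $B_1$ on the correct side, and the degenerate cases. When $B_0$ or $B_1$ is empty, fewer blocks appear and the count is already at most $d$. Under the convention $(v,v)\in TC$, single-vertex blocks need no shortcut edge.
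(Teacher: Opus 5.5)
Your proposal is correct and follows the paper's proof essentially step by step. Like the paper, you decompose $P$ via \Cref{lem:connectivity} into at most $d/2+1$ blocks, each closed by one edge of $TC_i$ or $TC_i^{\mathsf R}$, which gives $2\ell-1$ edges; you then save one edge at the adjacent $V_1$/$V_0$ junction using the stitching edge $(v,w)\in E_0(v)$ from $E_{1\to 0}$. Your orientation-relative phrasing (working in $G_1$'s orientation) simply covers both parities of $d/2$ at once, where the paper treats one parity and invokes symmetry.
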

\begin{proof}
Consider a $u$-$v$ path $P$. By Lemma \ref{lem:connectivity}, it holds that $P$ can be decomposed into at most $d/2+1$ subpaths $P_1,\ldots, P_\ell$ where each $P_i$ is a $u_i$ to $v_i$ path and the following holds: (i) $P_1 \circ (v_1,u_2) \circ P_2 \circ \ldots \circ (v_{\ell-1},u_{\ell}) \circ P_\ell=P$ and (ii) $V(P_j)\subseteq V_{i_j}$ for distinct $i_j$ indices in $\{0,\ldots, d/2\}$. The edges $(v_i,u_{i+1})$ are called connecting edges, as they connect a $P_i$ path with $P_{i+1}$. 

For $j\geq1$, \Cref{thm:TC-large-deg} gives $TC_j=TC(G_j)[V_j]$, while $TC_0=TC(G_0)$. Let $TC'_j=TC_j$ if $j \in I_{even}$ and $TC'_j=TC_j^{\mathsf R}$ otherwise. Each block $P_i$ is oriented as a path in $G_{i_j}$ after applying the appropriate reversal, and therefore $(u_i,v_i)\in TC'_{i_j}\subseteq H$. Adding the $\ell-1$ connecting edges $(v_j,u_{j+1})$ between consecutive paths $P_j$ and $P_{j+1}$ gives $\dist_{G \cup H}(u,v)\leq 2\ell-1$. If the residual subpath $Q=P\setminus\bigcup_{j=2}^{d/2}V_j$ meets at most one of $V_1$ and $V_0$, then at least one degree class is absent, so $\ell\leq d/2$ and the desired bound follows. It remains to consider the case where $Q$ meets both $V_1$ and $V_0$. Assume that $d/2$ is odd; the other case is symmetric. Then $G_1 \subseteq G$.

By \Cref{lem:connectivity}, $Q \subseteq G_1$. Moreover, there exists $i \in \{1,\ldots, \ell-1\}$ such that $Q=P_i \circ (v_i,u_{i+1}) \circ P_{i+1}$, where $P_i \subseteq V_1$ and $P_{i+1}\subseteq V_0$. Recall that $(u_i,v_i),(u_{i+1},v_{i+1})\in H$.
We claim next that $(v_i,v_{i+1})\in E_{0}(v_i)$. This holds as $u_{i+1} \in N_{out}(v_i,G_1)$ and $(v_{i+1},u_{i+1})\in TC_0$.
Altogether, this provides a $2$-length path $[u_i,v_{i},v_{i+1}]$ from $u_i$ to $v_{i+1}$ in $G \cup H$. 
The portion outside $Q$ consists of $\ell-2$ shortcut subpaths and $\ell-2$ connecting edges. Therefore,
\[
\dist_{G \cup H}(u,v)
\leq 2(\ell-2)+\dist_{G\cup H}(u_i,v_{i+1})
\leq 2\ell-2\leq d.
\]
See Fig. \ref{fig:diam3} for an illustration for $d=4$. 
\end{proof}

\noindent\textbf{Time Analysis.} The next lemmas bound the computation time of the $TC_i$ graphs and the set $E_{1 \to 0}$, respectively. 

\begin{claim}\label{cl:computeTCi}
$TC_i$ is computed in time $\widetilde O\!\left(T^{\left(1+\frac{1}{d+1}\right)\frac{\omega}{2}}\right)$ for every $i \in \{0,\ldots, d/2\}$.
\end{claim}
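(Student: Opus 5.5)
We split into the cases $i\geq 1$ and $i=0$. Throughout, $T$ is the constant-factor upper estimate of $|TC(G)|$. Each $G_i$ is, up to reversal, a vertex-induced subgraph of $G$ (as in the proof of \Cref{lem:containment}), so $|TC(G_i)|\leq T$ and $|E(G_i)|\leq m\leq T$. The additive $\tilde O(m)$ terms are therefore harmless.

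\textbf{Case $i\geq1$.} By \Cref{cl:max-out-induc}, $G_i$ has maximum TC-indegree at most $\DeltaInInd_i$. We invoke \Cref{thm:TC-large-deg} with $\DeltaIn=\DeltaInInd_i$ and $\DeltaOut=\DeltaOutInd_i=T^{i/(d+1)}$, which gives time
\[
\tilde O\Bigl(m+\min\{T/\DeltaOutInd_i,\DeltaInInd_i\}^{\omega-1}\cdot T/\DeltaOutInd_i\Bigr).
\]
Write $x=T/\DeltaOutInd_i=T^{1-i/(d+1)}$.

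\emph{Subcase $i=d/2$.} Here $\DeltaInInd_{d/2}=T$, so the minimum equals $x=T^{(d/2+1)/(d+1)}$. The bound becomes $x^{\omega}=T^{\frac{d+2}{d+1}\cdot\frac{\omega}{2}}=T^{(1+\frac1{d+1})\frac\omega2}$, which is exactly the target.

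\emph{Subcase $1\leq i<d/2$.} Now $\DeltaInInd_i=\DeltaOutInd_{i+1}=T^{(i+1)/(d+1)}$. We bound the minimum by a geometric mean, $\min\{a,b\}^{\omega-1}\leq (ab)^{(\omega-1)/2}$. Here $ab=T^{(d+2)/(d+1)}$, independent of $i$. This yields
\[
T^{\frac{d+2}{d+1}\cdot\frac{\omega-1}{2}}\cdot x .
\]
Since $i\geq1$, we have $x\leq T^{d/(d+1)}\leq T^{\frac{d+2}{d+1}\cdot\frac12}$. The exponent is then at most $\frac{d+2}{d+1}\cdot\frac{\omega}{2}$, as desired. (Using simply $\min\leq\DeltaInInd_i$ also works, and both routes reduce to the same exponent check.)

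\textbf{Case $i=0$.} $TC_0$ is computed by $\TCNaiveComb$ on $G_0$, whose maximum TC-indegree is at most $\DeltaInInd_0=\DeltaOutInd_1=T^{1/(d+1)}$. By \Cref{lem:naive-comb-small-deg}, this takes time $\tilde O(T\cdot T^{1/(d+1)})=\tilde O(T^{1+\frac1{d+1}})$. This is at most the target bound because $\omega\geq2$.

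The main obstacle is only the exponent bookkeeping in the middle subcase. We must check that the worst index is $i=1$ (equivalently, the top index $i=d/2$), and that the geometric-mean bound, not merely the first argument of the minimum, is what gives a uniform exponent. Beyond that, we need that \Cref{thm:TC-large-deg}'s parameter range $1\leq\DeltaOut\leq T$ holds, which it does since $i\leq d/2<d+1$. We also need that integer rounding of the thresholds changes everything only by constant factors, which was already noted.
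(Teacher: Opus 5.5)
Your cases $i=d/2$ and $i=0$, and the remark that the additive $m$ term is dominated, match the paper. The middle subcase $1\le i<d/2$, however, is wrong as written, and you identified it as the main obstacle.

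\textbf{The failing step.} The inequality $x\le T^{d/(d+1)}\le T^{\frac{d+2}{d+1}\cdot\frac12}$ requires $2d\le d+2$, i.e.\ $d\le 2$, so it fails for every even $d\ge 4$. This cannot be repaired inside the geometric-mean route. With $\min\{a,b\}^{\omega-1}\le (ab)^{(\omega-1)/2}$ the exponent is $\frac{d+2}{d+1}\cdot\frac{\omega-1}{2}+1-\frac{i}{d+1}$. This exceeds the target $\frac{d+2}{d+1}\cdot\frac{\omega}{2}$ by exactly $\frac{d-2i}{2(d+1)}>0$ for every $i<d/2$. For example, $d=4$, $i=1$, $\omega=2$ gives $T^{7/5}$ against the target $T^{6/5}$. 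The cause is that for $i<d/2$ we have $\DeltaInInd_i=T^{(i+1)/(d+1)}<x=T^{(d+1-i)/(d+1)}$, so the minimum equals $\DeltaInInd_i$. Replacing it by $\sqrt{\DeltaInInd_i\, x}$ loses a polynomial factor. The geometric-mean check reduces to $i\ge d/2$, which never holds in this subcase.

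\textbf{The fix.} Your parenthetical route is the correct one and is exactly what the paper does, but it is not ``the same exponent check.'' Bounding the minimum by $\DeltaInInd_i$ gives exponent
\[
\frac{(i+1)(\omega-1)+d+1-i}{d+1}=\frac{i(\omega-2)+\omega+d}{d+1}.
\]
After multiplying by $d+1$, comparing with $\frac{(d+2)\omega}{2(d+1)}$ reduces to $(i-\tfrac{d}{2})(\omega-2)\le 0$. This holds because $i\le d/2$ and $\omega\ge 2$. Two further points follow from this computation:
\begin{itemize}
\item This check genuinely uses $\omega\ge2$, whereas your geometric-mean condition does not involve $\omega$ at all.
\item The exponent is nondecreasing in $i$, so the worst index in this range is $i=d/2-1$, not $i=1$.
\end{itemize}
With this replacement your argument coincides with the paper's proof.
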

\begin{proof}
In every iteration $i\in \{1,\ldots,d/2-1\}$, by \Cref{thm:TC-large-deg}, Alg. $\TCLargeDeg$ with the input $G_i, \DeltaInInd_{i}$ and $\DeltaOutInd_{i}$ is implemented in time
$\widetilde O\!\left(T^{\frac{(i+1)(\omega-1)}{d+1}} \cdot T^{1-\frac{i}{d+1}}\right)\leq \widetilde O\!\left(T^{\left(1+\frac{1}{d+1}\right)\frac{\omega}{2}}\right)$, where the inequality follows from $i\leq d/2$ and $\omega\geq2$. For $i=d/2$, the algorithm is implemented in time
$\widetilde O\!\left(T^{\left(1-\frac{d}{2(d+1)}\right)\omega}\right)=\widetilde O\!\left(T^{\left(1+\frac{1}{d+1}\right)\frac{\omega}{2}}\right)$. 
The additive input-reading cost in \Cref{thm:TC-large-deg} is dominated because $m\leq T$. It remains to bound the computation time of $TC_0$. By \Cref{cl:max-out-induc}, the graph $G_0$ has maximum TC-indegree at most $\DeltaInInd_0=T^{1/(d+1)}$. Thus \Cref{lem:naive-comb-small-deg} gives $\widetilde O(T^{1+1/(d+1)})$ time, which is within the claimed bound because $\omega\geq2$.
\end{proof}

\begin{claim}\label{cl:computeE10}
The computation of the set $E_{1 \to 0}$ takes $O(T^{(1+1/(d+1))})$ time.
\end{claim}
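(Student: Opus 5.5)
We bound the cost of computing $E_{1\to 0}$, which equals $E_0(v)$ (or its reversal) summed over $v\in V(G_1)$. The reversal in step~6 is a relabeling and costs linear time in the output, so only the sets $E_0(v)$ matter. For each $v\in V(G_1)$ we scan its out-neighbors $u\in N_{out}(v,G_1)\cap V(G_0)$. For each such $u$ we enumerate the list $\{w : (w,u)\in TC_0\}$, which is the set $\InPred{G_0}{u}$; this list is already stored by $\TCNaiveComb$ as the incoming BFS tree $B_u$. The work for $v$ is therefore $\sum_{u\in N_{out}(v,G_1)\cap V(G_0)}(1+|\InPred{G_0}{u}|)$, including duplicate removal via a marking array. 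Summed over all $v$, the total is
\[
O\Big(m+\sum_{(v,u)\in E(G_1),\,u\in V(G_0)}\TCInDeg(u,G_0)\Big).
\]

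\textbf{Main step.} We bound this double sum with \Cref{cl:max-out-induc}, which gives $\TCInDeg(u,G_0)\le \DeltaInInd_0=T^{1/(d+1)}$ for every $u\in V(G_0)$. The sum is then at most $|E(G_1)|\cdot T^{1/(d+1)}\le m\cdot T^{1/(d+1)}$. Since $m\le T$ by the convention of \Cref{sec:prelim}, and $G_1$ is (up to reversal) a subgraph of $G$ by the argument of \Cref{lem:containment}, this gives $O(T^{1+1/(d+1)})$.

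\textbf{Main obstacle.} The only subtlety is making sure no per-$v$ overhead exceeds the edge scans. Initializing a fresh marking array for each $v$ would cost $n$ per $v$, so instead we reset only the touched entries after processing $v$; this is charged to the enumeration itself. The output size $|E_{1\to 0}|$ is likewise bounded by the same sum, so writing the output is also within $O(T^{1+1/(d+1)})$.

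Combining the three parts gives the claimed $O(T^{1+1/(d+1)})$ bound.
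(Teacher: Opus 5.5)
Your proof is correct and follows the paper's argument. Like the paper, you bound the work for each $v$ by $\sum_{u\in N_{out}(v,G_1)\cap V(G_0)}\TCInDeg(u,G_0)$, cap each term by $T^{1/(d+1)}$ via \Cref{cl:max-out-induc}, and sum over the at most $m\le T$ edges of $G_1$. Your extra bookkeeping (reusing the incoming BFS trees stored by $\TCNaiveComb$ and resetting only the touched marks during deduplication) makes explicit implementation details that the paper leaves implicit.
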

\begin{proof}
Every vertex of $V(G_0)$ has TC-outdegree less than $T^{1/(d+1)}$ in the residual graph $G_1\setminus V(TC_1)$, or equivalently TC-indegree at most $T^{1/(d+1)}$ in $G_0$. Hence
\[
 |E_0(v)|\leq
 \sum_{u\in N_{out}(v,G_1)\cap V(G_0)}\TCInDeg(u,G_0)
 \leq |N_{out}(v,G_1)|\,T^{1/(d+1)}.
\]
Summing over $v\in V(G_1)$ and using $|E(G_1)|\leq T$ proves the claim.
\end{proof}

\subsection{Parallel Construction}
In this section we give a parallel construction of an $O(\log n)$-shortcut using $\widetilde O(T^{\omega/2})$ work and $\widetilde O(1)$ depth, which immediately yields parallel single-source reachability with the same work and depth bounds. We show:

\begin{theorem}\label{thm:dshortcutparallel}
For every even integer $d\in[4,O(\log n)]$ and every $n$-vertex digraph $G$, there is a randomized parallel algorithm for computing a $d$-shortcut using $\widetilde O\!\left(T^{\left(1+\frac{1}{d+1}\right)\frac{\omega}{2}}\right)$ work and $\widetilde O(1)$ depth.
\end{theorem}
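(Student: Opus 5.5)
The plan is to follow the layering strategy sketched in the overview. First I check that the sequential $\dShortcut$ parallelizes with depth $\widetilde O(D(G))$ and work matching its sequential time. Then I force $D=\widetilde O(1)$ by running it on a layered graph, and repeat.

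\textbf{Step 1 (parallel $\dShortcut$ in depth $\widetilde O(D(G))$).} I go through the components in order.
\begin{itemize}
\item The estimate $T$ and each call to $\TCHittingSet$ take near-linear work and $\widetilde O(D(G))$ depth, by \Cref{thm:size-estimation} and \Cref{cl:computeTC-hit-set}.
\item Inside $\TCLargeDeg$, the incoming BFS trees from all $s\in S$ run in parallel with depth $O(D(G_i))$.
\item Each $TC(G_i[\InPred{}{s}])$ is computed by $O(\log n)$ repeated Boolean squarings using \Cref{thm:parrecmat}. This has depth $O(\log^2 n)$ and the same work as in \Cref{thm:TC-large-deg}.
\item $\TCNaiveComb$ is $n$ parallel BFS computations.
\item $E_{1\to0}$ is a parallel union over edges.
\end{itemize}
Since $G_i$ is a reversed induced subgraph of $G$, and vertex deletion can only remove paths, each $D(G_i)\le D(G)$. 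With $d/2=O(\log n)$ iterations, the total depth is $\widetilde O(D(G))$. The work bound is the one from \Cref{cl:computeTCi} and \Cref{cl:computeE10}.

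\textbf{Step 2 (layering).} Given the current graph $G^{(t)}=G\cup H^{(t)}$, build $G'$ with $d+2$ layers of copies $v^{(0)},\dots,v^{(d+1)}$. For each edge $(x,y)$ of $G^{(t)}$ and each $j$, add the edge $(x^{(j)},y^{(j+1)})$. Also add the self-loop-style edges $(x^{(j)},x^{(j+1)})$, so that shorter paths are represented. Then $D(G')\le d+1$.

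Apply Step 1 to $G'$ to get a $d$-shortcut $H'$ of $G'$ in depth $\widetilde O(d)=\widetilde O(1)$. Map each edge $(x^{(a)},y^{(b)})\in H'$ to $(x,y)$. The result lies in $TC(G^{(t)})\subseteq TC(G)$, because $G'$-reachability projects to $G$-reachability.

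Any $u$-$v$ path of length exactly $d+1$ in $G^{(t)}$ lifts to a path from $u^{(0)}$ to $v^{(d+1)}$. In $G'\cup H'$ this pair is at distance at most $d$, and projecting that path back gives $\dist\le d$ in $G^{(t)}\cup H$. So every pair at distance exactly $d+1$ drops to at most $d$.

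The size concern is $|TC(G')|$. Every closure pair of $G'$ projects to a pair of $TC(G)$ with a choice of two layers, so $|TC(G')|=O(d^2 T)=\widetilde O(T)$. The work therefore stays $\widetilde O(T^{(1+\frac1{d+1})\frac\omega2})$.

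\textbf{Step 3 (iteration).} Here I expect the main obstacle. A single application only turns distance-$(d+1)$ pairs into distance-$d$ pairs, while pairs at distance $D\gg d$ must also shrink. For a pair at distance $L>d$, split its shortest path into consecutive pieces of length $d+1$. Each piece drops to length at most $d$, so the distance becomes at most roughly $L\cdot\frac{d}{d+1}+d$. Hence the diameter shrinks by a factor $(1-\Theta(1/d))$ per round.

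After $O(d\log n)=\widetilde O(1)$ rounds the diameter is at most $d+1$. One final round then handles the exact distance-$(d+1)$ pairs, since a pair at distance $\le d$ needs nothing. So the final diameter is at most $d$.

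Two points need care:
\begin{itemize}
\item Each round's input $G^{(t)}$ has closure $TC(G)$, so $T$ never grows and the per-round work bound is unchanged.
\item The accumulated shortcut stays inside $TC(G)$, by \Cref{lem:containment} applied to $G'$ together with the projection.
\end{itemize}
Summing over rounds gives total work $\widetilde O(T^{(1+\frac1{d+1})\frac\omega2})$ and depth $\widetilde O(1)$. Correctness holds with high probability by a union bound over the polylogarithmically many randomized calls.
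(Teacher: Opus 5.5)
Your plan matches the paper's proof. You parallelize $\dShortcut$ with depth $\widetilde O(D(G))$, run it on the $(d+2)$-layered graph (whose closure has $O(d^2T)$ pairs), project back, and repeat for $O(d\log n)$ rounds using the $(1-\frac{1}{d+1})$ contraction.

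One step's justification is wrong as written. In Step~1 you conclude $D(G_i)\le D(G)$ because $G_i$ is a reversed induced subgraph and ``vertex deletion can only remove paths.'' Removing paths can lengthen the distance between surviving pairs, so an induced subgraph can have a much larger diameter. For example, take a path $x_1\to\cdots\to x_k$ together with a hub $h$ and edges $x_i\to h$, $h\to x_j$ for all $i,j$. The diameter is $2$, but deleting $h$ leaves diameter $k-1$. The depth of every BFS and every call to Cohen's estimator inside $\dShortcut$ is governed by the diameters of the intermediate graphs, so this is exactly the point that needs an argument.

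The paper supplies it in \Cref{lem:diameter-monotonicity}. The peeled set $V(TC_i)=\bigcup_{s\in S}\mathsf{In}(s)$ is predecessor-closed, so the surviving vertex set is successor-closed. Hence every path starting at a survivor stays inside $G_i\setminus V(TC_i)$, distances between survivors are preserved exactly, and reversal preserves the diameter. (A hub like $h$ can never be peeled without every $x_i$ that reaches it.)

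For the theorem at hand there is an even cheaper fix. Every edge of the layered graph goes from one layer to the next, so every path in any induced subgraph of it, or of its reversal, has at most $d+1$ edges. All BFS and estimation steps on the intermediate graphs therefore finish within $O(d)$ rounds, with no monotonicity lemma needed. With either fix your argument goes through.

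A smaller slip in Step~3: the bound $L\cdot\frac{d}{d+1}+d$ has fixed point $d(d+1)$, so iterating it does not by itself bring the diameter down to $d+1$. Use the exact count $L-\lfloor L/(d+1)\rfloor\le(1-\frac{1}{d+1})L+1$, as the paper does; its fixed point is $d+1$.
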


\begin{proof}[Proof of \Cref{thm:parallel-ssr}]
Choose an even $d=\Theta(\log n)$ and apply \Cref{thm:dshortcutparallel}. Since $T\leq n^2$, constructing the shortcut uses $\widetilde O(T^{\omega/2})$ work and polylogarithmic depth. Its size is bounded by the same work bound. A standard level-synchronous reachability search in $G\cup H$ then uses $\widetilde O(m+|H|)=\widetilde O(T^{\omega/2})$ work and $\widetilde O(d)=\widetilde O(1)$ depth, because $m\leq T$. The only randomization is in the shortcut construction, whose correctness holds with high probability.
\end{proof}

The argument has two ingredients. We first show that Algorithm $\dShortcut$ can be implemented in $\widetilde O(D(G))$ depth and with a work bound matching the sequential construction. The key point is that each peeled set is predecessor-closed: its deletion preserves all distances between surviving vertices, while reversal preserves diameter. The second part applies this subroutine to diameter-$(d+1)$ layered graphs and makes $\widetilde O(1)$ such applications.

\begin{lemma}\label{lem:layered-shortcut-map}
Let $G^*$ be the $(d+2)$-layered graph obtained from a digraph $G$ by placing a copy of every vertex in each layer and mapping every edge of $G$ between each pair of consecutive layers. If $H^*$ is a $d$-shortcut for $G^*$, then its projection $H$ onto $V(G)\times V(G)$ is contained in $TC(G)$ and shortens every length-$(d+1)$ path in $G$ to length at most $d$.
\end{lemma}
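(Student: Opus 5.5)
Write $G^*$ with vertex set $V\times\{0,1,\ldots,d+1\}$, where $(x,j)$ is the layer-$j$ copy of $x$. Its edges are $((x,j),(y,j+1))$ for every $(x,y)\in E(G)$ and $0\le j\le d$. The projection is $\pi((x,j))=x$, and $H=\{(\pi(a),\pi(b)) : (a,b)\in H^*\}$; self-loop projections may be discarded. The proof has two parts, containment and shortening, and both go by projecting paths of $G^*$ down to walks in $G$.

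\textbf{Containment.} Take $(a,b)\in H^*\subseteq TC(G^*)$. Then there is a directed path from $a$ to $b$ in $G^*$. Each of its edges goes from some $(x,j)$ to some $(y,j+1)$ with $(x,y)\in E(G)$, so projecting the path vertex by vertex gives a walk in $G$ from $\pi(a)$ to $\pi(b)$. Hence $(\pi(a),\pi(b))\in TC(G)$, and since $(v,v)\in TC(G)$ by convention, trivial projections cause no trouble. So $H\subseteq TC(G)$.

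\textbf{Shortening.} Let $P=(x_0,x_1,\ldots,x_{d+1})$ be a path in $G$ of length $d+1$. Lift it to $P^*=((x_0,0),(x_1,1),\ldots,(x_{d+1},d+1))$, which is a path in $G^*$ because each $(x_j,x_{j+1})\in E(G)$. Therefore $((x_0,0),(x_{d+1},d+1))\in TC(G^*)$. Since $H^*$ is a $d$-shortcut of $G^*$, we have $\dist_{G^*\cup H^*}((x_0,0),(x_{d+1},d+1))\le d$. Fix a path $Q^*$ in $G^*\cup H^*$ realizing this distance and project it edge by edge:
\begin{itemize}
\item an edge of $G^*$ projects to an edge of $G$;
\item an edge of $H^*$ projects to an edge of $H$, or to a self-loop, which we delete.
\end{itemize}
The result is a walk in $G\cup H$ from $x_0$ to $x_{d+1}$ with at most $d$ edges. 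Shortcutting repeated vertices turns it into a path of length at most $d$, so $\dist_{G\cup H}(x_0,x_{d+1})\le d$. This is exactly the statement that every length-$(d+1)$ path in $G$ is shortened to length at most $d$.

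\textbf{Main obstacle.} The one subtle point is that a projection may collapse to a self-loop, or may yield a walk that revisits vertices. Both only decrease the length, so the distance bound survives. No interaction between layers needs tracking, because the layer index never matters after projection.
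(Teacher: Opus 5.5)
Your proof is correct and takes the same route as the paper. Containment comes from projecting the $G^*$-path behind each $H^*$ edge down to a walk in $G$. Shortening comes from lifting the length-$(d+1)$ path to run from the first layer to the last, using the $d$-shortcut property of $H^*$, and projecting the resulting path back. Your explicit treatment of self-loop projections and repeated vertices only fills in details the paper leaves implicit.
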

\begin{proof}
Every edge of $H^*$ represents a directed path in $G^*$ and therefore projects to a reachable pair in $G$, so $H\subseteq TC(G)$. A length-$(d+1)$ path in $G$ maps to a path from the first to the last layer of $G^*$. Since $H^*$ is a $d$-shortcut, these two copies are joined by a path of length at most $d$ in $G^*\cup H^*$. Projecting that path back to $G\cup H$ proves the claim.
\end{proof}

\paragraph{Parallel Implementation of Alg. $\dShortcut$.} We next show how to implement Alg. $\dShortcut$ in $\widetilde O(D(G))$ depth.

\begin{lemma}\label{lem:dShortcut-parallel}
\emergencystretch=1em
The parallel implementation of Alg. $\dShortcut$ takes $\widetilde O\!\left(T^{\left(1+\frac{1}{d+1}\right)\frac{\omega}{2}}\right)$ work and $\widetilde O(D(G))$ depth.
\end{lemma}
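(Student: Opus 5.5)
We go through Algorithm $\dShortcut$ step by step. For each step we give a parallel implementation whose depth is $\widetilde O(D(G))$, or polylogarithmic, and whose work matches the sequential bounds of \Cref{cl:computeTCi} and \Cref{cl:computeE10}. There are $d/2+O(1)=O(\log n)$ sequential stages, so it suffices that each stage has depth $\widetilde O(D(G))$ and work within the claimed bound.

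The key structural fact is that diameters never grow along the chain $G_{d/2},G_{d/2-1},\ldots,G_0$. We prove $D(G_{i-1})\le D(G_i)$ by induction, hence $D(G_i)\le D(G)$ for all $i$. The argument has two parts.
\begin{itemize}
\item By \Cref{lem:close-P}, $V(TC_i)=\bigcup_{s\in S}\InPred{G_i}{s}$ is predecessor-closed in $G_i$. So if $x,y\notin V(TC_i)$ and $P$ is a shortest $x$-$y$ path in $G_i$, then no vertex of $P$ lies in $V(TC_i)$. Indeed, such a vertex would be a successor of $x$, and predecessor-closedness would then put $x$ in $V(TC_i)$.
\item Hence distances among surviving vertices are preserved, and reversal preserves distances up to swapping endpoints.
\end{itemize}
With this in hand, every invocation of \Cref{cl:computeTC-hit-set} on $G_i$ (hitting-set sampling plus the size estimation of \Cref{thm:size-estimation}) takes near-linear work and $\widetilde O(D(G))$ depth. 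The estimate of $T$ in Step~2 is handled the same way.

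Inside $\TCLargeDeg$, all $s\in S$ are processed in parallel.
\begin{itemize}
\item \emph{Incoming BFS.} A level-synchronous incoming BFS from $s$ has depth $O(D(G_i))\le O(D(G))$ and work $O(|\InPred{G_i}{s}|^2)$, exactly as in the sequential bound.
\item \emph{Closures.} The closure $TC(G_i[\InPred{}{s}])$ is computed by repeated squaring of the reflexive adjacency matrix, using $O(\log n)$ products via \Cref{thm:parrecmat}. This costs $\widetilde O(\TCInDeg(s)^\omega)$ work and $O(\log^2 n)$ depth.
\item \emph{Union and residual graph.} Forming the union $TC_i$, the set $V(TC_i)$ and the reversed residual graph $G_{i-1}$ is a parallel marking and filtering step. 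It takes work linear in the sizes involved and polylogarithmic depth, and duplicates are removed by sorting or semisorting.
\end{itemize}
$\TCNaiveComb$ runs all incoming BFS's from every vertex of $G_0$ in parallel. Its depth is $O(D(G_0))\le O(D(G))$, with the same work as in \Cref{lem:naive-comb-small-deg}.

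Finally, the set $E_{1\to 0}$ is built as follows. For each $v\in V(G_1)$ and each out-neighbor $u\in V(G_0)$, in parallel, we emit the pairs $(v,w)$ for $w\in N_{out}(u,TC_0)$, that is, the vertices $w$ with $(w,u)\in TC_0$. Duplicates are removed by parallel sorting. By the bound in \Cref{cl:computeE10}, this takes $O(T^{1+1/(d+1)})$ work and polylogarithmic depth. The final union with the appropriate reversals is again polylogarithmic depth.

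Summing the work over the $O(\log n)$ stages reproduces \Cref{cl:computeTCi} and \Cref{cl:computeE10}, up to a logarithmic factor. The depth is $O(\log n)\cdot \widetilde O(D(G))=\widetilde O(D(G))$. The main obstacle is the monotonicity $D(G_i)\le D(G)$. Without it, the BFS computations and the size estimates on the residual graphs could have unbounded depth. The predecessor-closed argument above is what resolves it. A minor subtlety is that the BFS work must charge only edges inside $\InPred{}{s}$, which holds for a reverse BFS as in the proof of \Cref{thm:TC-large-deg}.
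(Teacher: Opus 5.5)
Your proposal is correct and follows essentially the same route as the paper. The predecessor-closedness argument is exactly \Cref{lem:diameter-monotonicity}, which gives $D(G_i)\le D(G)$; after that, $\TCLargeDeg$ (hitting set via size estimation, incoming BFS trees, closures via \Cref{thm:parrecmat}), $\TCNaiveComb$, and $E_{1\to0}$ are parallelized as in the paper, and your explicit count of $O(d)=O(\log n)$ sequential stages makes the depth bookkeeping slightly clearer. One small notational slip: the vertices $w$ with $(w,u)\in TC_0$ form $N_{in}(u,TC_0)$, not $N_{out}(u,TC_0)$, though your verbal description names the correct set.
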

We make the following observations. 
\begin{observation}\label{obs:TCHittingSet-parallel}
Algorithm $\TCHittingSet$ with input $\Delta$ can be implemented in near-linear work and 
$\widetilde O(D(G))$ depth.
\end{observation}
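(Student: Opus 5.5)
The plan is to retrace the sequential proof of \Cref{cl:computeTC-hit-set} and check that each step parallelizes within near-linear work and $\widetilde O(D(G))$ depth. The algorithm has three stages: sampling the $k=\Theta(\log n)$ candidate sets $S_1,\ldots,S_k$, estimating TC-degrees, and selecting the candidate of minimum estimated total TC-degree.

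\textbf{Sampling.} Each vertex independently flips $k$ coins with bias $p=\min\{1,c\log n/\Delta\}$, one coin per candidate set. This is embarrassingly parallel: it uses $O(nk)=\widetilde O(n)$ work and $O(1)$ depth, assuming each processor has access to independent random bits. The hitting and Markov arguments from \Cref{cl:computeTC-hit-set} depend only on the distribution of the $S_j$, not on how they are generated, so correctness carries over unchanged.

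\textbf{Degree estimation.} Next I would invoke \Cref{thm:size-estimation} on $G$ and on $G^{\mathsf R}$ with a fixed constant $\epsilon$. This yields constant-factor estimates of every TC-outdegree and TC-indegree using $\widetilde O(m)$ work and $\widetilde O(D(G))$ depth. One point needs checking: the theorem runs $O(\log n)$ single-source reachability computations, and each must be implemented with depth $\widetilde O(D(G))$. I would implement each one as a level-synchronous BFS, with $D(G)$ rounds and $O(\log n)$ depth per round. Reversal does not change the diameter, since $D(G^{\mathsf R})=D(G)$, so the same bound holds on $G^{\mathsf R}$. This step is the main obstacle, and it is handled entirely by the cited theorem together with this BFS implementation.

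\textbf{Selection.} For each $j$, I compute the sum $\sum_{s\in S_j}(\widetilde d^{in}(s)+\widetilde d^{out}(s))$ by a parallel prefix-sum over the indicator vectors of $S_j$. This costs $O(n)$ work and $O(\log n)$ depth per candidate, so $\widetilde O(n)$ work and $O(\log n)$ depth over all $k$ candidates. A parallel minimum over the $k$ resulting values then picks the candidate $S_{j^*}$, in $O(\log\log n)$ depth.

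\textbf{Totals.} Summing the three stages gives $\widetilde O(m)$ work, which is near-linear, and depth $O(1)+\widetilde O(D(G))+O(\log n)=\widetilde O(D(G))$.

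By the analysis in \Cref{cl:computeTC-hit-set}, w.h.p. every candidate satisfies property~(1) of \Cref{def:TChitset}. Also w.h.p., some candidate has true total TC-degree $O(pT)$. Because the estimates are within a constant factor of the true degrees, the selected $S_{j^*}$ has true total TC-degree $O(pT)=\widetilde O(T/\Delta)$, which is property~(2).
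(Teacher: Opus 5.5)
Your proposal is correct and follows the paper's argument. The only nontrivial step is obtaining constant-factor estimates of all TC-in- and out-degrees via \Cref{thm:size-estimation} on $G$ and $G^{\mathsf R}$, in near-linear work and $\widetilde O(D(G))$ depth; sampling the candidates and selecting the one with minimum estimated total degree are trivially parallel, and you just spell out those steps, which the paper leaves implicit.
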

\begin{proof}
It suffices to show that given a set $S$, one can verify if $\sum_{s \in S} \left(\TCInDeg(s)+\TCOutDeg(s)\right)=\widetilde O(T/\Delta)$ in near-linear work and 
$\widetilde O(D(G))$ depth. We simply use the estimation of \Cref{thm:size-estimation} that obtains a constant approximation of $\TCInDeg(v),\TCOutDeg(v)$ for every $v \in V$ by $O(\log n)$ single-source reachability computations, which takes near-linear work and $\widetilde O(D(G))$ depth.
\end{proof}

\begin{observation}\label{obs:TClargeDeg-parallel}
Algorithm $\TCLargeDeg$ can be implemented with a work bound that matches the sequential time and with $\widetilde O(D(G))$ depth. 
\end{observation}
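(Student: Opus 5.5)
We implement the two steps of $\TCLargeDeg$ in turn and check that each stays within the sequential time bound of \Cref{thm:TC-large-deg} while using $\widetilde O(D(G))$ depth.

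\textbf{Step 1: the hitting set.} By \Cref{obs:TCHittingSet-parallel}, Alg.~$\TCHittingSet$ with parameter $\DeltaOut$ runs in near-linear work and $\widetilde O(D(G))$ depth. This covers the sampling of the $O(\log n)$ candidate sets, the degree estimation of \Cref{thm:size-estimation} on $G$ and $G^{\mathsf R}$, and the choice of the candidate with minimum estimated total degree. The final choice is a parallel minimum over $O(\log n)$ sums, each computed by a parallel sum in $O(\log n)$ depth.

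\textbf{Step 2: the predecessor sets.} For every $s\in S$ in parallel, we compute $\InPred{G}{s}$ by a level-synchronous reverse BFS from $s$. Each BFS has at most $D(G)+1$ levels, and each level is a frontier expansion that takes $O(\log n)$ depth. The work of one BFS is linear in the edges it scans, which is $O(|\InPred{G}{s}|^2)$ by the argument in the proof of \Cref{thm:TC-large-deg}. Running the $|S|$ searches simultaneously, each on its own copy of the visited marks, costs no extra depth. Allocating the copies and processors is handled by a prefix sum over the estimated degrees.

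\textbf{Step 3: the closures.} For every $s\in S$ in parallel, we compute $TC(G[\InPred{}{s}])$. This needs some care, because the naive iterated-squaring approach multiplies adjacency matrices $O(\log n)$ times and so loses only polylogarithmic factors. Concretely, we form the adjacency matrix $A_s$ of $G[\InPred{}{s}]$ with self-loops added. We then square it $\lceil\log_2 n\rceil$ times with Boolean matrix multiplication using \Cref{thm:parrecmat}, each product in $O(\log n)$ depth, giving $O(\log^2 n)$ depth in total. The total work is $O(\log n\cdot\sum_{s}\TCInDeg(s)^{\omega})$, which is bounded exactly as in the runtime analysis of \Cref{thm:TC-large-deg}. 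Finally, $TC'$ is the union of these closures and $U=V(TC')$; we form both by concatenating the edge lists with a prefix sum, deduplicating by parallel sorting, and marking vertices. This takes work near-linear in the output, which is dominated by the closure work, and polylogarithmic depth.

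\textbf{Main obstacle.} The step that needs the most attention is bounding the depth of the reverse BFS by $D(G)$. The diameter must be that of the current graph $G_i$, not of the original $G$, and the BFS must run in $G$ rather than inside the subgraph $G[\InPred{}{s}]$. This is fine: every $x\in\InPred{G}{s}$ has a shortest $x$-to-$s$ path of length at most $D(G)$, and that path lies inside $\InPred{G}{s}$. Hence the BFS terminates after $D(G)+1$ rounds. Adding up the three steps gives total depth $\widetilde O(D(G))$, and the work matches the sequential bound up to polylogarithmic factors.
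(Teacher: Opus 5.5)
Your proposal is correct and follows the same route as the paper's proof. It computes the hitting set via \Cref{obs:TCHittingSet-parallel}, runs level-synchronous reverse BFS from each $s\in S$ in $\widetilde O(D(G))$ depth, and computes the closures of the induced predecessor subgraphs by $O(\log n)$ rounds of squaring with \Cref{thm:parrecmat}; you add more implementation detail, such as why every shortest path to $s$ stays inside its predecessor set.
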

\begin{proof}
By \Cref{obs:TCHittingSet-parallel}, the computation of the $\DeltaOut$-TC hitting set $S$ takes 
near-linear work and $\widetilde O(D(G))$ depth. The incoming BFS trees rooted at the vertices of $S$ can be computed in $\widetilde O(D(G))$ depth. Using \Cref{thm:parrecmat}, the computation of the transitive closure of $G[\InPred{}{s}]$ can be done in $\widetilde O(1)$ depth and nearly matching sequential time. 
\end{proof}

\begin{lemma}\label{lem:diameter-monotonicity}
Let $F$ be an input graph to $\TCLargeDeg$, let $U$ be the vertex set it returns, and let $R=F\setminus U$. Then, for every $u,v\in V(R)$,
\[
 \dist_R(u,v)=\dist_F(u,v),
\]
and hence $D(R)\leq D(F)$. Consequently, every peeling step of Algorithm $\dShortcut$ satisfies
\[
 D(G_{i-1})=D\bigl((G_i\setminus V(TC_i))^{\mathsf R}\bigr)\leq D(G_i).
\]
\end{lemma}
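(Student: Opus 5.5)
The plan is to prove that the returned set $U$ is \emph{predecessor-closed} in $F$: whenever $x\in U$ and $(w,x)\in TC(F)$, also $w\in U$. The distance statement then follows by a path argument, and the diameter consequences are bookkeeping.

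\textbf{Step 1: $U$ is predecessor-closed.} By the correctness part of \Cref{thm:TC-large-deg}, $U=\bigcup_{s\in S}\InPred{F}{s}$, where $S$ is the hitting set computed inside $\TCLargeDeg$. Take $x\in U$, so $x\in\InPred{F}{s}$ for some $s\in S$. If $w$ reaches $x$, then $w$ reaches $s$ by transitivity. Hence $w\in\InPred{F}{s}\subseteq U$. This is the same reasoning already used in \Cref{lem:close-P}.

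\textbf{Step 2: distances between surviving vertices are unchanged.} Fix $u,v\in V(R)$. Since $R\subseteq F$, we have $\dist_R(u,v)\geq\dist_F(u,v)$. For the reverse inequality, suppose $(u,v)\in TC(F)$ and let $P$ be a shortest $u$-$v$ path in $F$. If some vertex $x$ of $P$ lay in $U$, then $u$ reaches $x$ along $P$. By Step 1 this would give $u\in U$, contradicting $u\in V(R)$. So $P$ avoids $U$ entirely, lies in the induced subgraph $R$, and $\dist_R(u,v)\leq\dist_F(u,v)$. If $v$ is unreachable from $u$ in $F$, it is also unreachable in $R$, so both distances are infinite. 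In particular $TC(R)=TC(F)[V(R)]$, and by the definition $D(\cdot)=\max_{(u,v)\in TC(\cdot)}\dist$ we get $D(R)\leq D(F)$.

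\textbf{Step 3: the peeling-step consequence.} Apply Step 2 with $F=G_i$ and $U=V(TC_i)$. This is the vertex set returned by $\TCLargeDeg$ in Algorithm $\dShortcut$: $V(TC_i)=U$, since the induced closure contains a self-loop at every vertex of $U$ by reflexivity. It gives $D(G_i\setminus V(TC_i))\leq D(G_i)$. Reversal maps each $u$-$v$ path to a $v$-$u$ path of the same length, so $\dist_{F^{\mathsf R}}(v,u)=\dist_F(u,v)$ and $TC(F^{\mathsf R})=TC(F)^{\mathsf R}$. Hence reversal preserves $D$, which yields $D(G_{i-1})\leq D(G_i)$.

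I do not expect a serious obstacle. The one point needing care is Step 1: it relies on the specific structure of the returned $U$ as a union of full predecessor sets in $F$ itself, not merely on the guarantee that $U$ contains all high-outdegree vertices. I would therefore invoke the exact form $U=\bigcup_{s\in S}\InPred{F}{s}$ established in the proof of \Cref{thm:TC-large-deg}.
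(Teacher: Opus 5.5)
Your proposal is correct and follows essentially the same route as the paper. Both arguments use the explicit form $U=\bigcup_{s\in S}\InPred{F}{s}$ to show that $U$ is predecessor-closed (equivalently, $R$ is successor-closed), conclude that every path from a surviving vertex stays inside $R$ so that surviving-pair distances are unchanged, and finish by noting that reversal preserves the diameter; your extra checks that $V(TC_i)=U$ via reflexive self-loops and that $TC(R)=TC(F)[V(R)]$ are just bookkeeping the paper leaves implicit.
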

\begin{proof}
By the construction in \Cref{thm:TC-large-deg}, $U=\bigcup_{s\in S}\InPred{F}{s}$ and is therefore predecessor-closed: if $x\in U$ and $y$ reaches $x$ in $F$, then $y\in U$. Equivalently, $R$ is successor-closed: if a vertex of $R$ could reach a vertex of $U$, predecessor-closure would place that vertex in $U$. Hence every path in $F$ whose first vertex lies in $V(R)$ is entirely contained in $R$, proving that all surviving-pair distances are unchanged. In particular, $D(R)\leq D(F)$. Since reversing all edges preserves the diameter, the claimed inequality follows at every peeling step.
\end{proof}

\begin{proof}[Proof of \Cref{lem:dShortcut-parallel}]
The constant-factor upper estimate $T$ for $|TC(G)|$ can be computed by \Cref{thm:size-estimation} in near-linear work and $\widetilde O(D(G))$ depth. By \Cref{lem:diameter-monotonicity}, $D(G_i)\leq D(G)$ for every intermediate graph $G_i$. Hence, by \Cref{obs:TClargeDeg-parallel}, the computation of each $TC_i$ takes $\widetilde O\!\left(T^{\left(1+\frac{1}{d+1}\right)\frac{\omega}{2}}\right)$ work and $\widetilde O(D(G))$ depth. 
The computation of $TC_0$ is obtained by an immediate parallel implementation of Alg. $\TCNaiveComb$. This algorithm computes an incoming BFS tree from each $v \in V(G_0)$ in $\widetilde O(D(G_0))\leq\widetilde O(D(G))$ depth, and its work bound $\widetilde O(T^{1+1/(d+1)})$ matches the sequential implementation. Finally, the set $E_{1\to0}$ can be computed in parallel using $\widetilde O(T^{1+1/(d+1)})$ work and $\widetilde O(1)$ depth.
\end{proof}

\paragraph{Parallel Computation of a $d$-Shortcut in $\widetilde O(1)$ Depth.} The algorithm has $\ell=O(d\log n)$ steps. Let $G_1=G$. In every step $i \in \{1,\ldots, \ell\}$, given a graph $G_i$, the algorithm transforms it into a diameter-$(d+1)$ graph $G^*_i$ and applies the parallel implementation of Algorithm $\dShortcut$ from \Cref{lem:dShortcut-parallel} to that graph, resulting in a $d$-shortcut $H^*_i$. The graph $G_{i+1}$ is obtained by adding to $G_i$ the shortcut edges mapped from $H^*_i$.
Formally, the algorithm repeats the following procedure for $\ell=O(d\log n)$ steps, where in step $i$ it applies the following on $G_i$:

\begin{itemize}
\item Create a $(d+2)$-layered graph $G^*_i=(L_1 \cup \ldots \cup L_{d+2}, E^*_i)$, where each level $L_j$ consists of copies of $V(G_i)$, that is, $L_j=\{v^j_1,\ldots, v^j_n\}$. The edges between consecutive layers $L_j$ and $L_{j+1}$ correspond to the edges in $G_i$: $E_{i,j}=\{(v^j_a,v^{j+1}_b) ~\mid~ (v_a,v_b)\in G_i\}$ and $E^*_i=\bigcup_{j=1}^{d+1}E_{i,j}$. 

\item Apply the parallel implementation of Alg. $\dShortcut$ to $G^*_i$ and let $H^*_i$ be the output shortcut. Let $H_i$ be the immediate mapping of the $H^*_i$-edges to edges in $V(G)\times V(G)$. Formally, $H_i=\{(v_a,v_b) ~\mid~ (v^j_a,v^{j'}_b)\in H^*_i \text{ for some } j,j'\in \{1,\ldots,d+2\}\}$.

\item Let $G_{i+1}=G_i \cup H_i$.

\end{itemize}

\textbf{Analysis.} Informally, we will show that each step reduces the current diameter by a factor of roughly $(1-1/(d+1))$ and hence within $O(d\log n)$ steps the diameter of the shortcut graph becomes at most $d$. Let $H_0=\emptyset$ and for every $i \in \{0,\ldots, \ell\}$, let $\hat{H}_i=\bigcup_{j=1}^i H_j$. Set $H=\hat H_\ell$. 


\begin{claim}\label{cl:diameter-bound-parallel}
$H$ is a $d$-shortcut.
\end{claim}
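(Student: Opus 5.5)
We need to show that $H=\hat H_\ell$ is a subset of $TC(G)$ and that $G\cup H$ has diameter at most $d$. The plan is an induction on the step index, tracking $D_i := D(G_{i})$, where $G_{i}=G\cup \hat H_{i-1}$.

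\textbf{Containment.} By \Cref{lem:layered-shortcut-map}, $H_i\subseteq TC(G_i)$. Inductively $TC(G_i)=TC(G)$, since we only ever add edges already in the closure. Hence every $H_i\subseteq TC(G)$, and so $H\subseteq TC(G)$.

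\textbf{One-step contraction.} The key claim is that if $D_i\ge d+1$, then $D_{i+1}\le \lceil D_i\cdot d/(d+1)\rceil$, or a bound of similar form such as $D_{i+1}\le D_i-\lfloor D_i/(d+1)\rfloor$.

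1. Fix a pair $(u,v)\in TC(G)$ and a shortest $u$-$v$ path $P$ in $G_i$ of length $L\le D_i$.
2. Split $P$ into $\lfloor L/(d+1)\rfloor$ disjoint consecutive subpaths of length exactly $d+1$, plus a remainder of length less than $d+1$.
3. By \Cref{lem:layered-shortcut-map}, applied to $G_i$ and $H^*_i$ (a $d$-shortcut of $G_i^*$ by \Cref{lem:dShortcut-parallel} together with \Cref{lem:d-shortcut-diam}), each block's endpoints are at distance at most $d$ in $G_i\cup H_i=G_{i+1}$.
4. Concatenating, we get $\dist_{G_{i+1}}(u,v)\le L-\lfloor L/(d+1)\rfloor$.

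This bound is at most $d$ when $L\le d$, since then nothing changes and $L$ is already small enough. When $L\ge d+1$, it is at most $L(1-\frac{1}{d+1})+\frac{d}{d+1}$, which is roughly $L\cdot\frac{d}{d+1}+1$.

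\textbf{Iteration count.} Initially $D_1\le n$. As long as $D_i\ge 2(d+1)$, the additive $+1$ is absorbed, giving $D_{i+1}\le D_i(1-\frac{1}{2(d+1)})$. So after $O(d\log n)$ steps we have $D_i< 2(d+1)$. In the remaining range $d+1\le D_i<2(d+1)$, each step reduces $D_i$ by at least one, since $\lfloor D_i/(d+1)\rfloor\ge 1$. So another $O(d)$ steps bring it to at most $d$. Choosing $\ell$ as a sufficiently large constant times $d\log n$ therefore gives $D(G\cup H)\le d$. This uses $D(G_{\ell+1})=D(G\cup\hat H_\ell)$, which holds because the $G_i$ are nested supersets of $G$.

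\textbf{Main obstacle.} I expect the delicate step to be the precise applicability of \Cref{lem:layered-shortcut-map}. The lemma requires a genuine $d$-shortcut of the layered graph $G_i^*$, so we must confirm that $\dShortcut$'s guarantee (\Cref{lem:d-shortcut-diam}) holds with high probability for the layered graph in every step. A union bound over the $O(d\log n)$ steps handles this.

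A second point needing care is the end-game where $D_i$ is near $d+1$: the multiplicative bound degenerates there, and only the floor-based decrement guarantees progress. I would handle this explicitly with the per-step decrement argument given in the iteration count, rather than relying on the multiplicative bound.
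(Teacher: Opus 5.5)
Your proposal is correct and follows the paper's argument. You split a shortest path in the current graph into length-$(d+1)$ segments and shorten each via \Cref{lem:layered-shortcut-map}, which gives $L-\lfloor L/(d+1)\rfloor$, and then iterate $O(d\log n)$ times; the only difference is the endgame, where you use a multiplicative phase followed by unit decrements, whereas the paper solves the affine recurrence $t_{j+1}\le(1-\frac{1}{d+1})t_j+1$ toward its fixed point $d+1$ and then uses integrality plus one extra step.
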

\begin{proof}
For every $j \in \{1,\ldots, \ell+1\}$, let $t_j$ be the $u$-$v$ distance in $G_j=G \cup \hat{H}_{j-1}$. For $t_j\geq d+1$, we show that $t_{j+1}\leq (1-1/(d+1))\cdot t_j+1$. 
By \Cref{lem:layered-shortcut-map}, every length-$(d+1)$ path in $G_j$ is shortened to length at most $d$ in $G_j\cup H_j$.
Partition a $t_j$-edge shortest $u$-$v$ path in $G_j$ into a maximal number of edge-disjoint segments of length $d+1$. Shortening every full segment to length at most $d$ gives

$$t_{j+1}\leq d\left\lfloor \frac{t_j}{d+1} \right\rfloor+(t_j \bmod (d+1))\leq\left(1-\frac{1}{d+1}\right)\cdot t_j+1~.$$

Solving this recurrence yields $t_j\leq d+1+(t_1-d-1)(1-1/(d+1))^{j-1}$. Choose $\ell=\Theta(d\log n)$ large enough that the final term is smaller than $1$. Since distances are integral, this gives $t_\ell\leq d+1$; one additional iteration gives $t_{\ell+1}\leq d$. Hence $\dist_{G \cup H}(u,v)=\dist_{G_{\ell+1}}(u,v)\leq d$.
\end{proof}

For every $i$, we have $D(G^*_i)\leq d+1$; by \Cref{lem:diameter-monotonicity}, every intermediate graph created while processing $G_i^*$ has diameter at most $d+1$ as well. Moreover, each reachable pair of $G_i$ induces at most $O(d^2)$ reachable pairs among its layered copies, so $|TC(G_i^*)|=O(d^2T)=\widetilde O(T)$. Thus \Cref{thm:dshortcutparallel} follows from the $O(d\log n)=\widetilde O(1)$ applications of \Cref{lem:dShortcut-parallel}.

\section{$d$-Shortcuts for $d=\Omega(\log^2 n)$}\label{sec:large-d-shortcut}

In this section, we prove \Cref{thm:d-largeshortcutparallel}. The main procedure shortens every shortest directed path of length $d$ to length at most $d/2$. Repeating this procedure $O(\log n)$ times on the current augmented graph yields a $d$-shortcut. To keep the depth bounded by $\widetilde O(d)$, the procedure restricts attention to incoming and outgoing neighborhoods of radius $\widetilde O(d)$. We may assume that $d<n$, since otherwise the empty shortcut suffices.

For an integer $r\geq0$, define the $r$-incoming neighborhood of a vertex $v$ by
\[
 \InPredBall{r}{v}{G}:=\{u\in V(G)\mid \dist_G(u,v)\leq r\},
\]
and let $\TCInDeg(v,r,G):=|\InPredBall{r}{v}{G}|$. Define the $r$-outgoing neighborhood $\OutSuccBall{r}{v}{G}$ and the $r$-outdegree $\TCOutDeg(v,r,G)$ symmetrically. As in \Cref{sec:prelim}, let $\TCDeg(v,r,G)$ denote the sum of the $r$-indegree and $r$-outdegree of $v$, and let $\TCDeg(r,G)$ be the maximum total $r$-degree over all vertices. Finally, let $T_r(G)$ denote the sum of the total $r$-degrees. When $G$ is clear from context, we omit it from this notation. Since every radius-$r$ neighborhood contains its center, $T_r(G)\geq2n$.

We extend the TC-hitting set definition of \Cref{def:TChitset} to the $d$-bounded neighborhoods in the following immediate manner.  

\begin{definition}[Balanced $d$-Hitting Sets]\label{def:TChitsetdd}
For a given $n$-vertex graph $G=(V,E)$, radius $d$, and integer $\Delta$, a subset $S \subseteq V$ is a $(d,\Delta)$-\emph{balanced TC-hitting set} if:
\begin{enumerate}
\item $S$ hits every $d$-outgoing and $d$-incoming neighborhood of size at least $\Delta$. That is, for every $v$ with $\TCOutDeg(v,d,G)\geq \Delta$, we have $S \cap \OutSuccBall{d}{v}{G} \neq \emptyset$, and for every $v$ with $\TCInDeg(v,d,G)\geq \Delta$, we have $S \cap \InPredBall{d}{v}{G}\neq \emptyset$.

\item $\sum_{s \in S} \left(\TCInDeg(s,d,G)+\TCOutDeg(s,d,G)\right)=\widetilde O(T_d(G) /\Delta)$.
\end{enumerate}
\end{definition}

We first record the truncated form of the size estimator that will also be used to construct these hitting sets.

\begin{lemma}\label{lem:estimate-up-to-d}
Given an $n$-vertex graph $G=(V,E)$ and an integer $d$, there is a randomized parallel algorithm that obtains constant-factor estimates of all $d$-incoming and $d$-outgoing degrees, and hence of $T_d(G)$, with $\widetilde O(|E|)$ work and $\widetilde O(d)$ depth.
\end{lemma}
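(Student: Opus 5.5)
We adapt Cohen's size-estimation framework (\Cref{thm:size-estimation}) by truncating every reachability computation at radius $d$. For the outgoing case, assign each vertex $u$ an independent random rank $r(u)$ drawn from the exponential distribution with rate $1$. For every $v$ we want the minimum rank in its ball,
\[
 \mu(v)=\min\{r(u) \mid u\in\OutSuccBall{d}{v}{G}\}.
\]
Since $\mu(v)$ is exponential with rate $|\OutSuccBall{d}{v}{G}|$, we repeat the process with $k=\Theta(\log n)$ independent rank vectors and average the $k$ minima (or take their median). This gives a constant-factor estimate of $\TCOutDeg(v,d,G)$ simultaneously for all $v$ with high probability, by standard concentration for sums of exponentials. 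The incoming case is identical on $G^{\mathsf R}$, and summing the estimates gives $T_d(G)$ up to a constant factor.

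The main step is computing $\mu(v)$ for all $v$ within the claimed work and depth. We propagate minima along reversed edges for $d$ synchronous rounds. Set $\mu_0(v)=r(v)$, and in round $j$ set
\[
 \mu_j(v)=\min\bigl(\mu_{j-1}(v),\ \min_{(v,w)\in E}\mu_{j-1}(w)\bigr).
\]
By induction on $j$, $\mu_j(v)$ is the minimum rank over vertices at distance at most $j$ from $v$, so $\mu_d(v)=\mu(v)$. Each round is a parallel min-reduction over the edge list, costing $O(|E|)$ work and $O(\log n)$ depth. This gives $O(d|E|)$ work and $\widetilde O(d)$ depth per repetition.

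\textbf{Main obstacle.} The work is a factor $d$ too large. Unlike Cohen's untruncated setting, we cannot fix this by processing vertices in increasing rank order with pruned BFS, because that sequential order costs depth. I would first try a frontier (Bellman–Ford style) variant: in round $j$, only vertices whose value changed in round $j-1$ push to their in-neighbors. Distance-bounded "least element lists" analyses then give that each vertex's value changes only $O(\log n)$ times in expectation, since a value changes only when a new prefix minimum of the ranks appears as the radius grows. Hence each edge is scanned $O(\log n)$ times in expectation, giving $\widetilde O(|E|)$ total work.

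To make this rigorous, order the vertices of each ball $\OutSuccBall{j}{v}{G}$ by increasing $j$ and break ties randomly. The number of changes of $\mu_j(v)$ as $j$ runs from $0$ to $d$ is at most the number of left-to-right minima of a random permutation restricted to the nested sets $\OutSuccBall{0}{v}{G}\subseteq\cdots\subseteq\OutSuccBall{d}{v}{G}$. This count is $O(\log n)$ in expectation and with high probability. An edge $(v,w)$ is scanned only when $\mu(w)$ changes, so the total work is $\widetilde O(|E|)$ with high probability. If this argument needed extra care, the fallback is to use the $O(d|E|)$ bound only where $d$ is polylogarithmic, but I expect the change-counting argument to go through directly.
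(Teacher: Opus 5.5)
Your proposal is correct, but it reaches the bound by a more hands-on route than the paper. The paper's proof is one sentence: run Cohen's estimator on $G$ and $G^{\mathsf R}$, truncate each of its reachability explorations after $d$ rounds, and note that truncation cannot increase the work.

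That is immediate for the reachability-based form of the framework in \Cref{thm:size-estimation}. For example, take each sampling rate $2^{-i}$ and each of $O(\log n)$ repetitions. A single BFS on $G^{\mathsf R}$ from the sampled set, stopped at depth $d$, tells every vertex whether its $d$-outgoing ball meets the sample. The smallest rate at which a majority of repetitions still hit then gives a constant-factor estimate. Each such BFS costs $O(|E|)$ work and $\widetilde O(d)$ depth, so no amortization is needed.

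You instead compute the exact minimum exponential rank in every $d$-ball by synchronous min-propagation. This exposes the $O(d|E|)$ Bellman--Ford cost, and you repair it correctly. With frontier-based pushing, a change of $\mu_j(v)$ in round $j$ certifies a new prefix minimum in any rank-independent, layer-by-layer ordering of the ball. The prefix-minimum indicators are independent with probabilities $1/i$. Hence each vertex changes $O(\log n)$ times w.h.p.\ and each edge is scanned $O(\log n)$ times.

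Your argument is self-contained and yields the cleaner rate-based estimator. Your remark that the rank-ordered pruned BFS does not survive truncation pinpoints the subtlety hidden in the paper's ``truncation does not increase the work.'' The price is the amortization argument plus a work-efficient per-round group-by-min over the pushed values, which you should state explicitly. The fallback in your last sentence is unnecessary.
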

\begin{proof}
Apply the estimator of Cohen~\cite{Cohen97} to both $G$ and $G^{\mathsf R}$, truncating every reachability exploration after $d$ rounds. The two applications estimate, respectively, all $d$-outgoing and $d$-incoming degrees. Truncation reduces the depth to $\widetilde O(d)$ and does not increase the work.
\end{proof}

The proof of \Cref{cl:computeTC-hit-set} now applies to the bounded neighborhoods; we include the details to make the role of randomness explicit.

\begin{claim}\label{cl:computeTC-hit-set-d}
Let $0\leq r\leq M$. There is a randomized near-linear-time algorithm $\TCHittingSetd$ that, given an $n$-vertex graph $G$ and parameters $r,M,\Delta$, outputs, with high probability, a set $S\subseteq V(G)$ that (i) hits every $r$-incoming and $r$-outgoing neighborhood of size at least $\Delta$, and (ii) satisfies $\sum_{s\in S}(\TCInDeg(s,M,G)+\TCOutDeg(s,M,G))=\widetilde O(T_M(G)/\Delta)$. The algorithm has a parallel implementation with near-linear work and $\widetilde O(M)$ depth. In particular, setting $r=M=d$ gives a $(d,\Delta)$-balanced TC-hitting set in the sense of \Cref{def:TChitsetdd}; the algorithm below uses the mixed-radius form.
\end{claim}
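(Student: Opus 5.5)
The plan is to mirror the proof of \Cref{cl:computeTC-hit-set}, with radius-$r$ neighborhoods used for the hitting property and radius-$M$ neighborhoods used for the cost. First, sample $k=\Theta(\log n)$ candidate sets $S_1,\ldots,S_k$. In each, every vertex is included independently with probability $p=\min\{1,c\log n/\Delta\}$.

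\textbf{Property (i).} Fix a candidate $S_j$ and a single $r$-incoming or $r$-outgoing neighborhood of size at least $\Delta$. The probability that $S_j$ misses it is at most $(1-p)^\Delta\leq n^{-c}$, and it is zero if $p=1$. There are at most $2n$ such neighborhoods and $k$ candidates. A union bound therefore shows that, with high probability, every candidate satisfies (i) simultaneously. Note that (i) never has to be verified algorithmically.

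\textbf{Property (ii).} By linearity of expectation,
\[
\mathbb{E}\Bigl[\sum_{s\in S_j}\bigl(\TCInDeg(s,M,G)+\TCOutDeg(s,M,G)\bigr)\Bigr]=p\,T_M(G).
\]
Markov's inequality then shows that each $S_j$ has cost at most $2pT_M(G)=\widetilde O(T_M(G)/\Delta)$ with probability at least $1/2$. Since the $S_j$ are independent, with high probability some candidate is good.

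\textbf{Selecting a good candidate.} Invoke \Cref{lem:estimate-up-to-d} with radius $M$ to get constant-factor estimates of all $M$-incoming and $M$-outgoing degrees. Sum these estimates over each $S_j$ and output the candidate with the minimum estimated cost. Its true cost is then within a constant factor of the cost of the good candidate, so it is $\widetilde O(T_M(G)/\Delta)$.

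\textbf{Complexity and the special case.} The sampling takes $O(n\log n)$ work and $O(1)$ depth. The $k$ estimated sums take $O(kn)$ work and $O(\log n)$ depth. The estimation costs $\widetilde O(|E|)$ work and $\widetilde O(M)$ depth, and it dominates. Setting $r=M=d$ recovers \Cref{def:TChitsetdd}.

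The main subtlety is that estimator failure and (i) failure are both high-probability events that must be union-bounded together with the Markov step. The estimator is only needed at radius $M$; the radius $r\leq M$ enters only the union-bound argument, so the mixed-radius form costs nothing extra.
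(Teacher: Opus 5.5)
Your proposal is correct and follows the paper's proof step for step. The shared steps are:
\begin{enumerate}
\item sampling $\Theta(\log n)$ independent candidates at rate $\min\{1,c\log n/\Delta\}$;
\item a union bound over the at most $2n$ radius-$r$ neighborhoods for property (i);
\item linearity of expectation plus Markov on the radius-$M$ cost for property (ii);
\item selecting the candidate of minimum estimated radius-$M$ cost via the truncated estimator, whose $\widetilde O(M)$ depth dominates.
\end{enumerate}
Your argument that the minimizer's true cost is within the ratio of the estimator's constants of a good candidate's cost is a slightly more explicit version of the paper's one-line selection step.
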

\begin{proof}
Generate $q=\Theta(\log n)$ candidate sets $S_1,\ldots,S_q$ by including every vertex independently with probability $p=\min\{1,c\log n/\Delta\}$, for a sufficiently large constant $c$. A union bound over all $r$-incoming and $r$-outgoing neighborhoods of size at least $\Delta$ shows that every candidate satisfies property~(1) with high probability. Moreover,
\[
 \mathbb E\!\left[\sum_{s\in S_j}
 \bigl(\TCInDeg(s,M,G)+\TCOutDeg(s,M,G)\bigr)\right]
 =pT_M(G).
\]
By Markov's inequality, each candidate satisfies property~(2) with constant probability. Thus, with high probability, at least one candidate satisfies both properties. Using \Cref{lem:estimate-up-to-d} with radius $M$, select the candidate of minimum estimated total $M$-degree. Its actual total $M$-degree is within a constant factor of the minimum and is therefore $\widetilde O(T_M(G)/\Delta)$. The sampling and selection require near-linear work, and the degree estimation has $\widetilde O(M)$ depth.
\end{proof}

\begin{lemma}\label{lem:naive-shortcut}
There is an algorithm $\NaiveShortcut$ that, given an $n$-vertex $m$-edge graph $G$, a topological ordering of its strongly connected components, and an integer $d\geq6$, computes a $d$-shortcut of size $\widetilde O\!\left(\frac{n^{\omega}}{d^{\omega-1}}\right)$ in time $\widetilde O\!\left(\frac{n^{\omega}}{d^{\omega-1}}+m\right)$. It has a parallel implementation with the same work bound and polylogarithmic depth.
\end{lemma}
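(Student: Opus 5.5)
The plan follows the sequential sketch of \Cref{sec:large-d-overview}. We use the given topological ordering of the strongly connected components to fix a linear order $v_1,\ldots,v_n$ of $V(G)$ in which every SCC occupies a contiguous interval and every edge between distinct SCCs goes forward. We then cut this order into $k=\lfloor d/6\rfloor$ consecutive blocks $B_1,\ldots,B_k$, each of at most $\lceil n/k\rceil=O(n/d)$ vertices. This is a vertex partition only; we do not try to keep SCCs inside a single block, since an SCC may be larger than a block. The shortcut will be $H=\bigcup_{j}\bigl(TC(G)\cap (B_j\times B_j)\bigr)$, the closure edges between pairs lying in a common block. Its size is at most $\sum_j|B_j|^2=O(n^2/d)\leq\widetilde O(n^\omega/d^{\omega-1})$, because $\omega\geq 2$ and $d\leq n$.

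\textbf{Diameter.} Fix $(u,v)\in TC(G)$ and a $u$-$v$ path $P$. The block index along $P$ is not monotone, because edges inside an SCC can go backward. We therefore handle the two kinds of edges separately.
\begin{itemize}
\item Edges of $P$ between distinct SCCs go forward in the order.
\item Because each SCC is an interval, the SCC indices along $P$ are nondecreasing. Hence $P$ visits each SCC in one contiguous stretch.
\item Consequently, the set of vertices of $P$ lying in a fixed block $B_j$ forms one contiguous interval of $P$, as long as those vertices come from distinct SCCs. Inside a single SCC that spans several blocks, this can fail.
\end{itemize}
To handle this, we modify the construction. For every SCC $C$ we add a star: we pick a root $r_C\in C$ and add the edges $(x,r_C)$ and $(r_C,x)$ for all $x\in C$. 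This costs $O(n)$ edges and makes the diameter of every SCC at most $2$. After contracting SCCs, the order is a genuine topological order of a DAG. We then route $P$ through one entry vertex and one exit vertex per SCC, so each SCC stretch costs at most $2$ edges. The path crosses block boundaries at most $k-1$ times in a forward direction. Between consecutive crossings, $P$ consists of a maximal stretch inside one block. Its first and last vertices lie in the same block and the last is reachable from the first, so a single $H$-edge covers the stretch. The exception is a stretch where an SCC straddles the boundary, which the stars handle with $O(1)$ extra edges. Counting one shortcut edge per block, one crossing edge per boundary, and $O(1)$ star edges at boundary-straddling SCCs gives a path of length at most about $2k+O(k)\leq d$ once the constant in $k=\lfloor d/c\rfloor$ is chosen; this is where $d\geq 6$ is used.

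\textbf{Computing $H$.} This is the main obstacle. We need the restricted closure $TC(G)\cap(B_j\times B_j)$, but a reachability path between two vertices of $B_j$ may leave the block. Such a path can only pass through vertices whose positions lie between them. Forward edges never return to an earlier block, so leaving $B_j$ is only possible inside an SCC that straddles the block's boundary. The fix is to compute the closure of the induced subgraph $G[B_j]$ augmented with the star edges restricted to $B_j$. We also add the edges $(x,y)$ for $x,y\in B_j\cap C$ with $C$ a straddling SCC. There are at most two straddling SCCs per block and they are cliques in the closure, so these edges are cheap to add: they contribute $O(|B_j|^2)$ edges. The remaining claim is that any $x$-$y$ path leaving $B_j$ exits and re-enters only through straddling SCCs, and these are now internally complete. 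Then the closure of $G[B_j]$ plus these edges equals $TC(G)\cap(B_j\times B_j)$.

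\textbf{Cost and parallel depth.} Each closure is computed by $O(\log n)$ repeated Boolean squarings using \Cref{thm:parrecmat}, at cost $\widetilde O((n/d)^\omega)$ per block. Summing over $O(d)$ blocks gives $\widetilde O(n^\omega/d^{\omega-1})$. Building the induced subgraphs and the stars costs $O(m+n)$ work. All blocks are processed in parallel, so the total depth is polylogarithmic.
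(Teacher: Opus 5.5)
The gap is in the step you flag as the main obstacle. Your claim that the closure of $G[B_j]$, augmented with the star edges inside $B_j$ and all pairs in $B_j\cap C$ for each straddling SCC $C$, equals $TC(G)\cap(B_j\times B_j)$ is false. A path between two vertices of $B_j$ can leave the block into the outside part of a straddling SCC $C$. It can then re-enter $B_j$ through an edge into a \emph{different}, later SCC, and making $B_j\cap C$ complete does not capture this.

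For instance, let $C=\{a,x\}$, with $a$ the last vertex of $B_{j-1}$ and $x$ the first vertex of $B_j$, edges $x\to a$ and $a\to x$, and an edge $a\to b$ to a singleton SCC $\{b\}\subseteq B_j$. Then $(x,b)\in TC(G)\cap(B_j\times B_j)$. Yet in your augmented graph $x$ has no outgoing edge: both star edges of $C$ involve $a\notin B_j$, and $B_j\cap C=\{x\}$. So $(x,b)$ is never produced. The right boundary fails symmetrically, via an edge $z\to c$ with $c\in C'\setminus B_j$. Your intermediate assertion that such paths only use vertices positioned between the endpoints fails for the same reason.

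Consequently the algorithm does not output the set $H$ your diameter argument is about. The step ``the last is reachable from the first, so a single $H$-edge covers the stretch'' is therefore unsupported for what is actually computed.

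The construction can be rescued in two ways, but you must commit to one and replace ``about $2k+O(k)$'' with an explicit count.

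(a) Keep vertex-level blocks but compute only $TC(G[B_j])$ plus the stars. Cut a $u$-$v$ path into two kinds of pieces. The first kind is straddling SCCs: there are at most $k-1$, since each contains a distinct block boundary, and each costs $2$ star edges. The second kind is maximal runs of consecutive non-straddling SCCs within one block. There is at most one run per block, because an SCC lying between two SCCs of $B_j$ lies inside $B_j$. Each run is a path in $G[B_j]$, hence costs one closure edge. With at most $2k-2$ connecting edges, the length is at most $5k-4\le d$ for $k=\lfloor d/6\rfloor$.

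(b) Follow the paper: partition the topologically ordered \emph{representatives} of the condensation, so no SCC is split. Then the closure of each block of the DAG automatically captures all within-block reachability. Each block transition costs three edges $v_i\to x\to y\to v_j$ through the stars, giving $r+3(r-1)+2=4r-1\le d$. Route (b) removes the straddling case analysis altogether, which is exactly where your argument broke.
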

\begin{proof}
We may assume that $d<n$, since otherwise the empty shortcut is already a $d$-shortcut.
Let $V_1,\ldots,V_q$ be the strongly connected components in the given topological order, and choose a representative $v_i\in V_i$ for every $i$. Add both directed edges between $v_i$ and every other vertex of $V_i$; let $H_0$ be the resulting set. Thus, any two vertices in one component are at distance at most two in $G\cup H_0$.

Let $G'$ be the DAG on representatives in which $(v_i,v_j)\in E(G')$ whenever $G$ has an edge from $V_i$ to $V_j$. Partition the topologically ordered vertices of $G'$ into $k=\min\{q,\lfloor d/6\rfloor\}$ consecutive blocks $Z_1,\ldots,Z_k$ of size $O(n/d)$. For every block, compute $TC(G'[Z_j])$, and let
\[
 H_1:=\bigcup_{j=1}^k TC(G'[Z_j]).
\]
This takes $\widetilde O\!\left(\frac{n^{\omega}}{d^{\omega-1}}\right)$ time in total.

We first verify that all added edges are valid shortcuts for the original graph. Every edge of $H_0$ joins two vertices in one strongly connected component and hence belongs to $TC(G)$. Moreover, if $(v_i,v_j)$ is an edge of $G'$, some original edge $(x,y)$ has $x\in V_i$ and $y\in V_j$; the path from $v_i$ to $x$ inside $V_i$, followed by $(x,y)$ and a path from $y$ to $v_j$ inside $V_j$, realizes $(v_i,v_j)$ in $G$. Therefore every path in $G'$, and in particular every edge of $H_1$, represents a reachable pair in $G$. Thus $H_0\cup H_1\subseteq TC(G)$.

Consider a path $P$ in $G'$. Label each vertex of $P$ by the index of its block. These labels are nondecreasing, so $P$ decomposes into $r\leq k$ consecutive subpaths, each contained in one block. Each such subpath is replaced by one edge of $H_1$. A transition edge $(v_i,v_j)$ between consecutive subpaths corresponds to some original edge $(x,y)$ with $x\in V_i$ and $y\in V_j$; it can therefore be realized in $G\cup H_0$ by the three-edge path $v_i\to x\to y\to v_j$. Including the two edges that connect the original endpoints to their component representatives, every reachable pair in $G$ is at distance at most
\[
 r+3(r-1)+2=4r-1\leq4k-1\leq d.
\]
Hence $H_0\cup H_1$ is a $d$-shortcut for $G$. Its size is $O(n)+\widetilde O\!\left(\frac{n^{\omega}}{d^{\omega-1}}\right)=\widetilde O\!\left(\frac{n^{\omega}}{d^{\omega-1}}\right)$. Constructing $G'$ takes $O(m)$ work. All block products can be computed in parallel using \Cref{thm:parrecmat}, which also gives the stated parallel bound.
\end{proof}

The following parallel LDD procedure takes a central role in our algorithm.
\begin{lemma}[Parallel LDD, \cite{BringmannFHL25,HaeuplerJS26}]\label{parallel-LDD}
Given an unweighted directed graph $G=(V,E)$ and a diameter parameter $d$, we can compute a set of edges $E^{\mathrm{rem}}$ and a list of vertex sets $V_1,\ldots, V_\ell$ such that:
\begin{itemize}
\item $V_1,\ldots, V_\ell$ are the strongly connected components of $G\setminus E^{\mathrm{rem}}$ in topological order; that is, an edge of $G\setminus E^{\mathrm{rem}}$ can point from $V_i$ to $V_j$ only if $i\leq j$,
\item each $V_i$ has a weak diameter at most $d$,
\item for every $e \in E$, we have $\Pr[e \in E^{\mathrm{rem}}]=O(\log^2 n/d)$.
\end{itemize}
The algorithm has $\tilde{O}(m)$ work and $\tilde{O}(d)$ depth.
\end{lemma}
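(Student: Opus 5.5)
The statement is the parallel directed LDD lemma, which the paper cites from \cite{BringmannFHL25,HaeuplerJS26}. So my plan is not to rebuild the construction from scratch. Instead I would recall the standard ball-growing scheme and check that the three properties hold and that the algorithm runs in $\widetilde O(m)$ work and $\widetilde O(d)$ depth.

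\textbf{Recursive ball carving.} I would carve in rounds, in the style of Bringmann et al. Sample random radii from a geometric (exponential-like) distribution with mean $\Theta(d/\log n)$. Grow outgoing or incoming balls from randomly chosen ``heavy'' centers, where heaviness is decided with the truncated size estimator of \Cref{lem:estimate-up-to-d} at radius $d$. Each carved ball's boundary edges go into $E^{\mathrm{rem}}$, and the algorithm recurses on the ball and on its complement.

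Every surviving piece either has weak diameter at most $d$ or has shrunk by a constant factor in the number of vertices reachable within $d/2$. This bounds the recursion depth by $O(\log n)$. Each level is a collection of truncated BFS computations, which gives $\widetilde O(d)$ depth per level and $\widetilde O(m)$ work per level. After recursing, take the strongly connected components of $G\setminus E^{\mathrm{rem}}$. The carving order induces a topological order: edges cut from an out-ball go outward and edges cut from an in-ball go inward. Because each ball lies on one side of every cut, the remaining edges can only go forward in the carving order.

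\textbf{Edge-cutting probability.} By the memoryless property of the geometric radius, a fixed edge $(x,y)$ lies on a given ball's boundary with probability $O(\log n/d)$, conditioned on the ball reaching $x$. Over the $O(\log n)$ recursion levels, at each of which the edge is charged at most once, this sums to $O(\log^2 n/d)$.

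\textbf{Main obstacle.} The hardest step is implementing the SCC and topological-order computation of $G\setminus E^{\mathrm{rem}}$ in $\widetilde O(d)$ depth. General parallel SCC is exactly what we cannot afford here. I would avoid computing SCCs generically and read them off the recursion tree instead. The leaves of the carving are the vertex sets $V_i$, each with weak diameter at most $d$, and the ordering of carved pieces supplies the topological order directly. This is the approach of \cite{HaeuplerJS26}, and I would verify its invariants against the properties stated in the lemma.
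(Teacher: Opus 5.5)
The paper does not prove this lemma. It imports it as a black box from \cite{BringmannFHL25,HaeuplerJS26}, so citing it is exactly what the paper does. The sketch you add in support, however, would not establish the statement, and it breaks where the real content lies.

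(i) Carving rule and progress. In the Bernstein--Nanongkai--Wulff-Nilsen-type scheme that these works refine, balls are grown from \emph{light} vertices, i.e., vertices whose radius-$\Theta(d)$ out- or in-ball contains at most a constant fraction of the current vertex set. Each carved ball is then a constant-factor smaller instance, which gives $O(\log n)$ levels. The vertices left once no light vertex remains are all heavy, and their weak diameter is bounded because the large out-ball of $u$ and the large in-ball of $v$ must intersect. If you grow from heavy centers instead, a ball can contain almost everything, and recursing on the complement of a single ball can shrink it by one vertex. So your constant-factor shrinkage claim is unsupported, and with it the $O(\log n)$ level count behind the $O(\log^2 n/d)$ bound.

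(ii) Parallelism. ``Each edge is charged at most once per level'' holds for \emph{sequential} carving, where each ball is grown in the graph left by the previous ones. One level can involve $\Omega(n)$ balls in sequence, so this is not $\widetilde O(d)$ depth. If you grow all balls of a level simultaneously, they overlap, and three things fail:
\begin{itemize}
\item the pieces are no longer disjoint;
\item an edge can lie on many boundaries, so memorylessness no longer yields $O(\log n/d)$ per level;
\item the total BFS work is the sum of the ball sizes, which can reach $\Theta(nm)$.
\end{itemize}
The actual content of a parallel LDD is a conflict-resolution rule (for example, each vertex joins the first center that claims it under a random order or random shifts), together with its own cut-probability analysis and a work-efficient implementation. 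Your sketch supplies none of this.

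(iii) SCCs. Reading the $V_i$ off the leaves of the carving gives an ordered partition into pieces of small weak diameter with only forward edges. It does not give the strongly connected components of $G\setminus E^{\mathrm{rem}}$, which the first bullet requires. A leaf's weak-diameter bound is witnessed by paths of $G$ that may leave the leaf. For example, take $u\to w\to v$ with $w$ in a carved out-ball, so that the boundary edge $(w,v)$ lies in $E^{\mathrm{rem}}$; then the leaf containing $u$ and $v$ need not be strongly connected after the removal.

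This is not cosmetic. \Cref{lem:naive-shortcut} adds bidirectional star edges inside each $V_i$, and these lie in $TC(G)$ only if $V_i$ is strongly connected. Refining leaves into SCCs is a reachability computation in subgraphs whose strong diameter is unbounded, so it is not free in $\widetilde O(d)$ depth. You would need either pieces that are strongly connected by construction or an explicit argument for this refinement step.

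A minor point: the topological order is not literally the carving order. A carved out-ball must come after everything later carved from its complement, and a carved in-ball must come before it.
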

 
\begin{corollary}\label{cor:LDD-cover}
There is an algorithm that, given an unweighted directed graph $G=(V,E)$ and a diameter parameter $d$, computes $k=\widetilde O(1)$ subgraphs $\mathcal{G}=\{G_1,\ldots,G_k\}$, together with a topological ordering of the SCCs of each $G_i$, such that, with high probability, every pair $u,v\in V$ with $\dist_G(u,v)\leq d$ satisfies $\dist_{G_i}(u,v)=\dist_G(u,v)$ for some $G_i\in\mathcal G$. The algorithm has $\widetilde O(m)$ work and $\widetilde O(d)$ depth.
\end{corollary}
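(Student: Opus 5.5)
The plan is to run the parallel LDD of \Cref{parallel-LDD} independently $k=\Theta(\log n)$ times, each time with diameter parameter $d'=C\cdot d\log^2 n$ for a large constant $C$. For the $j$-th run, let $G_j=G\setminus E^{\mathrm{rem}}_j$. We output the family $\mathcal G=\{G_1,\ldots,G_k\}$ together with the SCC lists $V_1,\ldots,V_\ell$ returned by that run. By \Cref{parallel-LDD}, these lists are exactly the strongly connected components of $G_j$ in topological order, so the required ordering comes for free.

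\textbf{Cost.} Each run uses $\widetilde O(m)$ work and $\widetilde O(d')=\widetilde O(d)$ depth. The $k$ runs are independent, so executing them in parallel gives $\widetilde O(m)$ total work and $\widetilde O(d)$ depth.

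\textbf{Correctness for a fixed pair.} Fix $u,v$ with $\dist_G(u,v)\leq d$, and fix one shortest $u$-$v$ path $P$ in $G$; it has at most $d$ edges.
\begin{itemize}
\item In a single run, each edge of $P$ lies in $E^{\mathrm{rem}}_j$ with probability $O(\log^2 n/d')=O(1/(Cd))$.
\item A union bound over the at most $d$ edges of $P$ shows that $P$ loses an edge with probability $O(1/C)\leq 1/2$, once $C$ is large enough.
\item If $P\subseteq G_j$, then $\dist_{G_j}(u,v)\leq|P|=\dist_G(u,v)$. Since $G_j\subseteq G$, the reverse inequality $\dist_{G_j}(u,v)\geq\dist_G(u,v)$ always holds, so equality follows.
\end{itemize}
The runs use independent randomness, so all $k$ of them fail on $P$ with probability at most $2^{-k}=n^{-c}$.

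\textbf{Uniformity over pairs.} A union bound over the at most $n^2$ pairs makes the guarantee hold for all pairs simultaneously with high probability. Only one fixed shortest path per pair needs to survive, so no union bound over paths is needed.

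\textbf{Main obstacle.} The subtle step is the parameter choice. \Cref{parallel-LDD} gives a per-edge deletion probability of $O(\log^2 n/d')$, so $d'$ must exceed $d$ by a $\log^2 n$ factor before a length-$d$ path survives with constant probability. I would verify that this inflation only affects the weak-diameter guarantee of the components, which the corollary does not use, and the depth, which changes only by polylogarithmic factors absorbed into $\widetilde O(d)$. If one wanted the LDD parameter to stay exactly $d$, the fallback would be to cut $P$ into pieces. That does not work directly, however, so I would keep the inflated parameter.
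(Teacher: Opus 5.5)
Your proposal is correct and follows essentially the same route as the paper: $\Theta(\log n)$ independent LDD runs with diameter parameter $\Theta(d\log^2 n)$, a union bound over the at most $d$ edges of one fixed shortest path, amplification across runs, and a union bound over pairs. Your explicit remark that $G_j\subseteq G$ gives $\dist_{G_j}(u,v)\geq\dist_G(u,v)$, and hence equality once the path survives, is a step the paper leaves implicit.
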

\begin{proof}
Independently run \Cref{parallel-LDD} $c_1\log n$ times with decomposition diameter $D=c_2d\log^2 n$, for sufficiently large constants $c_1,c_2$, and let each $G_i$ be the graph obtained after deleting the edges removed in the corresponding run. Fix a pair $u,v$ with $\dist_G(u,v)\leq d$ and fix one shortest $u$-to-$v$ path $P$. This path has at most $d$ edges. By a union bound over its edges, a single run preserves all of $P$ with probability at least
\[
 1-|P|\cdot O(\log^2 n/D)\geq 1/2.
\]
Thus, the probability that none of the runs preserves $P$ is $n^{-\Omega(c_1)}$. A union bound over all ordered pairs proves the simultaneous guarantee with high probability. The runs are performed in parallel; their logarithmic multiplicity and the factor $\log^2 n$ in $D$ are absorbed by $\widetilde O(\cdot)$.
\end{proof}

\begin{lemma}\label{lem:parallel-in-ball}
There is a parallel algorithm $\ShortcutInBalls$ that, given an $n$-vertex graph $G$, a vertex $v\in V(G)$, and integers $d<M$, returns a shortcut set $H_v$ and the set $Q'_v=\InPredBall{M-d}{v}{G}$ such that, for every $x\in Q'_v$ and every $y\in\InPredBall{d}{x}{G}$,
\[
 \dist_{G\cup H_v}(y,x)\leq d/\log^2 n.
\]
The work is
\[
 \widetilde O\!\left(\frac{\TCInDeg(v,M,G)^{\omega}}{d^{\omega-1}}
 +\left|E\!\left(G[\InPredBall{M}{v}{G}]\right)\right|\right),
\]
and the depth is $\widetilde O(M)$.
Moreover, $|H_v|=\widetilde O\!\left(\frac{\TCInDeg(v,M,G)^{\omega}}{d^{\omega-1}}\right)$.
\end{lemma}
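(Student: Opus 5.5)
The natural plan is to compute the ball $B:=\InPredBall{M}{v}{G}$, run the LDD cover on $G[B]$, and apply $\NaiveShortcut$ to every graph of the cover. The union of all these shortcut sets is $H_v$.

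First, compute $B$ by an incoming BFS from $v$ truncated at depth $M$. This uses $O(|E(G[B])|)$ work and $O(M)$ depth, and gives $Q'_v=\InPredBall{M-d}{v}{G}$ as the vertices at BFS-depth at most $M-d$.

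The key structural observation is the following. For $x\in Q'_v$ and $y\in\InPredBall{d}{x}{G}$, fix a shortest $y$-to-$x$ path $P$. Every vertex $z$ of $P$ satisfies $\dist_G(z,v)\leq \dist_G(z,x)+\dist_G(x,v)\leq d+(M-d)=M$. Hence $P\subseteq G[B]$, and $\dist_{G[B]}(y,x)=\dist_G(y,x)\leq d$. It therefore suffices to shortcut, inside $G[B]$, all pairs at distance at most $d$ down to distance $d/\log^2 n$.

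Next, apply \Cref{cor:LDD-cover} to $G[B]$ with parameter $d$. This yields $k=\widetilde O(1)$ subgraphs $G_1,\ldots,G_k$, each equipped with a topological ordering of its SCCs, using $\widetilde O(|E(G[B])|)$ work and $\widetilde O(d)\leq\widetilde O(M)$ depth. With high probability, every such pair $(y,x)$ satisfies $\dist_{G_i}(y,x)=\dist_{G[B]}(y,x)$ for some $i$; in particular, $y$ reaches $x$ in $G_i$.

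For each $i$, run $\NaiveShortcut$ (\Cref{lem:naive-shortcut}) on $G_i$ with diameter parameter $d'=\max\{6,\lfloor d/\log^2 n\rfloor\}$. The result is a $d'$-shortcut $H^{(i)}\subseteq TC(G_i)\subseteq TC(G)$, so $\dist_{G_i\cup H^{(i)}}(y,x)\leq d'$. Set $H_v=\bigcup_i H^{(i)}$.

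Since $|V(G_i)|\leq\TCInDeg(v,M,G)$, each call costs $\widetilde O(\TCInDeg(v,M,G)^{\omega}/(d')^{\omega-1}+|E(G[B])|)$ work and has the same size bound. With $d'=\widetilde\Theta(d)$, the polylogarithmic loss is absorbed into $\widetilde O(\cdot)$. The calls run in parallel with polylogarithmic depth, so the overall depth is $\widetilde O(M)$, dominated by the BFS and the LDD.

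The main obstacle is the exact constant in the target bound $d/\log^2 n$ versus $\NaiveShortcut$'s requirement $d'\geq 6$. When $d/\log^2 n<6$, the promised bound does not follow directly. I would handle this by using the paper's standing assumption $d=\Omega(\log^2 n)$ with a large enough constant. Alternatively, if the statement's $\log^2 n$ is meant up to constants, I would choose $d'=\lfloor d/\log^2 n\rfloor$ and rely on $d\geq 6\log^2 n$.

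A secondary point is that correctness holds only with high probability, because it relies on the LDD cover. This is consistent with the paper's convention that all of its algorithms are randomized and correct with high probability.
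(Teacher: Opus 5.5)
Your proposal is correct and matches the paper's proof essentially step for step. Both compute $Q_v=\InPredBall{M}{v}{G}$ and $Q'_v$ by a truncated incoming BFS, use the triangle inequality to place every shortest $y$-to-$x$ path inside $G[Q_v]$, apply the LDD cover to $G[Q_v]$, and run $\NaiveShortcut$ with $d'=\lfloor d/\log^2 n\rfloor$ on each cover graph. The requirement $d'\geq 6$ is resolved the same way, by the constant in $d=\Omega(\log^2 n)$, rather than by your $\max\{6,\cdot\}$ variant, which as you note would not give the stated bound.
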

\begin{proof}
Let $Q_v=\InPredBall{M}{v}{G}$ and $Q'_v=\InPredBall{M-d}{v}{G}$. These sets can be computed with near-linear work and $\widetilde O(M)$ depth. Apply \Cref{cor:LDD-cover} to $G[Q_v]$ with diameter parameter $d$, obtaining $\widetilde O(1)$ topologically ordered subgraphs $\mathcal G_v=\{G_1,\ldots,G_k\}$. For every $G_i$, apply \Cref{lem:naive-shortcut} with $d'=\lfloor d/\log^2 n\rfloor$, and let $H_{i,v}$ be the resulting $d'$-shortcut. The constant implicit in $d=\Omega(\log^2 n)$ is chosen so that $d'\geq6$; smaller values can be handled by adjusting the constant in the theorem. Return
\[
 H_v=\bigcup_{i=1}^k H_{i,v}
 \qquad\text{and}\qquad Q'_v.
\]

\noindent \textbf{Correctness.} Fix $x\in Q'_v$ and $y\in\InPredBall{d}{x}{G}$. Since $\dist_G(y,x)\leq d$ and $\dist_G(x,v)\leq M-d$, the entire shortest $y$-to-$x$ path lies in $G[Q_v]$, and hence $\dist_{G[Q_v]}(y,x)=\dist_G(y,x)$. By \Cref{cor:LDD-cover}, some $G_i\in\mathcal G_v$ preserves this path. Therefore, \Cref{lem:naive-shortcut} gives $\dist_{G_i\cup H_{i,v}}(y,x)\leq d'$, as required.

\smallskip
\noindent \textbf{Work and Depth.} Let $m_v=|E(G[Q_v])|$ and $n_v=|Q_v|=\TCInDeg(v,M,G)$. By \Cref{cor:LDD-cover}, constructing $\mathcal G_v$ uses $\widetilde O(m_v)$ work and $\widetilde O(d)$ depth. Across the $\widetilde O(1)$ subgraphs, \Cref{lem:naive-shortcut} uses $\widetilde O\!\left(\frac{n_v^{\omega}}{d^{\omega-1}}+m_v\right)$ work. Computing $Q_v$ and $Q'_v$ has depth $\widetilde O(M)$, which dominates the depth bound.
\end{proof}


\begin{theorem}\label{thm:TC-large-d-deg}
Let $G$ be an $n$-vertex $m$-edge digraph, let $d<M$, and let $1\leq\Delta_{\mathrm{out}}\leq\Delta_{\max}$. Suppose that $\TCDeg(M,G)\leq\Delta_{\max}$, and let $T^*\geq T_M(G)$ be an upper bound. There is an algorithm $\ShortcutLargeDeg$ that returns a shortcut $H$ and a set $V'\subseteq V(G)$ such that every vertex $v$ with $\TCOutDeg(v,M-d,G)\geq\Delta_{\mathrm{out}}$ belongs to $V'$. Moreover, for every $v\in V'$ and every $u\in\InPredBall{d}{v}{G}$,
\[
 \dist_{G\cup H}(u,v)\leq d/\log^2 n.
\]
When $T^*/\Delta_{\mathrm{out}}<\Delta_{\max}$, the total work is 
$\widetilde O\left(\left(\frac{T^*}{\Delta_{\mathrm{out}} d}\right)^{\omega-1}\frac{T^*}{\Delta_{\mathrm{out}}}+m\right)$, and the output has size at most the first term. Otherwise, the work bound is $\widetilde O\left(\left(\frac{\Delta_{\max}}{d}\right)^{\omega-1}\frac{T^*}{\Delta_{\mathrm{out}}}+\frac{\Delta_{\max}}{\Delta_{\mathrm{out}}}\cdot m\right)$, and the output has size at most the first term. The depth in both cases is $\widetilde O(M)$.
\end{theorem}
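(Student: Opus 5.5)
The plan mirrors $\TCLargeDeg$ (\Cref{thm:TC-large-deg}), with full incoming closures replaced by the bounded-radius shortcuts of \Cref{lem:parallel-in-ball}.

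\textbf{Algorithm.} First, apply $\TCHittingSetd$ (\Cref{cl:computeTC-hit-set-d}) with hitting radius $r=M-d$, degree radius $M$, and threshold $\Delta_{\mathrm{out}}$. This yields a set $S$ that hits every $(M-d)$-outgoing neighborhood of size at least $\Delta_{\mathrm{out}}$ and satisfies
\[
\sum_{s\in S}\TCDeg(s,M,G)=\widetilde O\bigl(T_M(G)/\Delta_{\mathrm{out}}\bigr)\leq\widetilde O\bigl(T^*/\Delta_{\mathrm{out}}\bigr).
\]
Second, run $\ShortcutInBalls(G,s,d,M)$ for every $s\in S$ in parallel. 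Output $H=\bigcup_{s\in S}H_s$ and $V'=\bigcup_{s\in S}Q'_s$, where $Q'_s=\InPredBall{M-d}{s}{G}$.

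\textbf{Correctness.} Let $v$ satisfy $\TCOutDeg(v,M-d,G)\geq\Delta_{\mathrm{out}}$. Then some $s\in S\cap\OutSuccBall{M-d}{v}{G}$ exists, so $\dist_G(v,s)\leq M-d$, i.e.\ $v\in Q'_s\subseteq V'$. Conversely, every $v\in V'$ lies in some $Q'_s$. \Cref{lem:parallel-in-ball} then gives $\dist_{G\cup H_s}(u,v)\leq d/\log^2 n$ for every $u\in\InPredBall{d}{v}{G}$, and $H_s\subseteq H$. The hitting-set and LDD guarantees hold with high probability, so a union bound over the polynomially many events finishes correctness.

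\textbf{Work.} Write $n_s=\TCInDeg(s,M,G)\leq\Delta_{\max}$ and $\Sigma=\sum_{s\in S}n_s=\widetilde O(T^*/\Delta_{\mathrm{out}})$. The shortcut-computation term is
\[
\sum_{s\in S}\frac{n_s^{\omega}}{d^{\omega-1}}\leq\frac{\min\{\Sigma,\Delta_{\max}\}^{\omega-1}\,\Sigma}{d^{\omega-1}}.
\]
In the case $T^*/\Delta_{\mathrm{out}}<\Delta_{\max}$, the minimum is at most $\widetilde O(T^*/\Delta_{\mathrm{out}})$, which gives the first bound. Otherwise the minimum is at most $\Delta_{\max}$, which gives the second. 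The bound on $|H|$ follows the same way from the size guarantee of \Cref{lem:parallel-in-ball}.

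The delicate part is the edge term $\sum_{s\in S}|E(G[\InPredBall{M}{s}{G}])|$.
\begin{itemize}
\item A simple induced graph on $n_s$ vertices has at most $n_s^2$ edges. I would bound $n_s^2\leq n_s\cdot\min\{\Sigma,\Delta_{\max}\}$.
\item In the first case, $\Sigma\,\min\{\Sigma,\Delta_{\max}\}\leq\Sigma^2\leq(T^*/\Delta_{\mathrm{out}})^{\omega-1}\cdot T^*/\Delta_{\mathrm{out}}\cdot d^{-(\omega-1)}\cdot d^{\omega-1}$. This carries a stray $d$ factor, so the $n_s^2$ bound is too weak here. Instead, I would bound the edges by $m$ per ball, using $|S|=\widetilde O(1)$ in this regime. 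That holds because each $s$ contributes $n_s\geq1$ to $\Sigma$, and the sampling probability is $p=\widetilde O(1/\Delta_{\mathrm{out}})$, so $|S|=\widetilde O(n/\Delta_{\mathrm{out}})$.
\item More cleanly, I would charge edges via the sampling itself. An edge $(x,y)$ lies in $G[\InPredBall{M}{s}{G}]$ only if $s\in\OutSuccBall{M}{y}{G}$. So the expected count is $p\sum_{(x,y)\in E}\TCOutDeg(y,M,G)\leq p\,m\,\Delta_{\max}$, which yields exactly the $(\Delta_{\max}/\Delta_{\mathrm{out}})\,m$ term of the second case.
\item In the first case, the expected count is $\min\{p\,m\,\Delta_{\max},\sum_s n_s^2\}$. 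I would argue that $\sum_s n_s^2\leq\Sigma^2$ is absorbed into $\widetilde O\bigl((\Sigma/d)^{\omega-1}\Sigma+m\bigr)$ when $d\leq\Sigma$. When $d>\Sigma$, each ball has fewer than $d$ vertices and the edge count is simply $O(\Sigma^2)$, which I expect to be dominated using $m$ and $n_s\leq\Sigma$.
\end{itemize}
This bookkeeping, together with Markov plus candidate selection (as in \Cref{cl:computeTC-hit-set-d}) so that the chosen candidate also meets the edge-count bound, is the main obstacle.

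\textbf{Depth.} Hitting-set computation, ball computation, and \Cref{lem:parallel-in-ball} each take $\widetilde O(M)$ depth, and all $s\in S$ run in parallel, so the total depth is $\widetilde O(M)$.
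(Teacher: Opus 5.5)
Your correctness argument, the bound on the $n_s^{\omega}/d^{\omega-1}$ term, the output-size bound, the depth, and the edge charging in the second case all agree with the paper. The gap is the first case, $T^*/\Delta_{\mathrm{out}}<\Delta_{\max}$. There your algorithm still makes one $\ShortcutInBalls$ call per $s\in S$, and each call must at least read $G[\InPredBall{M}{s}{G}]$. The sum of these edge counts charges an edge once for every sampled $s$ whose ball contains it, so it is not bounded by the first-case expression. Both fixes you sketch are false. First, $|S|=\widetilde O(n/\Delta_{\mathrm{out}})$ is not $\widetilde O(1)$. Second, for $\omega=2$, absorbing $\Sigma^2$ into $\widetilde O(\Sigma^2/d+m)$ would require $\Sigma^2=\widetilde O(m)$.

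For a concrete instance, build $G$ as follows. Let $X,Y$ be sets of $a$ vertices with all $a^2$ edges from $X$ to $Y$. Add an edge from every vertex of $Y$ to a vertex $c$, and from $c$ to each of $k=a^{3/2}$ sinks $Z$. Take $d=a$, $M\geq d+3$, $T^*=T_M(G)=(4+o(1))ak$, $\Delta_{\max}=k+2a+2$, and $\Delta_{\mathrm{out}}=5a$. Then $T^*/\Delta_{\mathrm{out}}<\Delta_{\max}$ and $m=\Theta(a^2)$, so the first-case bound is $\widetilde O(a^{1+\omega/2}+a^2)$. This is polynomially below $a^{5/2}$ for every $\omega<3$. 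However, $S$ contains $\widetilde\Theta(k/a)=\widetilde\Theta(\sqrt a)$ vertices of $Z$, and the $M$-in-ball of each contains all of $X\times Y$. So your calls read $\widetilde\Omega(a^{5/2})$ edges, and no reanalysis can help. This is not a corner case: it is exactly the situation in the first iteration of the diameter-halving phase, where $\Delta_{\max}=2n$ can far exceed $\sqrt{T^*}$.

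The missing idea is to merge all the balls into one before calling $\ShortcutInBalls$. In the first case, the paper adds a dummy vertex $w$ with edges $(s,w)$ for all $s\in S$ and makes a single call on $\langle G',w,M+1,d\rangle$. Now $\TCInDeg(w,M+1,G')\leq|S|+\sum_{s\in S}\TCInDeg(s,M,G)=\widetilde O(T^*/\Delta_{\mathrm{out}})$. The processed induced subgraph has at most $m+|S|$ edges, which gives exactly the first-case bound.

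Correctness is preserved. A vertex $v$ with $\dist_G(v,s)\leq M-d$ satisfies $\dist_{G'}(v,w)\leq (M+1)-d$, so $v\in V_w$. Because $w$ is a sink, $H_w\cap(V\times V)\subseteq TC(G)$, and the short $u$-to-$v$ paths guaranteed by the lemma never pass through $w$.

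Your per-$s$ algorithm is exactly the paper's second case. There, however, do not rely on Markov's inequality plus candidate selection, since you have no cheap estimator of the per-candidate edge counts. Use a Chernoff bound instead: with high probability, $|S_j\cap\OutSuccBall{M}{x}{G}|=\widetilde O(\Delta_{\max}/\Delta_{\mathrm{out}})$ for every vertex $x$ and every candidate $S_j$. Then whichever candidate is selected satisfies the edge bound.
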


\paragraph{Algorithm $\ShortcutLargeDeg$.}
First apply \Cref{cl:computeTC-hit-set-d} with inner radius $r=M-d$, outer radius $M$, and threshold $\Delta_{\mathrm{out}}$. Thus $S$ hits every $(M-d)$-incoming and $(M-d)$-outgoing neighborhood of size at least $\Delta_{\mathrm{out}}$, while its total $M$-degree is $\widetilde O(T^*/\Delta_{\mathrm{out}})$. We then distinguish two cases.

\noindent \textbf{Case $T^*/\Delta_{\mathrm{out}}<\Delta_{\max}$.}
Add a dummy sink $w$ and an edge $(s,w)$ for every $s\in S$, obtaining
\[
 G'=(V\cup\{w\},E\cup\{(s,w):s\in S\}).
\]
Apply $\ShortcutInBalls$ to $\langle G',w,M+1,d\rangle$, and let $(H_w,V_w)$ be its output. Return
\[
 H=H_w\cap(V\times V)
 \qquad\text{and}\qquad
 V'=V_w\cap V.
\]

\noindent \textbf{Otherwise.}
For every $s\in S$, apply $\ShortcutInBalls$ to $\langle G,s,M,d\rangle$ in parallel, and let $(H_s,V_s)$ be its output. Return $H=\bigcup_{s\in S}H_s$ and $V'=\bigcup_{s\in S}V_s$.

\noindent \textbf{Correctness.}
In the first case, every edge of $H_w$ belongs to $TC(G')$. If both endpoints are in $V$, a path in $G'$ realizing this edge cannot use $w$, because $w$ is a sink. Hence $H=H_w\cap(V\times V)\subseteq TC(G)$. In the second case, every $H_s$ is returned as a shortcut for an induced subgraph of $G$, and therefore $H=\bigcup_{s\in S}H_s\subseteq TC(G)$.

Fix $v$ with $\TCOutDeg(v,M-d,G)\geq\Delta_{\mathrm{out}}$. By \Cref{cl:computeTC-hit-set-d}, some $s\in S$ lies in $\OutSuccBall{M-d}{v}{G}$, and hence $v\in\InPredBall{M-d}{s}{G}$.

In the first case, $\dist_{G'}(v,w)\leq M-d+1=(M+1)-d$, so $v\in V_w$. For every $u\in\InPredBall{d}{v}{G}$, \Cref{lem:parallel-in-ball} gives $\dist_{G'\cup H_w}(u,v)\leq d/\log^2 n$. Since $w$ is a sink, this $u$-to-$v$ path does not use $w$, and therefore it remains in $G\cup H$. In the second case, $v\in V_s$, and the same conclusion follows directly from the application centered at $s$.

\noindent \textbf{Work and Depth.}
Computing $S$ uses near-linear work and $\widetilde O(M)$ depth. In the first case,
\[
 \TCInDeg(w,M+1,G')
 \leq |S|+\sum_{s\in S}\TCInDeg(s,M,G)
 =\widetilde O(T^*/\Delta_{\mathrm{out}}).
\]
Moreover, $G'[\InPredBall{M+1}{w}{G'}]$ has at most $m+|S|$ edges. Thus, \Cref{lem:parallel-in-ball} gives
\[
 \widetilde O\!\left(
 \left(\frac{T^*}{\Delta_{\mathrm{out}}d}\right)^{\omega-1}
 \frac{T^*}{\Delta_{\mathrm{out}}}+m\right)
\]
work and $\widetilde O(M)$ depth.
The corresponding output-size bound follows from the last assertion of \Cref{lem:parallel-in-ball}.

Now suppose $T^*/\Delta_{\mathrm{out}}\geq\Delta_{\max}$. We first bound the total size of the induced subgraphs processed by the parallel calls:
\[
 \sum_{s\in S}\left|E\!\left(G[\InPredBall{M}{s}{G}]\right)\right|
 \leq \widetilde O\!\left(\frac{\Delta_{\max}}{\Delta_{\mathrm{out}}}\cdot m\right).
\]
Indeed, if an edge $(x,y)$ belongs to $G[\InPredBall{M}{s}{G}]$, then $s\in\OutSuccBall{M}{x}{G}$. The latter set has size at most $\Delta_{\max}$. The construction in \Cref{cl:computeTC-hit-set-d} samples each candidate set at rate $\widetilde O(1/\Delta_{\mathrm{out}})$; a Chernoff bound and a union bound over all vertices and all candidates show that the selected set $S$ satisfies
\[
 |S\cap\OutSuccBall{M}{x}{G}|=\widetilde O(\Delta_{\max}/\Delta_{\mathrm{out}})
\]
simultaneously for every $x$.

Since every $M$-indegree is at most $\Delta_{\max}$, \Cref{lem:parallel-in-ball} and \Cref{cl:computeTC-hit-set-d} give total work
\begin{align*}
 &\frac{1}{d^{\omega-1}}\sum_{s\in S}\TCInDeg(s,M,G)^{\omega}
 +\sum_{s\in S}\left|E\!\left(G[\InPredBall{M}{s}{G}]\right)\right|\\
 &\qquad\leq
 \widetilde O\!\left(
 \left(\frac{\Delta_{\max}}{d}\right)^{\omega-1}\frac{T^*}{\Delta_{\mathrm{out}}}
 +\frac{\Delta_{\max}}{\Delta_{\mathrm{out}}}\cdot m\right).
\end{align*}
The same calculation without the induced-subgraph edge term bounds $|H|$ by the first term. 
All calls run in parallel, so the depth is $\widetilde O(M)$.

\paragraph{A Diameter-Halving Phase.}
We now give a procedure that shortens every shortest path of length $d$ to length at most $d/2$. Let $L=\lceil\log(16n/d)\rceil$ and set $M_1=(L+2)d$. Use \Cref{lem:estimate-up-to-d}, scaling its estimate by a constant if necessary, to compute an upper estimate $T^*$ satisfying $T_{M_1}(G)\leq T^*\leq2T_{M_1}(G)$. If $G$ contains a shortest path of length $d$, then the ordered pairs along this path imply $T^*\geq(d+1)(d+2)>d^2$. Consequently, if $T^*<d^2$, the algorithm returns the empty shortcut. In the remainder, assume $T^*\geq d^2$ and set
\[
 k=\left\lceil\log\left(\frac{8\sqrt{T^*}}{d}\right)\right\rceil,
 \qquad \Delta_0=n,
 \qquad \Delta_i=\left\lfloor\frac{\sqrt{T^*}}{2^i}\right\rfloor\quad (i\geq1).
\]
Since $T^*\leq4n^2$, we have $k\leq L$, and hence $M_{k+1}\geq2d$ for the sequence $M_{i+1}=M_i-d$. Moreover, $1\leq\Delta_i\leq\Delta_{i-1}$ for every $i\in\{1,\ldots,k\}$. Throughout the algorithm, $G_i$ is an induced subgraph of $G$ and $M_i\leq M_1$, so $T_{M_i}(G_i)\leq T_{M_1}(G)\leq T^*$; the same holds after reversing all edges. The total-degree precondition is established inductively in \Cref{cl:degree-reduc}.

\begin{wrapper}\vspace{-2pt}
\begin{center}\textbf{Algorithm $\LargedShortcut(G,d)$: Diameter-Halving Phase}
\end{center}
\textbf{Input}: An $n$-vertex graph $G=(V,E)$ and an integer $d=\Omega(\log^2 n)$.\\
\textbf{Output}: A shortcut $H$ such that every pair $u,v$ with $\dist_G(u,v)=d$ satisfies $\dist_{G\cup H}(u,v)\leq d/2$.
\begin{enumerate}
\item Compute $M_1$ and $T^*$ as above. If $T^*<d^2$, return the empty shortcut. Otherwise, set $G_1=G$ and initialize $k,\Delta_0,\ldots,\Delta_k$ as above.
\item For $i\in\{1,\ldots,k\}$ do:
\begin{enumerate}
\item $(H_{i,\mathrm{out}},V_{i,\mathrm{out}})=\ShortcutLargeDeg(G_i,M_i,d,2\Delta_{i-1},\Delta_i,T^*)$.
\item $(H'_{i,\mathrm{in}},V_{i,\mathrm{in}})=\ShortcutLargeDeg(G_i^{\mathsf R},M_i,d,2\Delta_{i-1},\Delta_i,T^*)$, and set $H_{i,\mathrm{in}}=(H'_{i,\mathrm{in}})^{\mathsf R}$.
\item $G_{i+1}=G_i[V(G_i)\setminus(V_{i,\mathrm{out}}\cup V_{i,\mathrm{in}})]$.
\item $M_{i+1}=M_i-d$.
\end{enumerate}
\item Return $H=\bigcup_{i=1}^k(H_{i,\mathrm{out}}\cup H_{i,\mathrm{in}})$.
\end{enumerate}
\end{wrapper}

\begin{claim}\label{cl:degree-reduc}
For every $i\in\{1,\ldots,k+1\}$, $\TCDeg(M_i,G_i)\leq2\Delta_{i-1}$.
\end{claim}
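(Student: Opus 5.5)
We argue by induction on $i$. For $i=1$ we have $G_1=G$, and every $M_1$-indegree and $M_1$-outdegree is at most $n$, so $\TCDeg(M_1,G_1)\leq 2n=2\Delta_0$. For the inductive step, assume $\TCDeg(M_i,G_i)\leq 2\Delta_{i-1}$. Then the precondition of \Cref{thm:TC-large-d-deg} holds for both calls in iteration $i$, with $\Delta_{\max}=2\Delta_{i-1}$ and $\Delta_{\mathrm{out}}=\Delta_i\leq 2\Delta_{i-1}$. The same holds for $G_i^{\mathsf R}$, since reversal swaps $M_i$-indegrees and $M_i$-outdegrees and so preserves total $M_i$-degrees. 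We also have $T^*\geq T_{M_i}(G_i)$, as noted before the algorithm. The goal is then to show that every vertex $x\in V(G_{i+1})$ satisfies $\TCOutDeg(x,M_{i+1},G_{i+1})<\Delta_i$ and $\TCInDeg(x,M_{i+1},G_{i+1})<\Delta_i$. Summing the two bounds gives $\TCDeg(M_{i+1},G_{i+1})<2\Delta_i$.

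For the outdegree bound, apply \Cref{thm:TC-large-d-deg} to the first call. Every vertex $v$ with $\TCOutDeg(v,M_i-d,G_i)\geq\Delta_i$ lies in $V_{i,\mathrm{out}}$ and is therefore removed. Thus every surviving $x\in V(G_{i+1})$ has $\TCOutDeg(x,M_{i+1},G_i)<\Delta_i$, because $M_{i+1}=M_i-d$. The second call runs on $G_i^{\mathsf R}$, and the same argument shows that every vertex with $\TCInDeg(v,M_{i+1},G_i)\geq\Delta_i$ lies in $V_{i,\mathrm{in}}$, which gives the indegree bound in $G_i$.

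It remains to pass from $G_i$ to $G_{i+1}$. The key monotonicity fact is that $G_{i+1}$ is an induced subgraph of $G_i$, so distances can only grow: $\dist_{G_{i+1}}(x,y)\geq\dist_{G_i}(x,y)$. Consequently $\OutSuccBall{M_{i+1}}{x}{G_{i+1}}\subseteq\OutSuccBall{M_{i+1}}{x}{G_i}$, and symmetrically for incoming balls. Hence the ball sizes in $G_{i+1}$ are at most the corresponding sizes in $G_i$, which are below $\Delta_i$. This completes the induction.

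The main point needing care is matching the radii. The hitting guarantee of \Cref{thm:TC-large-d-deg} is stated at radius $M-d$, which is exactly why the algorithm decreases $M_i$ by $d$ in each iteration. Vertex deletion itself poses no problem here, unlike the non-monotonicity along paths discussed in the overview, since deleting vertices only shrinks balls.
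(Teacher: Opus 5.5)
Your proof is correct and follows the paper's argument essentially step for step. Both use the base case via $\Delta_0=n$, check that both calls (on $G_i$ and $G_i^{\mathsf R}$) meet the degree precondition with parameter $2\Delta_{i-1}$, apply \Cref{thm:TC-large-d-deg} at radius $M_i-d=M_{i+1}$, and then pass to the induced subgraph $G_{i+1}$. The only difference is that you spell out what the paper leaves implicit: the ball containment $\mathsf{Out}(x,M_{i+1},G_{i+1})\subseteq\mathsf{Out}(x,M_{i+1},G_i)$ and the side hypotheses $\Delta_i\leq 2\Delta_{i-1}$ and $T^*\geq T_{M_i}(G_i)$.
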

\begin{proof}
The claim holds for $i=1$ because every incoming and outgoing neighborhood has size at most $n=\Delta_0$. Suppose it holds for iteration $i$. Since reversal swaps incoming and outgoing degrees, both calls to $\ShortcutLargeDeg$ satisfy the upper-degree precondition with parameter $2\Delta_{i-1}$. By \Cref{thm:TC-large-d-deg}, $V_{i,\mathrm{out}}$ contains every vertex whose $(M_i-d)$-outdegree in $G_i$ is at least $\Delta_i$. Applying the same theorem to $G_i^{\mathsf R}$ shows that $V_{i,\mathrm{in}}$ contains every vertex whose $(M_i-d)$-indegree in $G_i$ is at least $\Delta_i$. Since $M_{i+1}=M_i-d$ and $G_{i+1}$ is an induced subgraph of $G_i$, every vertex remaining in $G_{i+1}$ has both $M_{i+1}$-indegree and $M_{i+1}$-outdegree less than $\Delta_i$, and hence total $M_{i+1}$-degree less than $2\Delta_i$.
\end{proof}

\begin{claim}\label{cl:path-partition}
Let $P$ be a shortest $u$-to-$v$ path of length $d$, and set $C=d/\log^2n+1$. After every $i\in\{0,\ldots,k\}$ iterations, either $\dist_{G\cup H}(u,v)\leq2iC$, or there is a nonempty $u_{i+1}$-to-$v_{i+1}$ subpath $P_{i+1}\subseteq P\cap G_{i+1}$ such that
\[
 \dist_{G\cup H}(u,u_{i+1})+\dist_{G\cup H}(v_{i+1},v)\leq2iC.
\]
\end{claim}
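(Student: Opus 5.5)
We argue by induction on $i$. For $i=0$ we take $P_1=P$, $u_1=u$, $v_1=v$; this is valid because $G_1=G$ and both distances are zero. For the inductive step we assume that after $i-1$ iterations the first alternative does not hold. Then we have a nonempty subpath $P_i=P[u_i,v_i]\subseteq P\cap G_i$ with $\dist_{G\cup H}(u,u_i)+\dist_{G\cup H}(v_i,v)\leq 2(i-1)C$. Since $P$ is a shortest path in $G$ and $G_i$ is an induced subgraph of $G$, $P_i$ is a shortest path in $G_i$. Its length is at most $d$.

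In iteration $i$ we set $X=V_{i,\mathrm{out}}\cap V(P_i)$ and $Y=V_{i,\mathrm{in}}\cap V(P_i)$. If $X\neq\emptyset$, we let $a$ be the last vertex of $P_i$ in $X$. Every vertex of $P_i[u_i,a]$ lies within distance $d$ of $a$ in $G_i$. So \Cref{thm:TC-large-d-deg}, applied to $G_i$ with parameters $M_i,d$, gives $\dist_{G_i\cup H_{i,\mathrm{out}}}(u_i,a)\leq d/\log^2 n$. Let $a'$ be the successor of $a$ on $P_i$, if it exists. Then $\dist_{G\cup H}(u_i,a')\leq C$. Symmetrically, if $Y\neq\emptyset$, we let $b$ be the first vertex of $P_i$ in $Y$. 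Applying the theorem to $G_i^{\mathsf R}$ and reversing $H'_{i,\mathrm{in}}$ gives $\dist_{G\cup H}(b,v_i)\leq d/\log^2 n$, so the predecessor $b'$ of $b$ satisfies $\dist_{G\cup H}(b',v_i)\leq C$. We take $P_{i+1}=P_i[a',b']$, using $u_i$ or $v_i$ as the endpoint when $X$ or $Y$ is empty. The distance bounds then follow from the triangle inequality, with an added $2C\leq 2C$ term: $\dist(u,u_{i+1})\leq\dist(u,u_i)+C$, and similarly on the $v$ side.

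The main obstacle is showing that $P_{i+1}$ avoids $V_{i,\mathrm{out}}\cup V_{i,\mathrm{in}}$, so that it lies in $G_{i+1}$. The danger is a vertex of $X$ appearing after $b$, or a vertex of $Y$ appearing before $a$, or the two sets interleaving. Lemma \ref{lem:close-P} (prefix/suffix structure) is not available in the radius-bounded setting. We avoid it through the choice of $a$ as the \emph{last} $X$-vertex and $b$ as the \emph{first} $Y$-vertex. With this choice, $P_i[a',b']$ contains no $X$-vertex after $a$ and no $Y$-vertex before $b$. Hence, when $a<b$ on $P_i$, $P[a',b']$ is disjoint from $X\cup Y$. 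We still need $a'$ to precede or equal $b'$. Otherwise $b\leq a$ on $P_i$, and then $\dist_{G\cup H}(u_i,a)\leq d/\log^2 n$ and $\dist_{G\cup H}(b,v_i)\leq d/\log^2 n$ together with $b$ preceding $a$ on $P_i$ do not immediately chain. In that case we instead apply the out-guarantee to the pair $(u_i,a)$ and the in-guarantee to the pair $(b,v_i)$ with $b$ replaced by $a$. Concretely, we take $u_i\to a$ via the out-shortcut, then one edge along $P$ to $a'$, then the in-shortcut from $a'$ to $v_i$, which is valid whenever $a'\in Y$ or lies before some $Y$-vertex. This gives $\dist(u_i,v_i)\leq 2C$, so $\dist_{G\cup H}(u,v)\leq 2iC$, the first alternative. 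The same argument covers an empty $P_{i+1}$: if $P_{i+1}$ would be empty, then $u_i$ reaches $v_i$ through at most one path edge plus two shortcut hops, and the first alternative holds. We also need the small check that $P_i$ has length at most $d$, so that $u_i\in\InPredBall{d}{a}{G_i}$ and $v_i\in\OutSuccBall{d}{b}{G_i}$, both in the same graph $G_i$ in which \Cref{thm:TC-large-d-deg} was applied. This holds because $P_i$ is a subpath of $P$, which has length $d$.
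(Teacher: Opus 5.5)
\textbf{There is a genuine gap in the interleaving case.} You choose $b$ as the first vertex of $Y=V_{i,\mathrm{in}}\cap V(P_i)$ on all of $P_i$. When $b$ precedes or equals $a$, you claim the first alternative by routing $u_i\to a\to a'\to v_i$, with an ``in-shortcut from $a'$ to $v_i$, valid whenever $a'\in Y$ or lies before some $Y$-vertex.''

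\Cref{thm:TC-large-d-deg}, applied to $G_i^{\mathsf R}$, only shortens paths that \emph{start} at a vertex of $V_{i,\mathrm{in}}$. A vertex lying before a $Y$-vertex gets no shortcut to $v_i$. If no vertex of $Y$ lies at or after $a'$, nothing shortens $P_i[a',v_i]$ at all. For a concrete case, let $b$ be in the middle of $P_i$, let $a$ be the successor of $b$, and let no other vertex of $P_i$ lie in $X\cup Y$. The guarantees then give short $u_i$-to-$a$ and $b$-to-$v_i$ connections, but nothing for $u_i$-to-$b$ or $a$-to-$v_i$. So $\dist_{G\cup H}(u_i,v_i)\leq 2C$ does not follow. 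The correct outcome here is the second alternative with $P_{i+1}=P_i[a',v_i]$, which your case split never produces.

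\textbf{The fix is the paper's choice of $b$.} Take $b$ to be the first vertex of $V_{i,\mathrm{in}}$ on the suffix $P_i[a',v_i]$ (or on $P_i$ if $X=\emptyset$), not on all of $P_i$. Then the interleaving case cannot occur:
\begin{itemize}
\item Vertices of $Y$ before $a'$ are discarded together with the shortcut prefix, so they never need to be excluded from $P_{i+1}$.
\item The in-guarantee still applies to $b$, because $P_i[b,v_i]\subseteq G_i$ has length at most $d$.
\item $P_i[a',b']$ avoids $X$ because it lies after the last $X$-vertex, and avoids $Y$ because it lies before the first $Y$-vertex of the suffix.
\end{itemize}
The degenerate cases $a=v_i$, $b=a'$, and ($X=\emptyset$, $b=u_i$) give the first alternative directly. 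The remaining steps of your proposal go through unchanged: the base case, the check that subpaths of $P$ have length at most $d$ in $G_i$, the last-$X$-vertex choice of $a$, and the $2C$ accounting.
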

\begin{proof}
The claim is immediate for $i=0$, with $P_1=P$. Suppose a residual path $P_i\subseteq G_i$ exists before iteration $i$.

If $P_i$ contains a vertex of $V_{i,\mathrm{out}}$, let $a$ be the last such vertex. The subpath from $u_i$ to $a$ lies in $G_i$ and has length at most $d$, because it is contained in the original length-$d$ path $P$. Hence $u_i\in\InPredBall{d}{a}{G_i}$, and \Cref{thm:TC-large-d-deg} shortens this prefix to at most $d/\log^2n$ edges. If $a=v_i$, combining this shortcut with the prefix and suffix accumulated in earlier iterations gives the first outcome of the claim. Otherwise, let $\bar u$ be the successor of $a$ on $P_i$; reaching $\bar u$ costs at most $C$ additional edges. If $P_i$ contains no vertex of $V_{i,\mathrm{out}}$, set $\bar u=u_i$ and incur no additional cost.

Now consider the suffix of $P_i$ from $\bar u$ to $v_i$. If it contains a vertex of $V_{i,\mathrm{in}}$, let $b$ be the first such vertex. This suffix also has length at most $d$. Applying \Cref{thm:TC-large-d-deg} to $G_i^{\mathsf R}$ and reversing the resulting shortcut therefore shows that $b$ reaches $v_i$ in at most $d/\log^2n$ edges. If $b=\bar u$, combining the new prefix and suffix shortcuts with those accumulated earlier gives the first outcome of the claim. Otherwise, let $\bar v$ be the predecessor of $b$ on $P_i$; the suffix from $\bar v$ to $v_i$ then costs at most $C$ edges. If no such $b$ exists, set $\bar v=v_i$ and incur no additional cost.

In the remaining case, define $P_{i+1}=P_i[\bar u,\bar v]$. By the choices of $a$ and $b$, this subpath contains no vertex of $V_{i,\mathrm{out}}\cup V_{i,\mathrm{in}}$, so $P_{i+1}\subseteq G_{i+1}$. The accumulated prefix and suffix cost increases by at most $2C$, proving the induction step.
\end{proof}

We now prove the phase guarantee. If the first outcome of \Cref{cl:path-partition} occurs, then $\dist_{G\cup H}(u,v)\leq2kC$. Otherwise, there is a residual path $P_{k+1}\subseteq G_{k+1}$. By \Cref{cl:degree-reduc}, $\TCDeg(M_{k+1},G_{k+1})\leq2\Delta_k\leq d/4$. Since $M_{k+1}\geq d$ and $P_{k+1}$ is a simple path of length at most $d$, the $M_{k+1}$-outgoing neighborhood of its first vertex contains all $|P_{k+1}|+1$ vertices of the path. Hence $|P_{k+1}|<2\Delta_k\leq d/4$. Moreover, $k=O(\log n)$ and $d=\Omega(\log^2n)$, so $2kC=O(d/\log n)$. Thus, by choosing the constant implicit in $d=\Omega(\log^2n)$ sufficiently large,
\[
 \dist_{G\cup H}(u,v)\leq2kC+d/4\leq d/2.
\]

\paragraph{Work and Depth of a Diameter-Halving Phase.}
The estimate $T^*$ is computed with near-linear work and $\widetilde O(d)$ depth because $M_1=\widetilde O(d)$. In the first iteration, either case of \Cref{thm:TC-large-d-deg}, with upper-degree parameter $2\Delta_0=2n$, gives
\[
 \widetilde O\!\left(\frac{(T^*)^{\omega/2}}{d^{\omega-1}}+m\right)
\]
work: if its second case applies, then $n=O(\sqrt{T^*})$. For every $i\geq2$, we have $\Delta_i=\Theta(\Delta_{i-1})$ and $T^*/\Delta_i\geq2\Delta_{i-1}$. Hence the second case of \Cref{thm:TC-large-d-deg}, with upper-degree parameter $2\Delta_{i-1}$, gives
\[
 \widetilde O\!\left(
 \left(\frac{2\Delta_{i-1}}d\right)^{\omega-1}\frac{T^*}{\Delta_i}
 +\frac{2\Delta_{i-1}}{\Delta_i}\cdot m\right)
 =\widetilde O\!\left(\frac{(T^*)^{\omega/2}}{d^{\omega-1}}+m\right).
\]
The two calls in each iteration run in parallel. Since $k=O(\log n)$ and $M_i=\widetilde O(d)$, one phase uses $\widetilde O\!\left(\frac{(T^*)^{\omega/2}}{d^{\omega-1}}+m\right)$ work and $\widetilde O(d)$ depth. The output-size guarantees in \Cref{thm:TC-large-d-deg} also show that one phase adds at most $\widetilde O\!\left(\frac{(T^*)^{\omega/2}}{d^{\omega-1}}\right)$ edges.

\paragraph{Completing the Proof of \Cref{thm:d-largeshortcutparallel}.}
Run the phase repeatedly on the current augmented graph. All added edges belong to the original transitive closure, so the value of $T=|TC(G)|$ does not change. In every phase, $T_{M_1}(G)=O(T)$, and hence the estimate used by that phase satisfies $T^*=O(T)$. Consider a shortest path of length $q d+r$, where $0\leq r<d$. Partition it into $q$ subpaths of length $d$ and one residual subpath of length $r$. Each length-$d$ subpath is itself shortest, so after one phase the entire path has length at most $qd/2+r$. Thus, distances of at least $2d$ decrease by a constant factor, while a distance in $(d,2d)$ decreases by at least $d/2$. After $O(\log n)$ phases, every reachable pair is therefore at distance at most $d$.

The edge term in a phase is the number of edges in the current augmented graph. By the output-size bound above, after $j$ phases this number is at most $m+\widetilde O\!\left(\frac{jT^{\omega/2}}{d^{\omega-1}}\right)$. Therefore, summing these edge terms over $O(\log n)$ phases incurs only another polylogarithmic factor. This factor, as well as the number of phases, is absorbed by $\widetilde O(\cdot)$, yielding total work
\[
 \widetilde O\!\left(\frac{T^{\omega/2}}{d^{\omega-1}}+m\right)
\]
and depth $\widetilde O(d)$. By choosing the failure probabilities in the preceding randomized subroutines to be $n^{-c}$ for a sufficiently large constant $c$, a union bound makes all their guarantees hold simultaneously over all iterations and phases.

\section{$3$-Shortcuts}\label{sec:3-shortcut}

In this section we prove the $d=3$ case of \Cref{thm:dshortcut} by providing a $3$-shortcut construction with running time $\tilde{O}(T^{5/4})$ when $\omega=2$.

\begin{lemma}\label{lem:shortcut-two-len-paths}
Given a graph $G$ and vertex sets $A,B$ with $|B|\geq|A|$, algorithm $\ShortcutTwoPaths$ runs in $O(|A|^{\omega-1}|B|)$ time and returns a set $H\subseteq TC(G)$ such that $(a,b)\in H$ whenever $a\in A$, $b\in B$, and $G$ contains a path $a\to c\to b$ with $c\in A$.
\end{lemma}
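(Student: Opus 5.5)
The natural construction is a single rectangular Boolean matrix product. Let $M_{AA}$ be the $|A|\times|A|$ Boolean adjacency matrix of $G$ restricted to $A\times A$, with $(M_{AA})_{a,c}=1$ iff $(a,c)\in E(G)$. Let $M_{AB}$ be the $|A|\times|B|$ matrix with $(M_{AB})_{c,b}=1$ iff $(c,b)\in E(G)$. Compute the Boolean product $N=M_{AA}\cdot M_{AB}$ and output
\[
H=\{(a,b)\in A\times B \mid N_{a,b}=1\}.
\]

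Correctness is immediate. $N_{a,b}=1$ holds iff some $c\in A$ has $(a,c)\in E$ and $(c,b)\in E$. This is exactly the condition that $G$ contains a path $a\to c\to b$ with $c\in A$. Every output pair is therefore realized by a length-$2$ path, so $H\subseteq TC(G)$. Conversely, every pair required by the statement is included. If one wants, edges $(a,b)\in E$ with $a\in A,b\in B$ can also be added, but the lemma does not require it.

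For the running time, assume without loss of generality that $A,B\subseteq V$ are given as index lists. Building the two matrices takes $O(|A|^2+|A||B|)=O(|A||B|)$ time, provided adjacency can be tested in constant time. If it cannot, the matrices can be filled by scanning the out-edges of each $c\in A$ and marking membership of heads in $A$ and $B$ via lookup arrays. This step is a minor point to handle carefully, so that it does not incur a $\deg$-dependent cost beyond what the surrounding algorithm already pays. For the product, split $M_{AB}$ into $\lceil |B|/|A|\rceil$ column blocks, each of size $|A|\times|A|$, padding the last block. Each block product is a square $|A|\times|A|$ Boolean product, costing $O(|A|^\omega)$ time. This can be done via integer matrix multiplication followed by thresholding, or by \Cref{thm:parrecmat} in parallel. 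The total cost is
\[
O\!\left(\left\lceil \frac{|B|}{|A|}\right\rceil |A|^{\omega}\right)=O(|A|^{\omega-1}|B|),
\]
using $|B|\geq|A|$ to absorb the ceiling. Reading off $H$ from $N$ costs $O(|A||B|)$, which is dominated by the product cost.

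The only step needing care is the input-reading step. Scanning all out-edges of vertices in $A$ could exceed $|A||B|$ if those vertices have many out-neighbors outside $A\cup B$. I would handle this in the standard way: assume an adjacency-matrix or hash representation, so that $M_{AA}$ and $M_{AB}$ are filled entrywise in $O(|A|^2+|A||B|)$ time. Alternatively, I would note that in the intended application the edge scan is already charged elsewhere, for example as the additive $m\leq T$ term. The block decomposition of the rectangular product is routine.
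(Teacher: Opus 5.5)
Your proposal is correct and matches the paper's proof: the same Boolean product of the $A\times A$ and $A\times B$ adjacency matrices, the same split into $\lceil |B|/|A|\rceil$ square blocks giving $O(|A|^{\omega-1}|B|)$, and the same iff-argument for both required containment and $H\subseteq TC(G)$. Your extra remarks on the cost of building the two matrices cover a point the paper leaves implicit, and they do not change the argument.
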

\begin{proof}
Let $M_A\in\{0,1\}^{A\times A}$ and $M_B\in\{0,1\}^{A\times B}$ be defined by $M_A(a,c)=1$ iff $(a,c)\in E(G)$ and $M_B(c,b)=1$ iff $(c,b)\in E(G)$. Compute their Boolean product $M_C=M_AM_B$ and return $H=\{(a,b)\in A\times B\mid M_C(a,b)=1\}$. Partitioning the columns of $M_B$ into blocks of size $|A|$ gives running time $O(|A|^{\omega-1}|B|)$. Moreover, $M_C(a,b)=1$ exactly when some $c\in A$ satisfies $(a,c),(c,b)\in E(G)$. Hence every required edge is added, and every edge of $H$ belongs to $TC(G)$.
\end{proof}

%
%
%
%
%

\begin{mdframed}[hidealllines=true, backgroundcolor=gray!20, leftmargin=0cm,innerleftmargin=0.5cm,innerrightmargin=0.5cm,innertopmargin=0.5cm,innerbottommargin=0.5cm,roundcorner=10pt]
\vspace{-2pt}
\begin{center}\textbf{Algorithm $\ThreeShortcut(G)$}
\end{center}

\paragraph{1. Handling vertices with TC-degrees $O(T^{1/4})$.}

\begin{enumerate}
\item Compute the constant-factor upper estimate $T$ of $|TC(G)|$ and the simultaneous upper estimates $\{\hat{d}_{in}(v),\hat{d}_{out}(v)\}_{v}$ described after \Cref{thm:size-estimation}.
\item Let $V_{low}=\{v \in V ~\mid~ \hat{d}_{in}(v)+\hat{d}_{out}(v)\leq T^{1/4}\}$.
\item Compute $\{\OutSucc{G}{v},\InPred{G}{v}\}_{v \in V_{low}}$.
\item Let $H^{in}_{low}=\{(u,v) ~\mid~ v \in V_{low}, u\in \InPred{G}{v}\}$ and $H^{out}_{low}=\{(v,u) ~\mid~ v \in V_{low},u\in \OutSucc{G}{v}\}$. 
\item $H_{low}=H^{in}_{low} \cup H^{out}_{low}$.
\end{enumerate}

\paragraph{2. Handling the remaining $O(T^{3/4})$ vertices.}

\begin{enumerate}
\item $V_{high}=V \setminus V_{low}$ and $G_{2}=G[V \setminus V_{low}]$.
\item Set $\DeltaInInd_2=T$, $\DeltaOutInd_2=T^{1/2}$, $\DeltaInInd_1=T^{1/2}$, $\DeltaOutInd_1=T^{1/4}$, and $\DeltaInInd_0=T^{1/4}$.
\item For $i \in \{2,1\}$ do:
\begin{enumerate}
\item $TC_i=\TCLargeDeg(G_i, \DeltaInInd_{i}, \DeltaOutInd_{i})$ and $V_i=V(TC_i)$.
\item $G_{i-1}=(G_i\setminus V(TC_i))^{\mathsf{R}}$.
\end{enumerate}
\item $TC_0=\TCNaiveComb(G_{0},\DeltaInInd_{0})$.
\item For every $v \in V(G_1)$: $E_{low}(v)=\{(v,w) ~\mid~ \exists u \in N_{out}(v,G_1)\cap V(G_0) \mbox{~and~} (w,u) \in TC_0\}$. 
\item $E_{0 \to 1}=\bigcup_{v \in V(G_1)} E^{R}_{low}(v)$.
\item $E_{2 \to 0}=\ShortcutTwoPaths(G \cup TC_2,V_2,V_{high})$.
\item Return $H=H_{low}\cup TC_2 \cup TC^R_1 \cup TC_0 \cup E_{0 \to 1} \cup E_{2 \to 0}$.
\end{enumerate}
\end{mdframed}

\noindent\textbf{Correctness.} First observe that $H \subseteq TC(G)$. This is immediate for $H_{low}$, $TC_2$, $TC_1^{\mathsf R}$, $TC_0$, and $E_{2\to0}$. Moreover, every $(v,w)\in E_{low}(v)$ satisfies $(v,u)\in E(G_1)$ and $(w,u)\in TC_0$ for some $u\in V(G_0)$. Since $G_0\subseteq G_1^{\mathsf R}$, we have $(u,w)\in TC(G_1)$, and hence $(v,w)\in TC(G_1)$. Reversing these edges shows that $E_{0\to1}\subseteq TC(G)$. We therefore focus on the diameter argument. Let $V_i=V(TC_i)$ for $i \in \{0,1,2\}$. Consider a $u$-$v$ path $P$ in $G$. Assume first that $V(P)\cap V_{low}\neq \emptyset$ and let $w \in V(P)\cap V_{low}$. It then holds that $(u,w) \in H^{in}_{low}$ and $(w,v)\in H^{out}_{low}$ and thus $\dist_{G \cup H}(u,v)\leq 2$. From now on, assume that $V(P)\subseteq V \setminus V_{low}$ and thus $P \subseteq G_2$.

Apply \Cref{lem:close-P} first to $P$ and $TC_2$. The residual $P\setminus V_2$ is a suffix of $P$, whose reversal is a path in $G_1$. Applying \Cref{lem:close-P} again, now to this reversed path and $TC_1$, shows that deleting $V_1$ leaves a suffix in the reversed orientation. Returning to the original orientation, the vertices of $P$ therefore occur in the order $V_2,V_0,V_1$, with any of the three blocks possibly empty; here $V_0=V(G_0)$ because $TC_0=TC(G_0)$ and reachability is reflexive.

By \Cref{thm:TC-large-deg}, $TC_i=TC(G_i)[V_i]$ for $i\in\{1,2\}$. Thus, after accounting for the reversal defining $G_1$, the endpoints of every nonempty block in $V_2,V_0,V_1$ are joined by an edge of $TC_2,TC_0,TC_1^{\mathsf R}$, respectively.

Assume first that $V_i \cap P=\emptyset$ for some $i \in \{0,1,2\}$. Then $P$ decomposes into at most two nonempty blocks $P_1,\ldots,P_r$, where $r\leq2$ and each block lies in one $V_{i_j}$. If $i_j$ is even, its endpoints are joined by an edge of $TC_{i_j}$; otherwise they are joined by an edge of $TC^R_{i_j}$. Including the at most one connecting edge gives $\dist_{G \cup H}(u,v)\leq2r-1\leq3$. 

We next consider the case where 
$P=P_1 \circ (v_1,u_2) \circ P_2 \circ (v_2,u_3) \circ P_3$ where each $P_i$ is a $u_i$-$v_i$ path, and $P_1 \subseteq V_2$, $P_3 \subseteq V_1$ and $P_2 \subseteq V_0$. We have that $(u_1,v_1)\in TC_2$, $(u_2,v_2)\in TC_0$ and $(u_3,v_3)\in TC^R_1$. Moreover, $(u_3,v_2)\in E(G_1)$ and $(u_2,v_2)\in TC_0$. Hence $(u_3,u_2)\in E_{low}(u_3)$ and thus $(u_2,u_3)\in E_{0 \to 1}$.  

It remains to show that also $(u_1,u_2)\in E_{2 \to 0}$ which will establish our claim as $(u_1,u_2)\circ (u_2,u_3)\circ (u_3,v_3)\subseteq G \cup H$ and thus $\dist_{G \cup H}(u_1,v_3)\leq3$, as desired. We apply \Cref{lem:shortcut-two-len-paths} with $A=V_2$ and $B=V \setminus V_{low}$; the size condition holds because $V_2\subseteq V\setminus V_{low}$. We have $u_1,v_1 \in A$ and $u_2 \in B$, and since $(u_1,v_1)\in TC_2$, the graph $G \cup TC_2$ contains the $2$-length $u_1$-$u_2$ path $[u_1,v_1,u_2]$.
Therefore $(u_1,u_2)\in E_{2 \to 0}$, and the diameter argument follows. See Fig. \ref{fig:diam3} for an illustration.

\noindent\textbf{Runtime.} By the proof of \Cref{lem:naive-comb-small-deg}, Step 1 of computing the shortcut set $H_{low}$ takes $O(T^{5/4})$ time. Since every vertex $v \in V \setminus V_{low}$ has $\hat{d}_{in}(v)+\hat{d}_{out}(v)=\Omega(T^{1/4})$, it holds that $|V \setminus V_{low}|=\widetilde O(T^{3/4})$. By \Cref{thm:TC-large-deg}, $TC_2$ is computed in $\widetilde O(T^{\omega/2})$ time, whereas $TC_1$ is computed in $\widetilde O(T^{(\omega-1)/2+3/4})$ time. Both bounds are $\widetilde O(T^{5\omega/8})$ for $\omega\geq2$. By \Cref{lem:naive-comb-small-deg}, the computation of $TC_0$ takes $\widetilde O(T^{5/4})$ time. Computing $E_{low}(v)$ takes $|N_{out}(v,G_1)|\cdot T^{1/4}$ time, and therefore computing $E_{0 \to 1}$ takes $O(T^{5/4})$ time.

It remains to bound the computation time of $E_{2 \to 0}$. In the construction of $TC_2$, the balanced hitting-set guarantee gives $|V_2|\leq\sum_{s\in S_2}\TCInDeg(s,G_2)=\widetilde O(T/\DeltaOutInd_2)=\widetilde O(\sqrt T)$. Since $|V_{high}|=\widetilde O(T^{3/4})$, \Cref{lem:shortcut-two-len-paths} computes $E_{2 \to 0}$ in $\widetilde O((\sqrt T)^{\omega-1}T^{3/4})=\widetilde O(T^{(\omega-1)/2+3/4})=\widetilde O(T^{5\omega/8})$ time, where the last inequality uses $\omega\geq2$. This concludes the proof of the $d=3$ case of \Cref{thm:dshortcut}.

\paragraph{Acknowledgements.}
We are grateful to Oded Goldreich for many valuable comments on this draft. We also thank Nick Fischer for helpful explanations concerning the conjectured running time for sparse matrix multiplication in~\cite{AbboudBFK24}.

\bibliographystyle{alpha}
\bibliography{thesis-codex}

\appendix
%
%
%
\section{Proof of \Cref{thm:joint-hardness}}\label{app:joint-hardness}

\begin{proof}
Fix $n\leq T\leq n^2$. We combine three components: a shortcut-hard graph, a conditional transitive-closure-hard graph, and a path that fixes the total closure size.

We first formalize the choice of $\epsilon$. Fix constants $\epsilon_j\downarrow0$. For every $j$, \Cref{thm:hesse} gives a constant $c_j>0$ and, for all sufficiently large $N$, an $N$-vertex graph $A_{N,j}$ such that every shortcut reducing its diameter to at most $j\log N$ has at least $c_jN^{2-\epsilon_j}$ edges; this follows because $j\log N=o(N^{\delta_j})$ for the corresponding constant $\delta_j>0$. Choose increasing thresholds $N_j$ so that these graphs exist for every $N\geq N_j$ and $c_jN^{\epsilon_j}\geq1$. Let $j(N)=\max\{j:N\geq N_j\}$ and set $A_N=A_{N,j(N)}$. Since $j(N)\to\infty$ and $\epsilon_{j(N)}\to0$, every $O(\log N)$-shortcut for $A_N$ has at least
\[
 c_{j(N)}N^{2-\epsilon_{j(N)}}\geq N^{2-2\epsilon_{j(N)}}=N^{2-o(1)}
\]
edges. Let $N_A=\lfloor c_A\sqrt T\rfloor$ and let $A=A_{N_A}$, where $c_A>0$ is a sufficiently small constant. Since $T\geq n$, we have $\log N_A=\Theta(\log n)$. Thus, every $O(\log n)$-shortcut for $A$ has
\[
 N_A^{2-o(1)}=T^{1-o(1)}
\]
edges, while $|TC(A)|=O(T)$.

For the computationally hard component, set
\[
 \tau=c_B\min\{T,n^{3/2}\},
\]
for a sufficiently small constant $c_B>0$. The $r=1$ case of the ABFK lower bound~\cite[Theorem~1.10]{AbboudBFK24}, together with the standard three-layer reduction from sparse Boolean matrix multiplication to transitive closure, gives a three-layer DAG $B$ with
\[
 |V(B)|=O(\tau^{2/3}),
 \qquad |TC(B)|=\Theta(\tau),
\]
such that, assuming the PS-AE-Triangle hypothesis, computing $TC(B)$ requires $\tau^{4/3-o(1)}$ time when $\omega=2$. The three vertex layers have sizes $\tau^{1/3},\tau^{2/3},\tau^{2/3}$, and standard padding ensures input and output size $\Theta(\tau)$. Moreover,
\[
 \tau^{4/3}=\Theta\!\left(\min\{T^{4/3},n^2\}\right)
 =\Theta(\mathsf{B}(n,T)).
\]

Finally, let $F$ be a directed path on $N_F=\lfloor c_F\sqrt T\rfloor$ vertices, for a sufficiently small constant $c_F>0$. Then $|TC(F)|=\Theta(T)$. Take the disjoint union of $A$, $B$, and $F$, add a new vertex $x$, an edge from $x$ to an arbitrary vertex of $A$, and an edge from $x$ to an arbitrary vertex of $B$. Call the resulting graph $G^*$. The two new edges add only $O(|V(A)|+|V(B)|)$ closure pairs. Consequently, $|TC(G^*)|=\Theta(T)$.

Also,
\[
 |V(A)|+|V(B)|+|V(F)|+1
 =O\!\left(\sqrt T+\min\{T^{2/3},n\}\right)\leq n
\]
when $c_A,c_B,c_F$ are chosen sufficiently small. We add isolated vertices to make the number of vertices exactly $n$.

No path with both endpoints in $A$ can leave $A$ and return. Hence the restriction to $A$ of any $O(\log n)$-shortcut for $G^*$ is an $O(\log n)$-shortcut for $A$, and therefore contains $T^{1-o(1)}$ edges.

It remains to establish the computational lower bound. The subgraph of $TC(G^*)$ induced by $V(B)$ is exactly $TC(B)$. Thus, an algorithm computing $TC(G^*)$ in $O(\mathsf{B}(n,T)^{1-\varepsilon})$ time, for any fixed $\varepsilon>0$, would also compute $TC(B)$ within this time by retaining the output pairs in $V(B)\times V(B)$. If this running time is $o(T)$, it already contradicts the time needed to write the $\Theta(T)$ output pairs; otherwise the filtering costs only the claimed running time. Since $\mathsf{B}(n,T)=\Theta(\tau^{4/3})$, either case contradicts the ABFK lower bound. Therefore computing $TC(G^*)$ requires $\mathsf{B}(n,T)^{1-o(1)}$ time under the PS-AE-Triangle hypothesis.
\end{proof}

\section{Figures}

\begin{figure}[h!]
\begin{center}
\includegraphics[width=\linewidth,trim=0 80 0 80,clip]{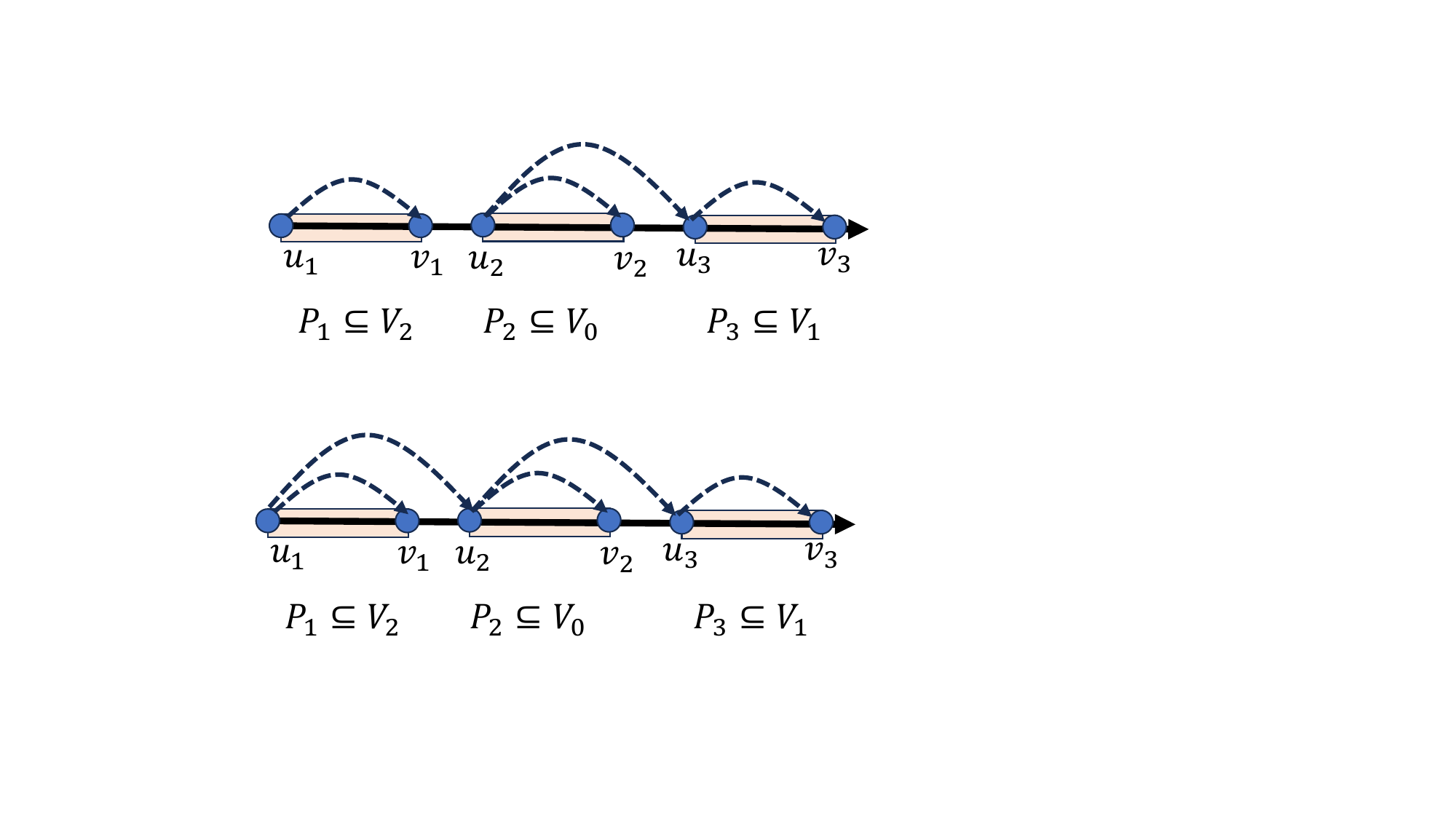}
\caption{\sf Illustration for the diameter $4$ (top) and $3$ (bottom) arguments underlying \Cref{thm:dshortcut}. Solid black edges belong to the original path, shaded boxes are the path blocks, and dashed blue edges are shortcuts; dashed edges that cross a block boundary are the stitching edges.
\label{fig:diam3} 
}
\end{center}
\end{figure}

%
%
%
%
%
%

\end{document}